\definecolor{darkblue}{rgb}{0,0,0.5} 
\newcommand{\okina}{%
  \raisebox{1.25ex}{\rotatebox[origin=c]{180}{,}}%
}
\DeclareSymbolFont{operators}   {OT1}{lmr} {m}{n}
\DeclareSymbolFont{letters}     {OML}{cmm} {m}{it}
\DeclareSymbolFont{symbols}     {OMS}{cmsy}{m}{n}
\theoremstyle{plain}
\newtheorem{prop}{\protect\propositionname}
\theoremstyle{remark}
\newtheorem{corr}{{Corollary}}
\theoremstyle{plain}
\theoremstyle{plain}
\newtheorem*{defn*}{\protect\definitionname}
\newtheorem*{asn*}{Assumption}
\newtheorem*{lem*}{\protect\lemmaname}
\providecommand{\notename}{Note}
\providecommand{\propositionname}{Proposition}
\providecommand{\propositionname}{Lemma}
\providecommand{\definitionname}{Definition}
\providecommand{\lemmaname}{Lemma}
\providecommand{\theoremname}{Theorem}
\newcommand\blfootnote[1]{%
  \begingroup
  \renewcommand\thefootnote{}\footnote{#1}%
  \addtocounter{footnote}{-1}%
  \endgroup
}
\newcommand{\lyxdeleted}[3]{}
\title{Do Distributional Concerns Justify\\Lower Environmental Taxes?}
\author{Ashley C. Craig}
\affil{Australian National University}
\author[2]{Thomas Lloyd}
\affil[2]{University of Michigan}
\author[3]{Dylan T. Moore}
\affil[3]{University of Hawai\okina i at Mānoa}
\date{ \vspace{-0.5em} December 2025 \vspace{-0.2em}}
\begin{document}

\maketitle
\thispagestyle{empty}
\begin{abstract}

    How should taxes on externality-generating activities be adjusted if they are regressive? In our model, the government raises revenue using distortionary income and commodity taxes. If more or less productive people have identical tastes for externality-generating consumption, the government optimally imposes a Pigouvian tax equal to the marginal damage from the externality. This is true regardless of whether the tax is regressive. But, if regressivity reflects different preferences of people with different incomes rather than solely income effects, the optimal tax differs from the Pigouvian benchmark. We derive sufficient statistics for optimal policy, and use them to study carbon taxation in the United States. Our empirical results suggest an optimal carbon tax that is remarkably close to the Pigouvian level, but with higher carbon taxes for very high-income households if this is feasible. When we allow for heterogeneity in preferences at each income level as well as across the income distribution, our optimal tax schedules are further attenuated toward the Pigouvian benchmark.
    
\end{abstract}

\vspace{0.6em}
\noindent \textbf{JEL codes}: H21, H23, Q52, Q54 \\
\noindent \textbf{Keywords}: optimal taxation, externalities, corrective taxation, income taxation, climate change, carbon tax, commodity taxation, redistribution\blfootnote{\scriptsize \hskip -18pt \textbf{Declarations of interest}: None.}
\blfootnote{\scriptsize \hskip -18pt \textbf{Acknowledgements}: We thank Kai Song (Australian National University) for excellent research assistance. We are grateful to Charlie Brown, James Hines, Matthew Lilley, Matthew Shapiro, Itai Sher, Damián Vergara, and Celine Wasem for comments and suggestions, as well as seminar participants at the University of Michigan, U. Mass. Amherst, the Australian National University (ANU), the 2025 IIPF Annual Congress, and the 2025 National Tax Association Annual Meeting. Financial support from the University of Michigan and the ANU is gratefully acknowledged. This research was approved by the institutional review board at the Australian National University (H/2024/1175) and exempted at the University of Michigan (HUM00261013).}

\pagebreak
\setcounter{page}{1}
\section{Introduction}

\vspace{-0.7em}

Human-generated carbon emissions are widely accepted to be causing large social costs, both now and in the future \citep{IPCC_AR6_SYR_2023}. Economists view this as a classic case for government intervention, because there is an externality: Consumers and producers do not bear the full cost of the emissions they generate. Private decision-making therefore leads to greater emissions than is socially optimal.

Most economists also agree that the solution is to impose a ``Pigouvian'' tax on goods that generate negative externalities---in this case, goods whose production or consumption generates carbon emissions \citep{EconomistsStatement2019}. By setting this tax equal to the marginal damage to others from emissions, consumers and producers are forced to pay the true social costs of their choices \citep{Pigou1920,sandmo1975optimal}. This Pigouvian tax would be optimal if non-distortionary lump sum compensation of winners and losers were possible.

In practice, carbon taxation has proved unpopular. One reason for this is that carbon taxes are regressive \citep{HassettMathurMetcalf2009,GraingerKolstad2010,MathurMorris2014,Metcalf2009,WilliamsGordonBurtrawCarboneMorgenstern2015,pizer2019distributional}, and people are concerned about the distributional impacts of adopting them \citep{DietzAtkinson2010,Dechezlepretre2025}. As we confirm in our data, a uniform tax on all carbon-generating activities would lead poorer households to pay more as a share of their incomes (or total expenditures) than richer households. This raises the question of whether corrective taxes should be lowered relative to the Pigouvian benchmark if they are regressive (or raised if they were progressive). We answer that question. Specifically, we study how the optimal carbon tax changes if the government is concerned about inequality, but can only compensate households using efficiency-reducing income and commodity taxes.

Distributional impacts of policies such as a carbon tax cannot readily be dismissed if redistribution is costly. The imposition of the tax produces winners and losers. Compensating the losers by taxing the winners (through adjustments to the income tax or other tax bases) could lead to a greater loss in efficiency than the gain from correcting the externality in the first place. In other cases, the combined package of carbon taxation and compensation could lead to net efficiency gains. Only in knife-edge cases, such as with strong restrictions on household utility functions, is the carbon tax optimally set equal to the Pigouvian benchmark \citep{Kaplow2012}. 

We shed light on optimal externality correction by deriving sufficient statistics formulas that apply when governments can raise revenue with a flexible but distortionary income tax combined with a Pigouvian tax. Using carbon emissions as a specific example, we then implement our approach for electricity, natural gas, heating fuels, and gasoline consumption in the United States. Our results suggest that the optimal carbon tax on these goods is remarkably close to the Pigouvian prescription.

Our focus on whether the carbon tax should be changed for distributional reasons is distinct from questions of how much to compensate households, which others discuss in detail \citep{GraingerKolstad2010,GoulderHafsteadKimLong2019_JPubE}. Our framework does incorporate flexible compensation, but the pattern of that compensation has no direct connection to the question of whether the carbon tax should differ from the Pigouvian level. 

Heterogeneity in preferences (broadly defined) plays a pivotal role in our analysis. If higher-income households have a relative preference for carbon intensity, so that they would choose a more carbon-intensive consumption bundle at any level of disposable income, then the optimal tax on carbon is greater than the marginal damage. If lower-income households are more carbon-intensive, the opposite is true. Preference heterogeneity can even reverse the sign of the optimal tax and make a carbon subsidy optimal. 

Preference heterogeneity is important because it determines the relative cost of redistribution via income vs. carbon taxes. If the carbon tax is set at the Pigouvian level, households allocate their after-tax income efficiently because the externality is exactly internalized. Starting from this level, raising the carbon tax lowers labor supply, while reducing income taxes to compensate increases labor supply. If all households have identical preferences for carbon intensity, the two labor supply effects cancel out. But if higher productivity households have a relative preference for carbon intensity in the sense that they would emit more carbon at any given level of income, there is a net increase in labor supply. In this case, raising the carbon tax redistributes more efficiently than income taxation. By similar logic, if lower productivity types have more carbon-intensive preferences, lowering the carbon tax below the Pigouvian level increases efficiency.

Our model highlights the need to specifically isolate differences in \textit{preferences} between people with different levels of income. This is distinct from income effects, which can also drive differences in carbon intensity along the income distribution.\footnote{The relevant distinction for our analysis is whether differences in choices by people with different incomes simply reflect differences in the amount they have to spend (income effects), or factors that are independent of income, which would drive different choices even conditional on income. The latter category includes true ``preferences'', but also habits or constraints that arise due to factors such as where one lives.} Indeed, it is clearly possible for any given cross-sectional relationship between income and carbon intensity to be explained exclusively by income effects, by preference heterogeneity, or by some combination of the two. The fact that poorer households are more carbon-intensive, and carbon taxes correspondingly regressive, does not therefore suffice to tell us whether the carbon tax should deviate from the Pigouvian benchmark.

Put more formally in the language of optimal taxation, information about carbon intensity is only useful for redistribution if carbon intensity is a ``tag'' in the sense of \citet{Akerlof1978_tagging}. This is the case if it provides direct information about a household's type beyond what is revealed by an individual's choice of income. For example, if carbon intensity were an immutable characteristic that co-varied with income, then taxing carbon-intensive goods would allow some redistribution without efficiency costs.\footnote{\citet{MankiwWeinzierl2010} study the example of taxing an individual's height, which is close to this ideal. See also \citet{Parsons1996ImperfectTagging,Kaplow2006OptimalIncomeTransfers}; and \citet{CremerGahvariLozachmeur2010}.} In turn, this would provide a reason other than externality correction to tax carbon-intensive consumption. Conversely, if all variation in carbon intensity comes from income effects, then taxing carbon-intensive goods has no advantage over income taxation for the purpose of redistribution, so there is no reason to deviate from the Pigouvian level of carbon taxation.

Assuming that differences in preferences do justify deviating from the Pigouvian level of taxation of the externality, distributional concerns lead the government to trade off externality correction and redistribution. As a result, our model highlights that the optimal tax depends on how responsive households are to the carbon tax and income tax. If the income tax is highly distortionary, it can be optimal to use the carbon tax to achieve distributional aims, even if there is a large cost of setting a tax that is different from the Pigouvian level. Using the carbon tax to redistribute is also attractive if doing so has little behavioral impact on carbon intensity. Alternatively, if changing the carbon tax sharply changes carbon emissions, using the carbon tax for redistribution is less desirable.

Unlike the classic Pigouvian case without distributional concerns, the optimal externality correction is not necessarily the same at all levels of consumption. 
To the extent possible, using means-tested vouchers or non-linear electricity and gas pricing, the optimal carbon tax schedule may be progressive or regressive. There are many real-life examples of non-linear pricing and taxation schemes such as electricity tiered pricing, congestion taxes, and emission-based vehicle registration fees or subsidies \citep[see e.g.,][]{Zhang2015ChinaTieredTariff,ACEA2020CO2VehicleTaxes}. Whether such non-linearity is optimal depends on the rate at which the preference for carbon intensity changes with income in different parts of the distribution. It does not depend on whether the carbon tax is regressive or progressive in the sense that poorer or richer households pay more as a share of their incomes.

Building on previous work by \citet{ferey2024sufficient} on commodity taxation, our model provides a convenient way to identify heterogeneity in preferences for people with different status quo incomes. To do so, we compare two statistics. First, we measure the cross-sectional relationship between income and consumption of carbon-intensive goods. This captures both income effects and preference heterogeneity. Second, we measure income effects by quantifying the causal impact of higher income on a given household's consumption of carbon-intensive goods. Finally, we subtract income effects from the cross-sectional relationship to yield the component that arises solely from preferences.

We implement this theoretical framework for the United States using survey evidence from 46,918 households in the Consumer Expenditure Surveys (CEX) and 1,448 respondents from a custom online survey. Our analysis focuses on the four expenditure categories that account for the largest share of carbon emissions aggregated into a single \enquote{dirty good}: gasoline, natural gas, electricity, and heating fuels. In our survey, we ask respondents for information about their income, overall expenditure, and how they split that expenditure over these and other categories. We then ask how their expenditure on each of these categories would change if they had more to spend.

Our first step is to measure the cross-sectional relationship between household taxable income and consumption of carbon-intensive goods. While higher-income households consume a greater (absolute) quantity of these goods, they allocate a smaller share of their total consumption to carbon-intensive goods, reflecting an increasing but concave relationship between income and carbon-intensive good consumption.

Next, we measure respondents' causal consumption response to a small and permanent increase in household income across the income distribution. Surprisingly, the results indicate that a pattern of income effects in carbon intensity closely matches the cross-sectional variation in consumption in the Consumer Expenditure Survey. We note that this is not at all mechanical or obvious. Indeed, these causal consumption responses are within-individual effects, which are compared to variation across individuals. The result suggests that cross-sectional variation in consumption of these carbon-intensive goods is largely explained by income effects, and not by differences in preferences. 

The interpretation of these survey responses hinges on the reliability of stated preferences. Growing evidence suggests that stated preferences perform well in general \citep{list-survey,fusterzafar}. Yet, reliability inevitably depends on the context and methodology. In our setting, we have a way of verifying the reliability of our survey results. Because we measure status quo allocations across different types of goods in both our survey and in the diary-based Consumer Expenditure Survey, we are able to verify that at least the cross-sectional relationship between income and carbon-intensive good consumption is very similar across both sources. This lends credibility to our survey instrument and sample. We also note that we calibrate the hypothetical income increase to be familiar and relevant for respondents---comparable in size to a small permanent cut in average tax rates or typical lump-sum stimulus checks. 

Our empirical results allow us to estimate the optimal tax on carbon-intensive goods. On average, differences in preferences justify only small deviations from the standard Pigouvian prescription of setting the tax equal to marginal damage. Our estimates suggest that the Pigouvian level of the tax is about 40 cents per dollar of expenditure (assuming a Social Cost of Carbon of \$200 per ton). Taking into account differences in preferences, we find that the optimal linear tax would be just 0.2 cents lower than this. The very small magnitude of this adjustment is robust to a wide range of potential parameterizations, reflecting our core empirical result that the causal and cross-sectional relationships between income and carbon intensity match closely on average.

To the extent that taxes on externality-generating goods can be made non-linear---which has some precedents---there are somewhat larger deviations from the Pigouvian prescription. For a non-linear tax, we compare the cross-sectional and causal relationships between income and consumption choices at each point in the income distribution. For example, consider households with incomes around \$100,000. We find that the estimated causal consumption response is slightly larger than the slope of the cross-sectional relationship between income and carbon intensity. This implies that those earning around \$100,000 have a decreasing preference for carbon-intensive goods as we move up the income distribution. In turn, this justifies a carbon tax that is around 1.5 cents per dollar of expenditure lower than the Pigouvian level. By contrast, households with incomes closer to \$250,000 would face carbon taxes of 43 cents, which is 7.5 percent higher than the Pigouvian level.

Our final step is to extend the model to allow for multidimensional heterogeneity in the sense that carbon intensity is allowed to vary across households with the same level of income. The multidimensional case is more complex because adjustments to the income tax can no longer exactly compensate all households for changes to the carbon tax. Furthermore, individuals with the same taxable income may differ in their responsiveness to taxation and their preferences for the carbon-intensive good. We show how to derive an empirically tractable optimal tax condition, and show that the lessons from the unidimensional case carry over qualitatively. A particularly important result is that variation in the causal effect of higher income on consumption choices tends to dampen any deviation of the optimal tax from its Pigouvian level. For this reason, we find that the optimal carbon tax is pushed even closer to the Pigouvian benchmark.

Our paper provides an empirical and theoretical framework that could be applied more broadly to environmental taxes, or any other tax designed to correct an externality. Consistent with the principle of additivity \citep{sandmo1975optimal,Kopczuk2003}, the total tax on any externality-generating activity is a combination of a ``Pigouvian'' correction and a distributional component. We provide a way to measure the distributional component and show how it depends on causal and cross-sectional sufficient statistics. The optimal tax on the externality may deviate positively or negatively from the Pigouvian prescription. In some cases, the total tax could even flip sign relative to the externality. In our case, we find that distributional concerns have only a minor impact on the optimal carbon tax. We show in Appendix sections \ref{app:empirical_pareto_efficient_tax_gasoline_only} and \ref{app:empirical_pareto_efficient_tax_electricity_only} that this is also true for the individual components of our carbon-intensive aggregate good: electricity, natural gas and heating fuels, and gasoline. However, applying our framework to other taxes may yield different results.

\smallskip
\noindent \textit{Connections in the Literature}\nopagebreak[4]

\vspace{-0.4em}
Externalities have long been recognized as providing a motivation for government intervention. Corrective taxation as a solution dates back to \citet{Pigou1920}. Economists have studied wide varieties of externalities in practice, including environmental taxes \citep{weitzman1974prices,andersson2019carbon}, occupation choice \citep{Lockwood2017}, congestion \citep{walters1961congestion}, labor market distortions \citep{CraigJMP}, control of a pandemic \citep{CraigHines2020NTJ}, rent-seeking \citep{RothschildScheuer2016}, and positional externalities \citep{AronssonJohanssonStenman2008}.

A subset of this literature incorporates distributional concerns. \citet{Kopczuk2003} shows how the optimal tax on an externality can be written as the sum of an externality correction component and a separate term capturing other objectives, building on \citet{sandmo1975optimal}. \citet{Kaplow2012} studies a special case without preference heterogeneity, showing that the optimal tax is equal to marginal damage in this case. \citet{PirttilaTuomala1997} allow for interactions with labor supply.  \citet{JacobsVanDerPloeg2019JEEM} focus on linear taxation. \citet{CremerGahvariLadoux1998}
and \citet{CremerGahvariLadoux2003} consider optimal taxation and externality correction with a small number of types, demonstrating the importance of homogeneity and separability in preferences, highlighting the potential for the optimal tax on the externality-generating good to have the opposite sign from the externality. \citet{AhlvikLiskiMakimattila2024} use a mechanism design approach to study the interaction of income taxation and externality correction, focusing on different types of equity considerations and differences in cost burdens. They empirically estimate optimal deviations from Pigouvian taxation, but with much stronger restrictions on individual heterogeneity than ours. \citet{JacobsdeMooij2015} ask whether the Pigouvian correction should be adjusted for the marginal cost of public funds. \citet{micheletto2008ext} studies the correction of externalities that are not ``atmospheric'' so that externality correction is imperfectly targeted.

Methodologically, the most closely related work is by \citet{allcott2019regressive} and \citet{ferey2024sufficient}. These papers focus primarily on ``internality'' correction, and savings taxation when the government values equality and taxes are distortionary. The authors highlight the core insight that causal income effects can be subtracted from cross-sectional variation in consumption patterns to establish the degree to which preferences vary across income levels. Our contributions are to: (i) apply these ideas to shed light on how externality-correcting taxes should change with preference heterogeneity, including multidimensional heterogeneity; and (ii) study carbon emissions by households, which is one of the most important examples of an externality. These papers and ours build on previous work \citep{Saez2002,PikettySaez2013,DiamondSpinnewijn2011,GauthierHenriet2018}; and connect to concurrent theoretical analysis by \citet{DoligalskiEtAl2025}.

Empirically, recent studies highlight significant heterogeneity in consumer preferences for carbon-intensive goods vs. cleaner alternatives. For example, \cite{lyubich2024role} decomposes the unobserved heterogeneity in US household carbon consumption and finds, by and large, a greater role for person effects (up to 50\% of the variance decomposition) relative to place effects (up to 23\%). Similarly, \cite{dorsey2024unequal} provide evidence of across-income preference heterogeneity in solar panel uptake.  

Our focus is on heterogeneity between households and the importance of the distortionary costs of redistributive taxation. We do not incorporate general equilibrium effects or dynamics, which can provide important additional motives for deviating from standard Pigouvian taxation. This has been considered in other work \citep{CremerGahvari2001JPuE,CremerGahvariLadoux2010,douenne2025,Barrage2020,Bierbrauer2024,FriedNovanPeterman2024}. Similarly, we do not consider multinational issues \citep[see][]{AronssonBlomquist2003} or explicitly model multiple generations \citep{FriedNovanPeterman2018}. Our work complements \citet{FrickeFuestSachs2025}, who study the optimal rebate when carbon emissions are capped.

\smallskip
\noindent \textit{Roadmap for the Paper}\nopagebreak[4]

\vspace{-0.4em}
In Section \ref{sec:model}, we introduce a model of optimal taxation with flexible tax instruments, an externality-generating good, and heterogeneity across but not within levels of income. We show how the optimal taxes on income and the externality and income interact, and derive empirically estimable sufficient statistics to characterize optimal taxation. Section \ref{sec:multi} extends our results to multidimensional heterogeneity. Section \ref{sec:empirics} then applies these results to carbon taxation in the United States.  Section \ref{sec:conclusion} concludes.

\vspace{-1.5em}

\section{Externality Correction With Taste Heterogeneity}\label{sec:model}

\vspace{-0.5em}

Consider a population of individuals indexed by type $w\in W$, where $w$ has differentiable cumulative distribution $F$ and $W$ is compact. Individual $w$ chooses taxable income $z\in\left[0,\overline{z}\right]{}$. Income has differentiable cumulative distribution $H_z(z)$ and density $h_z(z)$.  Income is subject to nonlinear income taxation with tax schedule ${T}_{z}\left(z\right)$. She allocates her after-tax income, $z-{T}_z(z)$ between a numeraire consumption good, $c$, and an externality-generating good, $x$. A tax, ${T}_x(x)$ must be paid on consumption of the externality-generating good. We constrain ${T}_{z}\left(z\right)$ and ${T}_x(x)$ to be twice-differentiable.

We focus in this section on the case in which heterogeneity is unidimensional in the sense that $w\in \mathbb{R}$ and utility is $u(c,x,z\mid w)$. This implies that there is no variation in preferences conditional on $w$. However, individuals with different $w$ may have different preferences over the two consumption goods. Specifically, if two individuals with different $w$ were forced to have the same $z$ and $z-T(z)$ they may prefer different consumption bundles. In Section \ref{sec:multi}, we extend our results to allow for multidimensional heterogeneity.

\textbf{Separable Benchmark.} As an important benchmark, we will consider the case, as in \citet{AtkinsonStiglitz1976}, where preferences over consumption goods are weakly separable from labor supply, and homogeneous across agents, including those of different types. In this special case, it is well-known that we obtain the standard
Pigouvian result that it is optimal to tax the externality-generating
good at a constant rate equal to the marginal external damage
associated with consuming the good \citep{Kaplow2012}.

\vspace{-1.4em}
\subsection{Individual Maximization}
\vspace{-0.4em}

Each agent maximizes her utility subject to her budget constraint and the tax system set by the social planner, producing indirect utility
$v(w)$. 

\vspace{-2.5em}
\begin{align}
v(w) & =\max_{z,x,c}u(c,x,z\mid w)\label{eq:agentProb}\\
 & \text{s.t. }\quad c+x+{T}_{x}(x)=z-{T}_{z}(z)\nonumber 
\end{align}

\vspace{-1em}

\noindent We assume that $u(c,x,z\mid w)$ is twice continuously differentiable
with bounded first derivatives, increasing in $c$ and $x$, and decreasing
in $z$. We further assume that $u(c,x,z\mid w)$ is strictly concave
in $c$, $x$, and $z$.

We will often find it useful to approach \eqref{eq:agentProb} as
a two stage maximization problem. First, a type $w$ chooses their
income, $z(w)$. Second, they decide how to split the result disposable
income between the two consumption goods, leading to choices $c(w)$
and $x(w)$. Specifically, a type $w$ agent chooses taxable income:
\vspace{-0.4em}\begin{equation}\vspace{-0.4em}
    z\left(w\right)\equiv\arg\max_{z}\left\{ u\left(z-T_{z}\left(z\right)-x\left(w;z\right)-T_{x}\left(x\left(w;z\right)\right),x\left(w;z\right),z\mid w\right)\right\} 
\end{equation}
where: 
\vspace{-0.4em}\begin{equation}\vspace{-0.4em}
x\left(w;z\right)\equiv\arg\max_{x}\left\{ u\left(z-{T}_{z}\left(z\right)-x-{T}_{x}\left(x\right),x,z\mid w\right)\right\} 
\end{equation}
is the amount of the externality-generating good a type $w$ agent
would choose if his or her taxable income were $z$. The amount of
the externality-generating good that a type $w$ agent actually chooses
is $x\left(w\right)\equiv x\left(w;z\left(w\right)\right)$. 

This approach allows us to write first-order conditions characterizing a type $w$ agent's choices as a function of both tax schedules:
\vspace{-0.4em}\begin{equation}\vspace{-0.6em}
1-\mathcal{T}_{z}'\left(z\right)-\frac{\partial x\left(w;z\right)}{\partial z}\left(1+\mathcal{T}_{x}'\left(x\right)\right)=-\frac{u_{z}+\frac{\partial x\left(w;z\right)}{\partial z}u_{x}}{u_{c}}\label{eq:agentFOC1}
\end{equation}

\vspace{-1.5em}\begin{equation}\vspace{-0.2em}
\text{and}\qquad 1+\mathcal{T}_{x}'\left(x\right)=-\frac{u_{x}}{u_{c}}.\label{eq:agentFOC2}
\end{equation}
For the sake of brevity we have suppressed the dependence of the marginal utility (e.g. $u_c$) terms on type ($w$) and/or on income ($z$). These two conditions can be combined to yield a reduced first order condition for the optimal choice of income: $1-\mathcal{T}_{z}'\left(z\right)=-\frac{u_{z}}{u_{c}}$, but this masks important intuition as we highlight below.

Equation \ref{eq:agentFOC1} highlights something important: an agent's anticipation of their own behavior at stage two enters their first stage decision: For each additional dollar of income they consider earning, they evaluate how it would change their disposable income and how much they would spend on good $x$, given their preferences and the tax rate. In this way,
the agent can be thought of as choosing their level of taxable income
subject to an \emph{effective net of income tax rate }of $1-\mathcal{T}_{z}'\left(z\right)-\frac{\partial x\left(w;z\right)}{\partial z}\left(1+\mathcal{T}_{x}'\left(x\right)\right)$.
Thus, when a rational agent sees their purchasing
power decline due to commodity taxes, this reduces their marginal benefit of earning income. The size of this reduction depends on what fraction of their marginal dollar of disposable income will be spent on the dirty good. Critically for the analysis below, this marginal spending behavior may differ across people.

\vspace{-1.2em}
\subsection{Social Welfare Maximization}
\vspace{-0.3em}

A welfarist social planner chooses twice-differentiable tax functions,
$\mathcal{T}_{z}$ and $\mathcal{T}_{x}$. It does so to jointly to
maximize social welfare, $\bm{\mathcal{W}}$, while raising enough
revenue to cover an exogenous revenue requirement, $\mathcal{R}$.
\begin{align}
\max\bm{\mathcal{W}} & =\int\gamma(w)v(w)dF-\mathcal{D}(\overline{x})\\
 & \text{s.t. }\quad\mathcal{R}=\int\left({T}_{z}(z(w))+{T}_{x}(x(w))\right)dF(w)\\
 & \text{and }\quad\overline{x}=\int x(w)dF\nonumber 
\end{align}
Here, $\gamma(w)$ are Pareto weights; and $\mathcal{D}(\overline{x})$
is a separable atmospheric externality, which is an increasing function
of aggregate consumption of the externality-generating good.

This formulation is equivalent to having the harm of aggregate consumption
of the externality-generating good entering individuals' utility
in a separable way. To see the equivalence between the two ways of
defining social welfare, consider $\Tilde{u}\equiv u(c,x,z\mid w)-\widetilde{x}(\bar{x};w)$,
then $\Tilde{v}(w)=v(w)-\widetilde{x}(\bar{x};w)$ and $\mathcal{D}'(\overline{x})=\int\gamma(w)\widetilde{x}'(\bar{x};w)dF$.
Our formulas are also unaffected if we define welfare using a social
welfare function, $\mathcal{W}\equiv\int W\left(v\left(w\right)\right)dF -\mathcal{D}(\overline{x})$,
rather than Pareto weights. However, the distinction between the two
approaches would matter for non-marginal tax changes.

\vspace{-1.2em}
\subsection{Regularity Assumptions}
\vspace{-0.3em}

For all tax schedules that we consider, we assume that the optimal
choices of $c$, $x$, and $z$ are globally unique for every individual,
with the relevant second-order conditions holding strictly. We also
assume there is a smooth mapping from $w$ to $c(w)$, $x(w)$, and
$z(w)$. While these assumptions are standard and are required to
ensure that choices move smoothly as the tax system changes, we note
that they imply restrictions on the set of tax schedules under consideration.
Finally, we assume that the mapping from $w$ to $z$ is strictly
monotonic, as is the mapping from $z$ to $x$. 

\vspace{-1.2em}
\subsection{Characterizing Tax Reforms}
\vspace{-0.3em}

\label{subsec:optnonlin}

We characterize the welfare effects of reforming the tax system using
a formalization of the perturbation method of \citet{saez2001using}
via Gateaux derivatives, an approach discussed more thoroughly in
\citet{golosov2014variational}, \citet{jacquet2021optimal}, and
\citet{gerritsen2024optimal}. Given an initial tax system with the
tax schedules, $\mathcal{T}_{x}(x)$ and $\mathcal{T}_{z}(z)$, we ask how welfare and revenue change when we perturb them in the
direction of $\tau_{x}(x)$ and $\tau_{z}(z)$ respectively. Letting
$\kappa\in\mathbb{R}$ index the magnitude of the perturbation, we
can define the reformed tax schedules as follows: 
\begin{equation}
T_{z}\left(z;\kappa\right)\equiv\mathcal{T}_{z}\left(z\right)+\kappa\tau_{z}\left(z\right)\qquad\text{and}\qquad T_{x}\left(x;\kappa\right)\equiv\mathcal{T}_{x}\left(x\right)+\kappa\tau_{x}\left(x\right).\label{eq:reformed-schedules}
\end{equation}
All our results about efficient and optimal tax policy will be based
on examining the first-order effect of such perturbations on the social
planner's Lagrangian,
\[
\left.\frac{\partial\mathcal{L}}{\partial\kappa}\right|_{\kappa=0}=\left.\frac{\partial\mathcal{W}}{\partial\kappa}\right|_{\kappa=0}+\frac{1}{\lambda}\left.\frac{\partial\mathcal{R}}{\partial\kappa}\right|_{\kappa=0},
\]
starting from the initial tax system. For any initial tax system,
and proposed set of reforms, if $\left.\frac{\partial\mathcal{L}}{\partial\kappa}\right|_{\kappa=0}>0$,
then movement in that direction is welfare-improving. At the optimal
tax system, no such improvements can be possible and thus, at the
optimum, we must have:
\vspace{-0.3em}\begin{equation}
\left.\frac{\partial\mathcal{L}}{\partial\kappa}\right|_{\kappa=0}=0\quad\text{for all }\tau_{z}(\cdot)\text{ and }\tau_{x}(\cdot).\label{eq:optimalityCond}
\end{equation}

\vspace{-1.6em}
\subsection{Distribution Neutral Reforms}\label{sec:distneut}
\vspace{-0.3em}

To begin our investigation of the optimal tax system, consider a special
type of tax reform: one which is \emph{distribution neutral}. 

Let $\hat{w}\left(z\right)$ be the type of agent that chooses income
$z$. In other words, it is the inverse the taxable income choice
function, $z\left(w\right)$. This allows us to define $\hat{x}\left(z\right)\equiv x\left(\hat{w}\left(z\right);z\right)$,
the dirty good consumption of an agent with taxable income $z$. 

A distribution neutral tax reform is one where for a given change
in the dirty good tax liability of agents with taxable income $z$,
$\tau_{x}\left(\hat{x}\left(z\right)\right)$, we also change their
income tax liability by a precisely offsetting amount, such that if
the agent can still afford their original bundle. That is to say,
the income tax reform is defined by:
\vspace{-0.5em}\begin{equation}\vspace{-0.5em}
 \tau_{z}\left(z\right)=-\tau_{x}\left(\hat{x}\left(z\right)\right).   
\end{equation}
Notice, this implies marginal income tax rate changes of:
\vspace{-0.5em}\begin{equation}\vspace{-0.5em}
\tau_{z}'\left(z\right)=-\hat{x}'\left(z\right)\tau_{x}'\left(\hat{x}\left(z\right)\right),
\end{equation}
where $\hat{x}'\left(z\right)$ is the slope of the observed cross-sectional relationship
between dirty good consumption and taxable income.

Distribution neutral reforms are of special interest because of the relative simplicity of their impact on social welfare. For any such reform, there is no first-order impact on agent utility, nor is there a mechanical impact on revenue. All effects of the reform arise because behavioral responses affect the externality and government revenue. 

This also implies that for any distribution neutral reform, if
$\left.\frac{\partial\mathcal{L}}{\partial\kappa}\right|_{\kappa=0}>0$,
then a reform in that direction is a Pareto improvement: Positive
gains in government revenue net of externality costs can be obtained without
changes to individual agent utility, which allows a small transfer to be made that makes all agents strictly better off. Put another way, for any Pareto
efficient tax system, it must be the case
\vspace{-0.4em}\begin{equation}\vspace{-0.5em}
\left.\frac{\partial\mathcal{L}}{\partial\kappa}\right|_{\kappa=0}=0\quad\text{for all }\tau_{z}(\cdot)\text{ and }\tau_{x}(\cdot)\text{ satisfying }\tau_{z}\left(z\right)=-\tau_{x}\left(\hat{x}\left(z\right)\right)\label{eq:efficiencyCond}
\end{equation}

\vspace{-1.2em}
\subsection{Behavioral Responses}\label{sec:behav}
\vspace{-0.3em}

By construction, any distribution-neutral reform produces a change in relative prices
without a change in their purchasing power. Thus, only compensated behavioral responses will be relevant to the analysis. To establish what kinds of responses will occur, we can revisit the agent's first-order conditions. 

\underline{Consumption of the Dirty Good (Direct).} The reform to $T_x$ changes the stage 2 first-order condition of an agent with taxable income $z$ in the following way:
\vspace{-0.3em}\begin{equation}\vspace{-0.3em}
1+\mathcal{T}_{x}'\left(\hat{x}\left(z\right)\right)+\kappa\tau_{x}'\left(\hat{x}\left(z\right)\right)=-\frac{u_{x}}{u_{c}}.
\end{equation}
Holding constant the agent's choice of $z$, this generates a behavioral
response of
\vspace{-0.3em}\begin{equation}\vspace{-0.3em}
\frac{\partial\hat{x}(z)}{\partial\kappa}\bigg|_{z,\,\kappa=0}=-\frac{\varepsilon_{x\mid z}(z)\,\hat{x}(z)}{1+\mathcal{T}_{x}'(\hat{x}(z))}\tau_{x}'(\hat{x}(z))\label{eq:DN_dirtygood_ownprice}
\end{equation}
where:
\vspace{-0.3em}\begin{equation}
\varepsilon_{x\mid z}(z)\equiv-\frac{1+\mathcal{T}_{x}'(\hat{x}(z))}{\hat{x}(z)}\frac{\partial\hat{x}(z)}{\partial\mathcal{T}_{x}'}\bigg|_{z}^{c}
\end{equation}
is the compensated elasticity of spending on $x$ with respect to its tax-inclusive price. 

\underline{Income.} Turning our attention to the stage 1 first-order condition, we have
the following perturbed first-order condition:
\vspace{-0.4em}\begin{equation}\vspace{-0.3em}\label{eq:incfoc1}
1-\mathcal{T}_{z}'\left(z\right)+\underbrace{\kappa\hat{x}'\left(z\right)\tau_{x}'\left(\hat{x}\left(z\right)\right)}_{\text{Reform to }T_z}-x_{\text{inc}}^{\prime}(z)(1+\mathcal{T}_{x}'\left(x\right)+\underbrace{\kappa\tau_{x}'\left(\hat{x}\left(z\right)\right)}_{\text{Reform to }T_x})=-\frac{u_{z}+x_{\text{inc}}^{\prime}(z)u_{x}}{u_{c}}.
\end{equation}
where $x_{\text{inc}}^{\prime}(z)\equiv\frac{\partial x\left(\hat{w}\left(z\right);z\right)}{\partial z}$
is the causal effect of increasing the taxable income of a type $\hat{w}\left(z\right)$
agent on their choice of dirty good consumption.

Notice, in the two-stage formulation of the agent's problem, that this first-order condition depends on the reforms to both tax schedules. This is because, when responding to the reform
at stage 1, the agent anticipates how the reform to the tax on the dirty good will impact their marginal return to earning income, given that some
of their income will be spent on that good. After rearranging the left hand side of the equation, the reformed effective marginal income
tax rate can be rewritten as:
\vspace{-0.3em}\begin{equation}\vspace{-0.3em}\label{eq:incfoc2}
1-\mathcal{T}_{z}'\left(z\right)-x_{\text{inc}}^{\prime}(z)\left(1+\mathcal{T}_{x}'\left(x\right)\right)+\kappa\left(\hat{x}'\left(z\right)-x_{\text{inc}}^{\prime}(z)\right)\tau_{x}'\left(\hat{x}\left(z\right)\right).
\end{equation}
This formulation highlights that heterogeneity in preferences leads to a distinction between the causal effect of additional income on dirty good consumption, and the cross-sectional relationship between income and dirty good consumption. Following \cite{ferey2024sufficient}, we let $x_{\text{het}}^{\prime}(z)$ be the component of the cross-sectional relationship that is driven by heterogeneity in tastes:
\begin{equation}\label{eq:decomp}
x_{\text{het}}^{\prime}(z)\equiv\hat{x}'\left(z\right)-x_{\text{inc}}^{\prime}(z). 
\end{equation}
As equation \ref{eq:incfoc1} shows, the change in the effective marginal income tax rate induced by the reform reform is directly proportional to this preference heterogeneity effect. As a result, the behavioral response of taxable income induced by the reform is:
\vspace{-0.3em}\begin{equation}\vspace{-0.3em}
\frac{\partial z}{\partial\kappa}\bigg|_{\kappa=0}=\frac{z\varepsilon_{z}(z)}{1-\mathcal{T}_{z}'(z)}x_{\text{het}}^{\prime}(z)\tau_{x}'(\hat{x}(z))\label{eq:DN_inc_ownprice}
\end{equation}
where:
\vspace{-0.3em}\begin{equation}
\varepsilon_{z}(z)\equiv\frac{1-\mathcal{T}_{z}'(z)}{z}\,\frac{\partial z}{\partial\big(1-\mathcal{T}_{z}'(z)\big)}\bigg|^c_{z}
\end{equation}
is the compensated elasticity of taxable income with respect to the
net-of-income-tax rate. Note that formal definitions of these elasticities are included in the proof of Proposition \ref{prop:uni-1}.

\underline{Consumption of the Dirty Good (Indirect).} To complete our enumeration of behavioral responses, we consider the subsequent impact of the induced change in income on consumption of $x$. For an agent earning income $z$, this is determined by the
causal effect of income on the dirty good,
\vspace{-0.3em}\begin{equation}\vspace{-0.3em}
\frac{\partial\hat{x}(z)}{\partial z}\bigg|_{\kappa=0}\frac{\partial z}{\partial\kappa}\bigg|_{\kappa=0}=x_{\text{inc}}^{\prime}(z)\frac{z\varepsilon_{z}(z)}{1-\mathcal{T}_{z}'(z)}x_{\text{het}}^{\prime}(z)\tau_{x}'(\hat{x}(z)),\label{eq:DN_dirtygood_incEff}
\end{equation}
resulting in a total effect of the reform on dirty good consumption of: 
\vspace{-0.3em}\begin{equation}\vspace{-0.3em}
\frac{\text{d}x}{\text{d}\kappa}\bigg|_{\kappa=0}=\frac{\partial\hat{x}(z)}{\partial\kappa}\bigg|_{z,\,\kappa=0}+\frac{\partial x}{\partial z}\bigg|_{\kappa=0}\frac{\partial z}{\partial\kappa}\bigg|_{\kappa=0}.
\end{equation}

\subsection{A Key Sufficient Statistic: The ``Taste Elasticity''}
\vspace{-0.4em}

The formulation above highlights the key role of heterogeneity in preferences. Specifically, our optimal tax conditions will feature an important additional sufficient statistic: the ``taste elasticity'',  $\eta_{x,z}^{\text{Taste}}\left(z\right)$. This is the elasticity of consumption of $x$ with respect to income that occurs only through heterogeneity in agent tastes:
\vspace{-0.3em}\begin{equation}\vspace{-0.3em}
\eta_{x,z}^{\text{Taste}}\left(z\right)\equiv\frac{zx_{\text{het}}^{\prime}(z)}{\hat{x}\left(z\right)}.\label{eq:tasteHet}
\end{equation}
The taste elasticity captures any differences in consumption of $x$ across income that occur not due to the causal effect of  higher income, but due to differences in the preferences of agents who locate at different parts of the income distribution.

In the \citet{AtkinsonStiglitz1976} benchmark in which preferences are identical across agents and separable from labor supply, $\eta_{x,z}^{\text{Taste}}\left(z\right)=0$. In this special case, both the income and indirect consumption effects (equations \ref{eq:DN_inc_ownprice} and \ref{eq:DN_dirtygood_incEff}) are zero. The logic of the two-stage decision process above sheds light on why. Recall that a distribution neutral reform uses the income tax to exactly offset the financial impact of the altered tax on the dirty good. If all agents have the same preferences, this also means that every agent’s evaluation of every counterfactual income choice is also unchanged, because all agents would make the same consumption decisions if they had the same income level. This implies that the reform induces no change to any agent's choice of income.

More generally, differences in tastes across types can lead to a positive or negative value of $\eta_{x,z}^{\text{Taste}}\left(z\right)$, depending on whether higher or lower types have a relative preference for the dirty good. This implies potential labor supply impacts from the distribution neutral reform. To see why, recall the two stage decision process. An agent of type $w_0$ and income $z_0$ considers whether to increase her income slightly to $z_1$. If $\eta_{x,z}^{\text{Taste}}\left(z\right)<0$, she would consume less of $x$ given income $z_1$ than type $w_1$ agents (who do have income $z_1$). Now consider a reform that increases $T_x(\hat{x}(z_1))$ while lowering $T_z(z_1)$ just enough to leave the utility of type $w_1$ unchanged at $z_1$. This same reform must raise the utility that $w_0$ would receive at $z_1$, making that higher income more attractive. Conversely, if $\eta_{x,z}^{\text{Taste}}\left(z\right)<0$, the same distribution neutral reform reduces the incentive to increase one's income.

\vspace{-1.5em}
\subsection{Testing Local Pareto Efficiency: A Specific Reform}
\vspace{-0.3em}

To characterize the set of Pareto efficient tax systems, consider a specific distribution neutral tax reform where we only alter the marginal
commodity tax rate for agents earning taxable income $z$, increasing
it:
\vspace{-0.4em}\begin{equation}\vspace{-0.3em}
  \tau_{x}'(\hat{x}(z'))=\begin{cases}
1 & \text{if }z'=z\\
0 & \text{otherwise}
\end{cases}.  
\end{equation}
Distribution neutrality then requires an offsetting reduction of those same agents' marginal income tax rates. Specifically, $\tau_{z}'(z') = -\hat{x}'(z')$ if $z'=z$ and $\tau_{z}'(z') = 0$ otherwise.

\vspace{-1.6em}
\subsection{Welfare Effects of the Reform}\label{sec:welfeff}
\vspace{-0.3em}

We obtain the welfare effect of a reform by multiplying each behavioral response (equations \ref{eq:DN_dirtygood_ownprice},
\ref{eq:DN_inc_ownprice}, and \ref{eq:DN_dirtygood_incEff}) by
its marginal impact on social welfare. For changes in consumption of the dirty good (equations \ref{eq:DN_dirtygood_ownprice}
and \ref{eq:DN_dirtygood_incEff}), the marginal social value is the difference between the marginal tax rate on that good and its marginal social damage, i.e. $\mathcal{T}_{x}'(\hat{x}(z))-\frac{\mathcal{D}'(\bar{x})}{\lambda}$. For changes in income, the marginal social value
is simply the income tax rate $\mathcal{T}_{z}'(z)$. In total, the impact on social welfare is the sum of three effects:
\begin{equation}\label{eq:changewelf}
\Delta\bm{\mathcal{W}}(z) = \Delta\bm{\mathcal{W}}^{z}(z) + \Delta\bm{\mathcal{W}}^{x\mid z}(z) + \Delta\bm{\mathcal{W}}^{z\rightarrow x}(z)
\end{equation}
where the components are defined as follows.

\vspace{-2.2em}
\begin{align}
    \Delta\bm{\mathcal{W}}^{z}(z) &= \mathcal{T}_{z}'(z)\frac{z\varepsilon_{z}(z)}{1-\mathcal{T}_{z}'(z)}x_{\text{het}}^{\prime}(z)\\
    \Delta\bm{\mathcal{W}}^{x\mid z}(z) &= -\left(\mathcal{T}_{x}'(\hat{x}(z))-\frac{\mathcal{D}'(\bar{x})(z)}{\lambda}\right)\frac{\varepsilon_{x\mid z}(z)\,\hat{x}(z)}{1+\mathcal{T}_{x}'(\hat{x}(z))}\\
    \Delta\bm{\mathcal{W}}^{z\rightarrow x}(z) &= \left(\mathcal{T}_{x}'(\hat{x}(z))-\frac{\mathcal{D}'(\bar{x})}{\lambda}\right)x_{\text{inc}}^{\prime}(z)\frac{z\varepsilon_{z}(z)}{1-\mathcal{T}_{z}'(z)}x_{\text{het}}^{\prime}(z)
\end{align}
\vspace{-1.8em}

The first of these effects, $\Delta\bm{\mathcal{W}}^{z}$, captures the impact on government revenue from taxable income changes. The second, $\Delta\bm{\mathcal{W}}^{x\mid z}$, is the uninternalized component of the externality caused by changes in consumption of the externality-generating good, which are directly caused by the tax reform. Finally, $\Delta\bm{\mathcal{W}}^{z\rightarrow x}$ is the effect of further changes in consumption of $x$ that are induced indirectly by changes in taxable income.

\vspace{-1.5em}
\subsection{A Neccessary Condition for Pareto Efficiency}
\vspace{-0.3em}
Setting the total impact on welfare ($\Delta\bm{\mathcal{W}}$) equal to zero yields an optimal tax condition at each income level $z$, which simply requires that there be no net gain to distribution neutral tax reforms at any given level of income.
   \vspace{-0.3em}\begin{equation}\vspace{-0.3em}
\frac{\mathcal{T}_{x}^{\prime}\left(\hat{x}\left(z\right)\right)-\frac{\mathcal{D}^\prime(\bar{x})}{\lambda}}{1+\mathcal{T}_{x}^{\prime}\left(\hat{x}\left(z\right)\right)}=\eta_{x,z}^{\text{Taste}}\left(z\right)\frac{\varepsilon_{z}\left(z\right)}{\varepsilon_{x|z}\left(z\right)}\left(\frac{\mathcal{T}_{z}^{\prime}\left(z\right)+\left[\mathcal{T}_{x}^{\prime}\left(\hat{x}\left(z\right)\right)-\frac{\mathcal{D}^\prime(\bar{x})}{\lambda}\right]x^\prime_{\text{inc}}(z)}{1-\mathcal{T}_{z}^{\prime}\left(z\right)}\right)\label{eq:ParetoTaxSN}
\end{equation}

In our proof in Appendix \ref{app:proofs}, we derive equation \ref{eq:ParetoTaxSN} more formally, considering the broader set of distribution neutral reforms. Specifically, the result follows
from applying the fundamental lemma of the calculus of variations.

Our final step is to rearrange equation \ref{eq:ParetoTaxSN}
to isolate the wedge between the marginal tax rate on $x$ and its marginal damage, i.e. $\mathcal{T}_{x}'(\hat{x}(z))-\frac{\mathcal{D}'(\bar{x})}{\lambda}$. This yields Proposition \ref{prop:uni-1}. 
\begin{prop}
\label{prop:uni-1} If the income and commodity taxes are optimal
and are allowed to be non-linear, they must satisfy equation \ref{eq:ParetoTaxSN_r}:
\vspace{-0.3em}\begin{equation}\vspace{-0.3em}
\mathcal{T}_{x}^{\prime}\left(\hat{x}\left(z\right)\right)-\frac{\mathcal{D}^{\prime}(\bar{x})}{\lambda}=\frac{\text{RR}(z)\mathcal{T}_{z}^{\prime}\left(z\right)}{1-\text{RR}(z)x_{\text{inc}}^{\prime}(z)}\label{eq:ParetoTaxSN_r}
\end{equation}
where:
\vspace{-0.3em}\begin{equation}
\text{RR}(z)\equiv\eta_{x,z}^{\text{Taste}}\left(z\right)\frac{\varepsilon_{z}\left(z\right)}{\varepsilon_{x|z}\left(z\right)}\frac{1+\mathcal{T}_{x}^{\prime}\left(\hat{x}\left(z\right)\right)}{1-\mathcal{T}_{z}^{\prime}\left(z\right)}\label{eq:relresponse}
\end{equation}
measures relative responsiveness to changes in the carbon versus the
income tax rates. If this condition is not satisfied, there exists
a simultaneous reform to both taxes that yields a Pareto gain.
\end{prop}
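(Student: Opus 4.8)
The plan is to promote the pointwise, single-reform argument of Section~\ref{sec:welfeff} to a rigorous variational statement over the \emph{whole} class of distribution-neutral reforms, and then read off \eqref{eq:ParetoTaxSN_r} by algebra. I would start from the Gateaux derivative $\left.\partial\mathcal{L}/\partial\kappa\right|_{\kappa=0}$ defined in Section~\ref{subsec:optnonlin}, taken along an arbitrary twice-differentiable direction $\tau_x(\cdot)$ paired with the distribution-neutral companion $\tau_z(z)=-\tau_x(\hat{x}(z))$ of Section~\ref{sec:distneut}, rather than the single spike in $\tau_x'$ used in the body. The first task is an envelope-type argument establishing that along such directions the mechanical effects vanish at first order: since the offsetting income-tax change leaves every agent just able to afford her original bundle, the budget perturbations $\kappa\tau_z+\kappa\tau_x$ cancel and optimality of her choice implies $v(w)$ is unchanged to first order, so $\left.\partial\bm{\mathcal{W}}/\partial\kappa\right|_{\kappa=0}$ carries no direct utility term; and the mechanical revenue change $\int\big(\tau_z(z(w))+\tau_x(\hat{x}(z(w)))\big)\,dF$ is identically zero because $x(w)=\hat{x}(z(w))$. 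Hence only behavioral responses, valued at the planner's shadow prices, survive.

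Next I would assemble those responses, all of which are already derived in Section~\ref{sec:behav}: the direct own-price response of the dirty good \eqref{eq:DN_dirtygood_ownprice}, the taxable-income response driven by the taste wedge $\hat{x}'(z)-x_{\text{inc}}'(z)=x_{\text{het}}'(z)$ that perturbs the stage-1 effective marginal rate \eqref{eq:incfoc2}, giving \eqref{eq:DN_inc_ownprice}, and the indirect income-effect response of the dirty good \eqref{eq:DN_dirtygood_incEff}. Multiplying each by its marginal social value---$\mathcal{T}_z'(z)$ for income changes and the externality wedge $\omega(z)\equiv\mathcal{T}_x'(\hat{x}(z))-\mathcal{D}'(\bar{x})/\lambda$ for dirty-good changes, where separability and the atmospheric assumption collapse $\mathcal{D}'(\bar{x})/\lambda$ to a single marginal damage common to all $z$---and aggregating with $dF=h_z(z)\,dz$ yields $\left.\partial\mathcal{L}/\partial\kappa\right|_{\kappa=0}=\int_0^{\bar z}\Delta\bm{\mathcal{W}}(z)\,\tau_x'(\hat{x}(z))\,h_z(z)\,dz$, with $\Delta\bm{\mathcal{W}}(z)$ the three-term sum of \eqref{eq:changewelf}. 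Because this must vanish for every admissible $\tau_x$, hence for a rich family of $\tau_x'$, the fundamental lemma of the calculus of variations (using $h_z(z)>0$ in the interior) forces $\Delta\bm{\mathcal{W}}(z)=0$ pointwise, which is precisely \eqref{eq:ParetoTaxSN}.

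The final step is purely algebraic. Setting $\Delta\bm{\mathcal{W}}(z)=0$, I would move $\Delta\bm{\mathcal{W}}^{x\mid z}(z)$ to one side, factor the common $\tfrac{z\varepsilon_z(z)}{1-\mathcal{T}_z'(z)}x_{\text{het}}'(z)$ out of $\Delta\bm{\mathcal{W}}^{z}(z)$ and $\Delta\bm{\mathcal{W}}^{z\rightarrow x}(z)$, and divide through by $\varepsilon_{x\mid z}(z)\,\hat{x}(z)$; substituting $z\,x_{\text{het}}'(z)=\eta_{x,z}^{\text{Taste}}(z)\,\hat{x}(z)$ to introduce the taste elasticity then reproduces \eqref{eq:ParetoTaxSN} exactly. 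Writing that condition as the linear equation $\omega(z)=\text{RR}(z)\big(\mathcal{T}_z'(z)+\omega(z)\,x_{\text{inc}}'(z)\big)$ with $\text{RR}(z)$ as in \eqref{eq:relresponse}, and solving for $\omega(z)$, delivers \eqref{eq:ParetoTaxSN_r}.

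I expect the main obstacle to be the first step rather than the last two. Carefully justifying the first-order cancellation of mechanical and utility effects requires the full two-stage envelope argument, and in particular the fact that the reform to $T_x$ feeds into the stage-1 first-order condition only through the taste component $x_{\text{het}}'(z)$; this is what makes the income response proportional to preference heterogeneity alone. The application of the fundamental lemma must also respect the smoothness restrictions on admissible tax schedules noted in the regularity assumptions, so I would confirm that the twice-differentiable perturbations $\tau_x$ form a class large enough to localize the condition at each interior $z$. By contrast, the aggregate externality feedback is mild here, since the atmospheric, separable structure makes $\mathcal{D}'(\bar{x})/\lambda$ a constant entering every wedge $\omega(z)$ identically.
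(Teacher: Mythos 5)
Your proposal follows essentially the same route as the paper's own proof: distribution-neutral perturbations $\tau_z(z)=-\tau_x(\hat{x}(z))$, the envelope cancellation of mechanical utility and revenue effects, the three behavioral responses valued at $\mathcal{T}_z'(z)$ and the externality wedge $\mathcal{T}_x'(\hat{x}(z))-\mathcal{D}'(\bar{x})/\lambda$, the fundamental lemma of the calculus of variations to localize the condition at each $z$, and the final algebraic rearrangement into \eqref{eq:ParetoTaxSN_r}. The only piece you leave implicit is the proposition's last claim---that a violation yields a Pareto gain---but this follows immediately from your first step, since a nonzero Gateaux derivative along a distribution-neutral direction produces a revenue surplus with no first-order change in any agent's utility, which can then be rebated lump sum, exactly as the paper argues.
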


From Equation \ref{eq:ParetoTaxSN_r}, we can see that $\eta_{x,z}^{\text{Taste}}\left(z\right)$ is a sufficient statistic that determines the sign and scales the size of any deviation from setting the marginal tax on $x$ equal to the marginal damage, $\mathcal{D}^\prime(\overline{x})$. If there is no taste heterogeneity, i.e., $\eta_{x,z}^{\text{Taste}}\left(z\right)=0$, the optimal corrective tax is the standard Pigouvian one: $\mathcal{T}_{x}^{\prime}\left(\hat{x}\left(z\right)\right)=\mathcal{D}^\prime(\bar{x})/\lambda$. In this case, there is no ``tagging'' role for the carbon tax, because tastes do not change along the income distribution. Only if $\eta_{x,z}^{\text{Taste}}\left(z\right)\neq 0$ does the optimal carbon tax differ from the Pigouvian benchmark. We can also see that such deviations may vary across different levels of income, which could motivate means-tested rebates or subsidies for externality-producing goods.

Unlike the standard Pigouvian tax, the optimal tax depends on the magnitude of behavioral responses. The stronger the demand response (i.e.
the larger is $\varepsilon_{x|z}\left(z\right)$), the closer the optimal tax will be to the Pigouvian benchmark. This happens because the planner's ability to successfully exploit
tagging and use deviations from the Pigouvian level of the tax to redistribute depends on how responsive consumption of the tag is to taxation. Conversely, a higher value of the elasticity if income favors a larger deviation from Pigouvian taxation:  If the income tax is highly distortionary, using other instruments such as the carbon tax can be an efficient alternative. For a similar reason, we see that the deviation from Pigouvian taxation is proportional to
the level of the marginal tax rate on income: A low income tax rate implies a small marginal distortion from the income tax in
the first place.

The optimal tax could even have the opposite sign of the marginal damage, at least in theory. In this case, a good causing a negative externality is optimally subsidized, or a good causing a positive externality is taxed. This occurs 
if $\eta_{x,z}^{\text{Taste}}\left(z\right)$ is negative and the magnitude of the optimal deviation is large. That could happen because consumption of $x$ is
a very effective tag (i.e. $\eta_{x,z}^{\text{Taste}}\left(z\right)$
is itself large), because the response to the tax is small ($\varepsilon_{x|z}\left(z\right)$
is small), the Pareto parameter of taxable income ($1-H_{z}\left(z\right)/\left[zh_{z}\left(z\right)\right]$) at $z$ is small,
and/or the social marginal utility is small at incomes above $z$.

The denominator of equation \ref{eq:ParetoTaxSN_r} amplifies positive
deviations from the Pigouvian level, and dampens negative deviations.
Suppose that the taste elasticity is positive, motivating a carbon
tax greater than the Pigouvian level. Raising the carbon tax and lowering
the income tax produces efficiency gains. Incomes increase. If $x_{\text{inc}}^{\prime}(z)>0$,
consumption of the carbon-intensive good increases as well. There
is an uninternalized social \textit{benefit} to that to the extent
that $\mathcal{T}_{x}^{\prime}\left(\hat{x}\left(z\right)\right)>\frac{\mathcal{D}^{\prime}(\bar{x})}{\lambda}$,
amplifying the original gains further. Conversely, suppose that the
taste elasticity is negative. Then we realize efficiency gains as
we lower the carbon tax below the Pigouvian level, the resulting rise
in incomes and carbon-intensive consumption produces an uninternalized
social \textit{cost} because $\mathcal{T}_{x}^{\prime}\left(\hat{x}\left(z\right)\right)<\frac{\mathcal{D}^{\prime}(\bar{x})}{\lambda}$.
This makes the efficiency gains from lowering the carbon tax less
attractive than otherwise.

A different way of seeing this is that using the carbon tax for redistribution
is more beneficial if the income tax is more distortionary. The net
wedge from the income tax is: 
\vspace{-0.3em}\begin{equation}\vspace{-0.3em}
\mathcal{T}_{z}^{\prime}\left(z\right)+\left[\mathcal{T}_{x}^{\prime}\left(\hat{x}\left(z\right)\right)-\frac{\mathcal{D}^{\prime}(\bar{x})}{\lambda}\right]x_{\text{inc}}^{\prime}(z).
\end{equation}
It is clear from this expression that raising $\mathcal{T}_{x}^{\prime}$
increases that wedge, making the income tax more distortionary, and
thus relying on the carbon tax for redistribution more beneficial.
Conversely, progressively lowering $\mathcal{T}_{x}^{\prime}$ reduces
the wedge from income taxation. Mechanically, this means that positive
deviations from the Pigouvian level of the carbon tax are amplified,
and the negative deviations are dampened.

\vspace{-1.5em}
\subsection{Level of the Optimal Taxes}
\vspace{-0.3em}

A very similar approach can be used to derive conditions characterizing the optimal levels of the externality and income taxes. The fact that a reform to each tax schedule is not by itself distribution neutral means that they involve trade-offs between individuals, and therefore marginal social welfare weights. We follow \citet{Diamond1975} and \citet{ferey2024sufficient} in defining augmented welfare weights:
\begin{align}
    g(w) &= \gamma(w) \frac{u_c(c(w),x(w),z(w)\mid w)}{\lambda} + \left[\mathcal{T}_{x}^{\prime}\left(x\left(w\right)\right)-\frac{\mathcal{D}^{\prime}\left(\bar{x}\right)}{\lambda}\right]\left.\frac{\partial \hat{x}(w)}{\partial I}\right|_{z=z(w)}\notag
    \\ &+ \left\{\mathcal{T}_{z}^{\prime}(z(w))+\left[\mathcal{T}_{x(w)}^{\prime}(x\left(w\right))-\frac{\mathcal{D}^{\prime}\left(\bar{x}\right)}{\lambda}\right]x^\prime_{\text{inc}}(z(w))\right\}
    \frac{\partial z(w)}{\partial I},
\end{align}
where $\frac{\partial {x}(w)}{\partial I}$ and $\frac{\partial z(w)}{\partial I}$ are the effects of additional disposable income on consumption of $x$ and income; and $\lambda$ is the multiplier on the government's budget constraint. These augmented weights incorporate both the direct social value of increasing the consumption of an individual of type $w$, and indirect impacts on social welfare from income effects.

The resulting optimal tax conditions for the two schedules are stated in Proposition~\ref{prop:optlevels}. We defer the derivation of these conditions to Appendix \ref{app:proofs}.

\begin{prop}\label{prop:optlevels}
    The optimal tax on the externality-producing good must satisfy Equation \ref{eq:optTax1}:
\vspace{-0.3em}\begin{equation}\vspace{-0.3em}\label{eq:optTax1}
\frac{\mathcal{T}_{x}^{\prime}\left(\hat{x}\left(z\right)\right)-\frac{\mathcal{D}^\prime(\bar{x})}{\lambda}}{1+\mathcal{T}_{x}^{\prime}\left(\hat{x}\left(z\right)\right)}=\eta_{x,z}^{\text{Taste}}\left(z\right)\frac{1}{\varepsilon_{x|z}\left(z\right)}\frac{1-H_{z}\left(z\right)}{zh_{z}\left(z\right)}\left(1-\bar{g}_{+}\left(z\right)\right),
\end{equation}
and the optimal tax on income must satisfy Equation \ref{eq:optTaxZ}:
\vspace{-0.3em}\begin{equation}\vspace{-0.3em}
\frac{\mathcal{T}_{z}^{\prime}\left(z\right)+\left[\mathcal{T}_{x}^{\prime}\left(x\left(\hat{w}\left(z\right)\right)\right)-\frac{\mathcal{D}^{\prime}\left(\bar{x}\right)}{\lambda}\right]x^\prime_{\text{inc}}(z)}{1-\mathcal{T}_{z}^{\prime}\left(z\right)}=\frac{1}{\varepsilon_{z}\left(z\right)}\frac{1-H_{z}\left(z\right)}{zh_{z}\left(z\right)}\left(1-\bar{g}_{+}\left(z\right)\right), \label{eq:optTaxZ}
\end{equation}
where $\bar{g}_{+}\left(z\right)$ is the average marginal social welfare weight of agents with income above $z$.
\end{prop}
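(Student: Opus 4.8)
The plan is to apply the Gateaux-derivative machinery of Section~\ref{subsec:optnonlin}, but now to reforms that are \emph{not} distribution neutral, so that first-order utility effects no longer vanish and the (augmented) welfare weights appear. Optimality requires $\left.\partial\mathcal{L}/\partial\kappa\right|_{\kappa=0}=0$ for \emph{all} $\tau_z(\cdot)$ and $\tau_x(\cdot)$ (equation~\ref{eq:optimalityCond}). Because the Lagrangian derivative is linear and separable in the two perturbation directions, I would split the problem: first set $\tau_x\equiv 0$ and vary $\tau_z$ to obtain the income-tax condition~\ref{eq:optTaxZ}, and then recover the externality-tax condition~\ref{eq:optTax1} by combining with the Pareto-efficiency condition~\ref{eq:ParetoTaxSN_r} already established in Proposition~\ref{prop:uni-1}. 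The fundamental lemma of the calculus of variations localizes each condition to a given income level $z$.

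\textbf{Step 1 (income tax).} With $\tau_x\equiv0$, take $\tau_z$ to be a marginal-rate bump localized near income $z$ in the manner of \citet{saez2001using}, so that all agents above $z$ face an unchanged marginal rate but a higher total liability. I would decompose $\left.\partial\mathcal{L}/\partial\kappa\right|_{\kappa=0}$ into four pieces: (i) the mechanical revenue gain from inframarginal agents above $z$; (ii) their direct first-order utility loss, valued at $\gamma(w)u_c/\lambda$; (iii) the feedback on revenue and on the externality generated by the income effects $\partial z(w)/\partial I$ and $\partial\hat{x}(w)/\partial I$ of those same agents responding to the implicit lump-sum component of the reform; and (iv) the compensated taxable-income response of agents at the band, governed by $\varepsilon_z(z)$ and valued at the net income wedge $\mathcal{T}_z'(z)+[\mathcal{T}_x'(\hat{x}(z))-\mathcal{D}'(\bar{x})/\lambda]x_{\text{inc}}'(z)$. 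The key observation is that pieces (i)--(iii) collapse into $(1-H_z(z))(1-\bar{g}_+(z))$ once the raw Pareto weights are replaced by the augmented weights $g(w)$ defined just before the proposition --- precisely because the second and third terms of $g(w)$ are the income-effect feedbacks in (iii). Balancing this against (iv) and dividing by $h_z(z)$ yields~\ref{eq:optTaxZ}.

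\textbf{Step 2 (externality tax).} Rather than redo a direct commodity-tax perturbation, I would solve~\ref{eq:optTaxZ} for $\frac{1-H_z(z)}{zh_z(z)}(1-\bar{g}_+(z))=\varepsilon_z(z)\cdot\frac{\mathcal{T}_z'(z)+[\mathcal{T}_x'(\hat{x}(z))-\mathcal{D}'(\bar{x})/\lambda]x_{\text{inc}}'(z)}{1-\mathcal{T}_z'(z)}$ and substitute this into the Pareto-efficiency condition~\ref{eq:ParetoTaxSN}. The factor $\varepsilon_z(z)$ cancels the $1/\varepsilon_z(z)$, the net income wedge cancels the identical bracket in~\ref{eq:ParetoTaxSN}, and what remains is exactly~\ref{eq:optTax1}, with the taste elasticity $\eta_{x,z}^{\text{Taste}}(z)$ and the ratio $1/\varepsilon_{x|z}(z)$ inherited from the distribution-neutral analysis. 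This route exploits the fact that distribution-neutral reforms already pin down the \emph{ratio} of the two wedges, so only the \emph{level} supplied by Step~1 is missing.

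\textbf{Main obstacle.} The delicate part is the welfare-weight bookkeeping in Step~1: one must track the income-effect feedback of the implicit lump-sum transfer embedded in the reform on both government revenue (through $\partial z/\partial I$ and the induced change in $\hat{x}$ via $x_{\text{inc}}'$) and the externality (through $\partial\hat{x}/\partial I$), and verify that these are \emph{exactly} the extra terms appended to the mechanical weight $\gamma(w)u_c/\lambda$ in $g(w)$. A related subtlety arises if one instead attempts a direct commodity-tax perturbation for Step~2: for a non-distribution-neutral $\tau_x$ reform the induced change in the effective net-of-income-tax rate of equation~\ref{eq:incfoc2} is $-x_{\text{inc}}'(z)\tau_x'$ rather than $x_{\text{het}}'(z)\tau_x'$, so the taste component $x_{\text{het}}'=\hat{x}'-x_{\text{inc}}'$ (and hence $\eta_{x,z}^{\text{Taste}}$) only emerges \emph{after} the income-tax condition is substituted back in. Routing through the already-proved Pareto condition avoids this re-derivation entirely, which is why I would prefer it.
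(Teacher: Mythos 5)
Your proposal is correct, and it reaches Equation \ref{eq:optTax1} by a route that genuinely differs from the paper's in its second half, so a comparison is worthwhile. Your Step 1 is, in substance, exactly the paper's derivation of \ref{eq:optTaxZ}: the paper writes the Gateaux derivative of the Lagrangian for a general reform (its equation \ref{eq:welfEffect}), in which the augmented weights $g(w)$ absorb precisely the income-effect feedbacks you list under (iii) --- this is why its mechanical term reads $\int\left(1-g\left(w\right)\right)\left[\tau_{z}+\tau_{x}\right]f\,\text{d}w$ --- and then sets $\tau_{x}\equiv0$ and applies the fundamental lemma of the calculus of variations; your localized marginal-rate bump in the style of \citet{saez2001using} is the same computation, and your valuation of the compensated response at the net wedge $\mathcal{T}_{z}^{\prime}\left(z\right)+\left[\mathcal{T}_{x}^{\prime}\left(\hat{x}\left(z\right)\right)-\mathcal{D}^{\prime}\left(\bar{x}\right)/\lambda\right]x_{\text{inc}}^{\prime}(z)$ matches the paper's line (B). Step 2 is where you diverge: the paper perturbs the commodity schedule directly ($\tau_{z}\equiv0$), obtains a first-order condition stated in the distribution of $x$ (equation \ref{eq:optCondForT_x}), changes variables to the income distribution using $h_{z}\left(z\right)=\left[x_{\text{inc}}^{\prime}(z)+x_{\text{het}}^{\prime}(z)\right]h_{x}\left(\hat{x}\left(z\right)\right)$, and only then --- just as you do --- inserts the income-tax condition to cancel the labor-supply wedge and arrive at \ref{eq:FLT_optTaxCond}, i.e.\ \ref{eq:optTax1}. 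Your shortcut of solving \ref{eq:optTaxZ} for the term $\frac{1-H_{z}\left(z\right)}{zh_{z}\left(z\right)}\left(1-\bar{g}_{+}\left(z\right)\right)$ and substituting it into \ref{eq:ParetoTaxSN} is legitimate and the cancellation checks out exactly: the optimum satisfies \ref{eq:optimalityCond} for every reform direction, hence in particular for distribution-neutral directions, so Proposition \ref{prop:uni-1} applies at the optimum, and $\varepsilon_{z}\left(z\right)$ together with the net wedge drops out. The two routes are algebraically equivalent because a distribution-neutral reform is the sum of a commodity-only reform and its compensating income-only reform, and the Lagrangian derivative is linear in the reform direction, so the pair consisting of Proposition \ref{prop:uni-1} and \ref{eq:optTaxZ} encodes the same information as the pair consisting of the commodity-only first-order condition and \ref{eq:optTaxZ}. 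What your route buys is real economy: you skip the $x$-space bookkeeping (the density $h_{x}$, the inverse maps $\tilde{w}\left(x\right)$ and $\tilde{z}\left(x\right)$, and the Jacobian relation above), which is the least transparent part of the paper's argument, and you correctly diagnose why a direct $\tau_{x}$ perturbation is awkward (the induced change in the effective net-of-income-tax rate involves $-x_{\text{inc}}^{\prime}\tau_{x}^{\prime}$, and $x_{\text{het}}^{\prime}$ only appears after substitution). The only cost is that your derivation of \ref{eq:optTax1} is not self-contained --- it leans on Proposition \ref{prop:uni-1} --- but since the paper's own derivation also requires substituting the income-tax condition at the final step, the logical dependency is no heavier.
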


\vspace{-0.3em}
Equation \ref{eq:optTax1} closely resembles the standard ``ABC'' expression for the optimal non-linear income tax for a redistributive social planner \citep{Mirrlees1971}. Part of the intuition is therefore familiar. A higher marginal tax rate at $\hat{x}\left(z\right)$ is optimal if: redistribution to agents with incomes below $z$ is highly valued (i.e. if the average welfare weight above $z$, $\bar{g}_{+}\left(z\right)$, is relatively small); people do not respond much to the carbon tax (i.e. $\varepsilon_{x\mid z}\left(z\right)$ is small); and/or a large amount of revenue can be raised relative to the amount of income distorted by a local increase in the marginal tax rate ($1-H_z(z)$ large relative to $zh_z(z)$).

There are two key differences from the standard optimal tax condition. The first is that there is a Pigouvian correction, which additively adjusts the marginal tax to take into account the marginal damage on others of consuming $x$, $\mathcal{D}^\prime(\bar{x})$. The second is the presence of the taste elasticity, $\eta_{x,z}^{\text{Taste}}\left(z\right)$, which captures the same intuition as above.

Equation \ref{eq:optTaxZ} is almost identical to the usual ``ABC'' formula. A higher marginal tax rate is optimal if redistribution is valued ($\bar{g}_{+}\left(z\right)$ is large), people do not respond much to the income tax ($\varepsilon_{z}\left(z\right)$ small), or a large amount of revenue can be raised relative to the amount of income distorted by a local increase in the marginal tax rate ($1-H_z(z)$ large relative to $zh_z(z)$). The only difference from the standard intuition is the presence of $\left[\mathcal{T}_{x}^{\prime}\left(x\left(\hat{w}\left(z\right)\right)\right)-\mathcal{D}^{\prime}\left(\bar{x}\right)/\lambda\right]x^\prime_{\text{inc}}(z)$ on the left hand side. As before, this term captures the intuition that marginal expenditure on the taxed good, $x$ is socially beneficial if the carbon tax is set higher than the Pigouvian level; or socially costly otherwise.

\vspace{-1.5em}
\subsection{Restricting to Linear Taxation}
\vspace{-0.4em}
 
Some version of a non-linear tax on carbon may well be feasible in practice. Indeed,
there are many real-life examples of non-linear pricing and taxation
schemes for electricity, road usage, and vehicle registration fees.
However, this may not be true in all settings or for all externality-generating
goods. We therefore also consider the optimal linear tax on the externality-generating
good, and the extent to which this deviates from the Pigouvian prescription
of setting it to the level of marginal damage.

If the tax on the externality-generating good must be linear ($\mathcal{T}_{x}^{\prime}\left(\hat{x}\left(z\right)\right)=t_{x}$), the Pareto efficiency condition for the linear tax rate is
similar to the condition for the non-linear corrective tax (Equation
\ref{eq:ParetoTaxSN_r}). However, it averages over all levels of
income $z$ and consumption of $x$. Specifically, the optimal linear tax
is characterized by Proposition 2.
\begin{prop}
\label{prop:lin} If the income and commodity taxes are optimal and
the corrective tax is restricted to being linear, they must satisfy
equation \ref{eq:ParetoTaxSL}: \vspace{-0.2em}
\begin{equation}
\vspace{-0.2em}t_{x}-\frac{\mathcal{D}^{\prime}(\bar{x})}{\lambda}=\frac{\mathbb{E}_{z}\left[\text{RR}\left(z\right)\mathcal{T}_{z}^{\prime}\left(z\right)\right]}{1-\mathbb{E}_{z}\left[\text{RR}\left(z\right)x_{\text{inc}}^{\prime}(z)\right]}\label{eq:ParetoTaxSL}
\end{equation}
where $\mathbb{E}_{z}$ is the expectation over the income distribution,
and: \vspace{-0.2em}
\begin{equation}\vspace{-0.3em}
\vspace{-0.2em}\text{RR}\left(z\right)\equiv\eta_{x,z}^{\text{Taste}}\left(z\right)\frac{\varepsilon_{z}\left(z\right)\hat{x}\left(z\right)}{E_{z}\left[\varepsilon_{x|z}\left(z\right)\hat{x}\left(z\right)\right]}\left(\frac{1+t_{x}}{1-\mathcal{T}_{z}^{\prime}\left(z\right)}\right).
\end{equation}
and $x_{\text{inc}}^{\prime}(z)=\frac{\partial x\left(\hat{w}\left(z\right);z\right)}{\partial z}$
is the causal effect of income on consumption of $x$. If this condition
is not satisfied, there exists a simultaneous reform to both taxes
that yields a Pareto gain. 
\end{prop}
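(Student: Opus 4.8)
The plan is to specialize the distribution-neutral perturbation apparatus of Sections~\ref{subsec:optnonlin}--\ref{sec:welfeff} to the unique reform direction compatible with a linear corrective tax. Linearity forces $\mathcal{T}_x'(\hat x(z))=t_x$ to be constant, so the only admissible commodity-tax perturbation preserving linearity is a uniform shift of the slope, $\tau_x'(\hat x(z))=1$ for \emph{all} $z$ (equivalently $\tau_x(x)=x$, up to an irrelevant lump-sum intercept that cancels under distribution neutrality). Distribution neutrality then pins down the accompanying---possibly non-linear---income-tax reform at every income level through $\tau_z(z)=-\tau_x(\hat x(z))$, hence $\tau_z'(z)=-\hat x'(z)$. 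First I would record that, exactly as in the pointwise case, this reform has no first-order utility effect and no mechanical revenue effect, so by \eqref{eq:efficiencyCond} the only object to evaluate is the behavioral-response term in $\partial\mathcal{L}/\partial\kappa$.

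Next I would assemble the welfare impact by summing the same three channels derived in Section~\ref{sec:welfeff}---the taxable-income response $\Delta\bm{\mathcal{W}}^{z}$, the direct dirty-good response $\Delta\bm{\mathcal{W}}^{x\mid z}$, and the income-induced dirty-good response $\Delta\bm{\mathcal{W}}^{z\to x}$---each evaluated with $\tau_x'(\hat x(z))=1$ and $\mathcal{T}_x'(\hat x(z))=t_x$. The key structural difference from Proposition~\ref{prop:uni-1} is that the perturbation is global rather than localized: we move a single scalar $t_x$, not an entire schedule. Consequently there is no family of independent variations and no appeal to the fundamental lemma; instead the Pareto-efficiency requirement collapses to the single scalar equation obtained by integrating the pointwise welfare effect over the income distribution and setting it to zero, $\mathbb{E}_z\big[\Delta\bm{\mathcal{W}}^{z}(z)+\Delta\bm{\mathcal{W}}^{x\mid z}(z)+\Delta\bm{\mathcal{W}}^{z\to x}(z)\big]=0$. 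This is the sense in which the linear condition ``averages over all levels of income $z$ and consumption of $x$.''

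The final step is algebraic: isolate the wedge $W\equiv t_x-\mathcal{D}'(\bar x)/\lambda$. Both $\Delta\bm{\mathcal{W}}^{x\mid z}$ and $\Delta\bm{\mathcal{W}}^{z\to x}$ are proportional to $W$, whereas $\Delta\bm{\mathcal{W}}^{z}$ is not, so the aggregated condition is linear in $W$ and solves as $W=-\mathbb{E}_z[\Delta\bm{\mathcal{W}}^{z}]\,/\,(\text{coefficient of }W)$. I would then multiply numerator and denominator by $-(1+t_x)/\mathbb{E}_z[\varepsilon_{x\mid z}(z)\hat x(z)]$. Using $\eta^{\text{Taste}}_{x,z}(z)=z\,x'_{\text{het}}(z)/\hat x(z)$ from \eqref{eq:tasteHet}, the numerator becomes $\mathbb{E}_z[\text{RR}(z)\mathcal{T}_z'(z)]$, while the direct-response piece of the denominator collapses to exactly $1$---this is precisely where the normalizing factor $\mathbb{E}_z[\varepsilon_{x\mid z}\hat x]$ in the definition of $\text{RR}(z)$ originates---leaving $1-\mathbb{E}_z[\text{RR}(z)x'_{\text{inc}}(z)]$, which yields \eqref{eq:ParetoTaxSL}.

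The main obstacle is bookkeeping rather than conceptual: one must verify that the global reform's first-order welfare effect really is the $\mathbb{E}_z$-average of the local effects of Section~\ref{sec:welfeff}. This requires checking (i) that each agent's compensated responses depend only on the \emph{local} perturbation $\tau_x'(\hat x(z))$ and the local offset $\tau_z'(z)=-\hat x'(z)$, so that cross-income interactions and level (income-effect) terms are genuinely annihilated by distribution neutrality just as in the localized case; and (ii) that the particular normalization built into $\text{RR}(z)$ is what makes the direct-consumption term aggregate to unity. Carrying the factor $(1+t_x)/\mathbb{E}_z[\varepsilon_{x\mid z}\hat x]$ cleanly through all three terms is the one place where care is needed; everything else is a direct specialization of the non-linear argument.
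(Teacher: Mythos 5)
Your proposal is correct and takes essentially the same route as the paper: the paper's proof likewise considers the distribution-neutral reform $\tau_{x}(x)=\rho\,x$, $\tau_{z}(z)=-\rho\,\hat{x}(z)$ (your case $\rho=1$), notes that the mechanical/utility terms vanish, integrates the resulting behavioral welfare effects over the income distribution, sets the single scalar condition to zero, and rearranges using the normalization $\mathbb{E}_{z}\left[\varepsilon_{x|z}(z)\hat{x}(z)\right]$ built into $\text{RR}(z)$. Your three channels $\Delta\bm{\mathcal{W}}^{z}$, $\Delta\bm{\mathcal{W}}^{x\mid z}$, and $\Delta\bm{\mathcal{W}}^{z\rightarrow x}$ correspond term-by-term to the terms the paper obtains after its cross-terms in $x_{\text{inc}}^{\prime}$ cancel, so the two derivations are the same argument in slightly different bookkeeping.
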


\vspace{-0.3em}
The derivation is straightforward. For a linear dirty goods tax, the
reforms we can consider are restricted to $\tau_{x}\left(\hat{x}\left(z\right)\right)\equiv\hat{x}\left(z\right)$.
Plugging this into the equation for the welfare impact (equation \ref{eq:changewelf})
and setting it to zero gives us equation \ref{eq:ParetoTaxSL}.

\vspace{-1.5em}
\section{Multidimensional Heterogeneity}
\vspace{-0.5em}

\label{sec:multi}

We now extend our analysis to allow for multidimensional heterogeneity.
In Section \ref{sec:model}, we assumed that heterogeneity was unidimensional
and that all agents with the same taxable income $z$ would make identical
consumption choices. In practice, households with the same income
may differ in their preferences for carbon-intensive goods due to
factors such as geographic location, household composition, or idiosyncratic
tastes. Here, we allow for such within-income heterogeneity.

Formally, we assume each agent has some type $\left(w,\theta\right)\in\mathbb{R}_{++}\times\Theta$
and that type $\left(w,\theta\right)$ agents make choices to maximize
some utility function $u\left(c,x,z;w,\theta\right)$. To simplify
the discussion, we restrict our attention here to the case of a linear
tax.

\vspace{-1.2em}
\subsection{Vertically Neutral Reforms}
\vspace{-0.3em}

In the multidimensional case, it is not generally possible to construct
a Pareto gain via a joint perturbation to the income and corrective
taxes, because some individuals at any given level of income may be
disproportionately intense consumers of the externality-generating
good. This type of within-income heterogeneity has been studied empirically by \citet{pizer2019distributional}, \citet{CroninFullertonSexton2019},
and \citet{RauschMetcalfReilly2011}. \citet{sallee2022} discusses
the potential for true Pareto gains in light of this heterogeneity.

It is still possible to construct a reform in the same vein
as Section \ref{sec:model}, which ensures that there is no mechanical
redistribution between people with different levels of income. This
allows us to derive a similar necessary condition for the tax system
to be optimal. Formally, we consider a joint, infinitesimal
reform that is distribution neutral on average at each income
level $z$: what we call a \emph{vertically neutral} tax reform. Any pair of reform functions $\tau_{x}$ and $\tau_{z}$
is vertically neutral if, for each income level $z$, we have:
\vspace{-0.3em}\begin{equation}\vspace{-0.3em}
\tau_{z}(z)=-\mathbb{E}[\tau_{x}(x(w,\theta))|z(w,\theta)=z].
\end{equation}
In the linear tax case, this is simply $\tau_{z}(z)=-\bar{x}\left(z\right)$, where $\bar{x}\left(z\right)\equiv\mathbb{E}[x(w,\theta)|z(w,\theta)=z]$.
The implied reform to marginal income tax rates is then:

\vspace{-2.3em}
\begin{align*}
\tau_{z}'(z) & =-\bar{x}'\left(z\right),\\
 & =-\frac{\text{d}\mathbb{E}[x(w,\theta)|z(w,\theta)=z]}{\text{d}z}.
\end{align*}

\vspace{-2em}
\subsection{Effects of Vertically Neutral Reforms}
\vspace{-0.3em}

The effects of a vertically neutral reform with multidimensional heterogeneity largely parallel the unidimensional case. However, there are new effects that stem from hetereogenity in the impact of a reform on different agents with the same income level.

Let $\hat{x}\left(z,\theta\right)\equiv x\left(\hat{w}\left(z;\theta\right),\theta\right)$
be the dirty good consumption of a type $\theta$ agent that has chosen
taxable income $z$, with $\hat{w}\left(z;\theta\right)$ being the
inverse of $z\left(w,\theta\right)$, holding $\theta$ constant.
Alternatively, we can express the consumption level of this agent as:
\vspace{-0.3em}\begin{equation}\vspace{-0.3em}
\hat{x}\left(z,\theta\right)=x\left(\hat{w}\left(z;\theta\right),\theta;z\right)    
\end{equation}
where $x\left(w,\theta;z\right)$ is the amount of $x$ that type $\left(w,\theta\right)$ would consume if they had taxable income $z$ (whether or not this is the level of income they actual choose).

As in Section \ref{sec:model}, a vertically neutral reform induces three types of compensated behavioral responses. First, a reform which increases the linear commodity tax rate generates the following first-order compensated response of dirty good consumption:
\vspace{-0.3em}\begin{equation}\vspace{-0.3em}
\frac{\partial\hat{x}(z,\theta)}{\partial\kappa}\bigg|_{z,\,\kappa=0}=-\frac{\varepsilon_{x\mid z}(z)\hat{x}(z,\theta)}{1+t_{x}},\label{eq:eq:DN_dirtygood_ownprice_het}
\end{equation}
for agents of type $\theta$ with taxable income $z$.

Next, we turn to the effect of the vertically neutral reform on the first stage of the agent's problem, in which they choose which level of income to earn. A type $\theta$ agent with income $z$ sees their effective marginal income tax rate change by $-\kappa x_{\text{het}}^{\prime}(z;\theta)$, where:
\vspace{-0.3em}\begin{equation}\vspace{-0.3em}
x_{\text{het}}^{\prime}(z;\theta)\equiv\bar{x}'\left(z\right)-x_{\text{inc}}^{\prime}(z;\theta).
\end{equation}
The interpretation of $x_{\text{het}}^{\prime}(z;\theta)$ is very similar to the unidimensional case. Fixing some income level $z$, it is the difference between the cross-sectional relationship between income and the average level of dirty good consumption, and the causal effect of additional income on consumption of $x$ for agents of type $\theta$.

This change in the incentive to earn income leads to a potentially type-specific compensated taxable income response of:
\vspace{-0.3em}\begin{equation}\vspace{-0.3em}
\frac{\partial z\left(\hat{w}\left(z;\theta\right),\theta\right)}{\partial\kappa}\bigg|_{\kappa=0}^{c}=\frac{z\varepsilon_{z}(z;\theta)}{1-\mathcal{T}_{z}'(z)}x_{\text{het}}^{\prime}(z;\theta)\label{eq:DN_inc_ownprice_het}
\end{equation}
where $\varepsilon_{z}(z;\theta)$ is the elasticity of taxable income for an agent of type $(w,\theta)$. Also as before, this change in income induces a change in dirty good consumption:
\vspace{-0.3em}\begin{equation}\vspace{-0.3em}
\frac{\partial\hat{x}(z,\theta)}{\partial z}\bigg|_{\kappa=0}\frac{\partial z\left(\hat{w}\left(z;\theta\right),\theta\right)}{\partial\kappa}\bigg|_{\kappa=0}^{c}=x_{\text{inc}}^{\prime}(z;\theta)\frac{z\varepsilon_{z}(z;\theta)}{1-\mathcal{T}_{z}'(z)}x_{\text{het}}^{\prime}(z;\theta).\label{eq:DN_dirtygood_incEff_het}
\end{equation}

Unlike a distribution neutral reform in the unidimensional case, there is redistribution between agents from this reform, because it is only \emph{vertically} neutral. Unlike the distribution neutral case, some agents are left unable to afford their original bundle, while others can now afford strictly more than before.

To be more precise, the impact of our vertically neutral reform on the virtual income of a type $\theta$ agent with taxable income $z$ is $\left(\hat{x}(z,\theta)-\bar{x}\left(z\right)\right)\kappa$. The impact on social welfare is equal to this change in income, scaled by the marginal social welfare weight of the affected agent, $g(z,\theta)$. These welfare weights are defined analogously to the unidimensional case:
\begin{align*}
g\left(z,\theta\right)&\equiv\gamma(\hat{w}(z,\theta);\theta)\frac{u_{c}(\hat{c}(z,\theta),\hat{x}(z,\theta),z;\hat{w}(z,\theta),\theta)}{\lambda}
+\left[\mathcal{T}_{x}^{\prime}\left(\hat{x}(z,\theta)\right)-\frac{\mathcal{D}^{\prime}\left(\bar{x}\right)}{\lambda}\right]\left.\frac{\partial\hat{x}(z,\theta)}{\partial I}\right|_{z}\\
&+\left\{ \mathcal{T}_{z}^{\prime}(z)+\left[\mathcal{T}_{x(w)}^{\prime}(\hat{x}(z,\theta))-\frac{\mathcal{D}^{\prime}\left(\bar{x}\right)}{\lambda}\right]x_{\text{inc}}^{\prime}(z, \theta)\right\}\frac{\partial z(\hat{w}(z,\theta),\theta)}{\partial I},
\end{align*}
Integrating across all levels of income and all agent types within
each level of income, we get the following expression: 

\vspace{-2em}
\begin{align}
\left.\frac{\partial\mathcal{L}}{\partial\kappa}\right|_{\kappa=0} & =\underbrace{\int\mathbb{E}\left[(x(z;\theta)-\bar{x}(z)) g(z;\theta)\mid z\right]h_{z}(z)\,dz}_{\text{Welfare effect from mechanical redistribution}}\label{eq:multidimWelfareEffectLinear}\\
&+\underbrace{\int\mathbb{E}\left[\Delta\bm{\mathcal{W}}^{x\mid z}(z;\theta)+\Delta\bm{\mathcal{W}}^{z}(z;\theta)+\Delta\bm{\mathcal{W}}^{z\rightarrow x}(z;\theta)\mid z\right]h_{z}(z)\,dz}_{\text{Welfare effects from compensated behavioral responses}},\nonumber 
\end{align}
where the three behavioral impacts conceptually mirror the unidimensional case.

As before, $\Delta\bm{\mathcal{W}}^{z}$ captures the impact on government revenue from taxable income choices. The second term, $\Delta\bm{\mathcal{W}}^{x\mid z}$, is the impact of changes in consumption of $x$ that are directly caused by the tax reform. Finally, $\Delta\bm{\mathcal{W}}^{z\rightarrow x}$ is the effect of further changes in consumption of $x$ that are induced indirectly by changes in income.
\begin{align}
    \Delta\bm{\mathcal{W}}^{z}(z;\theta)&=\mathcal{T}_{z}'(z)\frac{\varepsilon_{z}(z;\theta)z}{1-\mathcal{T}_{z}'(z)} x'_{\text{het}}(z;\theta)\\
    \Delta\bm{\mathcal{W}}^{x\mid z}(z;\theta)&=-\left(t_{x}-\frac{\mathcal{D}'(\bar{x})}{\lambda}\right)\frac{\varepsilon_{x|z}(z;\theta)x(z;\theta)}{1+t_{x}}\\
    \Delta\bm{\mathcal{W}}^{z\rightarrow x}(z;\theta)&=\left(t_{x}-\frac{\mathcal{D}'(\bar{x})}{\lambda}\right) x'_{\text{inc}}(z;\theta)\frac{\varepsilon_{z}(z;\theta)z}{1-\mathcal{T}_{z}'(z)} x'_{\text{het}}(z;\theta)
\end{align}

\vspace{-1.4em}
\subsection{Heterogeneity in Behavioral Responses}\label{sec:hetbeh}
\vspace{-0.3em}

The presence of mechanical redistribution is not the only difference between equation \ref{eq:multidimWelfareEffectLinear} and its unidimensional counterpart (equation \ref{eq:changewelf}). Variation in the behavioral responses of different agents with the same level of income is also important.

To understand this, consider the income tax revenue that results from a marginal increase in the dirty goods tax
rate by $\text{d}t_{x}$, along with the compensating change to the income tax. This is $\mathbb{E}\left[\Delta\bm{\mathcal{W}}^{z}(z;\theta)\mid z\right]$, which is the expectation of the impact at $z$ across agents with different values of $\theta$. Recall that this behavioral response arises from an agent anticipating the impact of a change in the tax rate on $x$ on their incentive to earn additional income.  For an agent of type $\theta$ and income $z$, the net impact of the vertically neutral reform is to change the agent's effective marginal income tax rate by $x'_{\text{het}}(z;\theta)\text{d}t_{x}$. Clearly, this varies across agents with different types to the extent that there is heterogeneity in the causal income effect, $x'_{\text{inc}}(z;\theta)$. In addition, there may be heterogeneity in the elasticity of taxable income, $\varepsilon_{z}(z;\theta)$, across types with the same income. 

Heterogeneity in these behavioral responses means that we cannot simply replace each behavioral parameter with its mean value. Instead, we need to take into account the potential for $\varepsilon_{z}(z;\theta)$ and $x'_{\text{inc}}(z;\theta)$ to be correlated at a given level of income. This leads to the following equation for $\mathbb{E}\left[\Delta\bm{\mathcal{W}}^{z}(z;\theta)\mid z\right]$:
\begin{equation}\label{eq:multicomp1}
    \mathbb{E}\left[\Delta\bm{\mathcal{W}}^{z}(z;\theta)\mid z\right]=\mathcal{T}_{z}'(z)\cdot\frac{\bar{\varepsilon}_{z}(z)z}{1-\mathcal{T}_{z}'(z)}\cdot\bar{x}'_{\text{het}}(z)+\frac{\mathcal{T}_{z}'(z)z}{1-\mathcal{T}_{z}'(z)}\mathbb{C}\left[\varepsilon_{z}(z;\theta),x'_{\text{het}}(z;\theta)|z\right]
\end{equation}
where $\bar{x}'_{\text{het}}(z)$ is the average value of $x'_{\text{het}}(z;\theta)$ for agents at income level $z$ and $\mathbb{C}$ indicates a covariance. Equation \ref{eq:multicomp1} highlights that ignoring heterogeneity and using $\bar{x}'_{\text{het}}(z)$ to measure this effect may be valid, but only if $\varepsilon_{z}(z;\theta)$ and $x'_{\text{inc}}(z;\theta)$ are uncorrelated.

Turning to the expectation of the second effect, $\mathbb{E}\left[\Delta\bm{\mathcal{W}}^{z\rightarrow x}(z;\theta)\mid z\right]$, we can again rewrite this as a function of the average values of each statistic and their covariances:
\begin{align}\mathbb{E}\left[\Delta\mathbf{\mathcal{W}}^{z\rightarrow x}(z;\theta)\mid z\right] & =\left(\frac{t_{x}-\frac{\mathcal{D}^{\prime}(\bar{x})}{\lambda}}{1-\mathcal{T}_{z}^{\prime}(z)}\right)z\bar{\varepsilon}_{z}(z)\bar{x}_{\text{inc}}^{\prime}(z)\bar{x}_{\text{het}}^{\prime}(z)\\  & +\left(\frac{t_{x}-\frac{\mathcal{D}^{\prime}(\bar{x})}{\lambda}}{1-\mathcal{T}_{z}^{\prime}(z)}\right)z\bar{\varepsilon}_{z}(z)\mathbb{C}\left[x_{\text{inc}}^{\prime}(z;\theta),x_{\text{het}}^{\prime}(z;\theta)|z\right]\notag\\  & +\left(\frac{t_{x}-\frac{\mathcal{D}^{\prime}(\bar{x})}{\lambda}}{1-\mathcal{T}_{z}^{\prime}(z)}\right)z\;\mathbb{C}\left[\varepsilon_{z}(z;\theta),x_{\text{inc}}^{\prime}(z;\theta)x_{\text{het}}^{\prime}(z;\theta)|z\right].\notag\end{align}

The first term in this expression is simply the product of mean behavioral responses, which would also be present in a model without multidimensional heterogeneity. 

The second and third terms involve covariances. We can understand these as arising from the two steps that lead from the tax reform to indirect changes in consumption of $x$. First, the income response for an individual of type $(w,\theta)$ is proportional to $\varepsilon_{z}(z;\theta)\times x'_{\text{het}}(z;\theta)$. Second, fixing this income response, the income effect on consumption of $x$ is proportional to $x'_{\text{inc}}(z;\theta)$. Not only is there potential for these two components of the response to be correlated, but the definition of $x'_{\text{het}}(z;\theta)$ in terms of $x'_{\text{inc}}(z;\theta)$ means that there is a \textit{mechanical} relationship between them. In fact, this mechanical connection means that the first covariance collapses to the negative variance of causal income effects at $z$:
\begin{equation}\label{eq:multicomp2}
\mathbb{C}\left[x_{\text{inc}}^{\prime}(z;\theta),x_{\text{het}}^{\prime}(z;\theta)\mid z\right]= -\mathbb{V}\left[x_{\text{inc}}^{\prime}(z;\theta)\mid z\right].
\end{equation}
This is because agents with a larger $x'_{\text{het}}(z;\theta)$ experience a larger change in their effective marginal income tax rates, and also have a smaller consumption response to any given increase in their incomes. 

The second covariance makes a further adjustment to account for the possibility that the product of the causal income and heterogeneity effects $x_{\text{inc}}^{\prime}(z;\theta)x_{\text{het}}^{\prime}(z;\theta)$ may be correlated with the elasticity of taxable income, $\varepsilon_{z}(z;\theta)$. This highlights a useful and tractable special case: If we assume that $\varepsilon_{z}(z;\theta)$ is constant within income levels, then the third term is zero. However, the first covariance remains. This highlights that heterogeneity matters even without correlation between the various statistics. Only with the further restriction that there is no within-income heterogeneity at all, so that $\text{Var}(x'_{\text{inc}}(z;\theta)\mid z)=0$, does equation \ref{eq:multicomp2} become isomorphic to the unidimensional case. In this sense, multidimensional heterogeneity leads to robust attenuation of the indirect consumption effect.

\vspace{-1.4em}
\subsection{Optimal Linear Taxation: The Multidimensional Case}
\vspace{-0.4em}

Our final step is to put together the impacts on social welfare from a vertically neutral reform, and present an optimal tax condition, harnessing the fact that the welfare effect of any vertically neutral reform (equation
\ref{eq:multidimWelfareEffectLinear}) must equal zero at the optimum.

By construction, the reforms we have considered do not redistribute vertically. For simplicity, we also assume throughout the analysis below that welfare weights are uncorrelated with consumption of the externality-generating good conditional on income, at least on average. That is, we assume that:
\vspace{-0.3em}\begin{equation}\vspace{-0.3em}
\mathbb{E}\left[\mathbb{C}\left(g\left(w,\theta\right),x\left(w,\theta\right)|z\left(w,\theta\right)=z\right)\right]=0.
\end{equation}
In practice, the sign of this covariance depends on whether more intense consumers of carbon-intensive goods have higher or lower welfare weights conditional on income. 

Assuming that the effect on welfare of redistribution within income levels is negligible on average, rearranging equation
\ref{eq:multidimWelfareEffectLinear} yields the following characterization of the optimal linear tax on the externality-generating good.
\begin{prop}
\label{prop:multidim} Assume that welfare weights are uncorrelated
with consumption of the externality-generating good conditional on
income: 
\vspace{-0.3em}\begin{equation}\vspace{-0.3em}
\mathbb{E}\left[\mathbb{C}(g(z;\theta),x(z;\theta)\mid z)\right]=0.
\end{equation}
Then the optimal linear tax on the externality-generating good satisfies:
\vspace{-0.3em}\begin{equation}\vspace{-0.3em}
t_{x}-\frac{\mathcal{D}'(\bar{x})}{\lambda}=\frac{\mathbb{E}_{z}\left[\mathbb{E}_{\theta}\left[\text{RR}(z;\theta)\mathcal{T}_{z}'(z)\mid z\right]\right]}{1-\mathbb{E}_{z}\left[\mathbb{E}_{\theta}\left[\text{RR}(z;\theta) x'_{\text{inc}}(z;\theta)\mid z\right]\right]},\label{eq:opt_multidim_general}
\end{equation}
where: 
\vspace{-0.3em}\begin{equation}\vspace{-0.3em}
\text{RR}(z;\theta)\equiv\eta_{x,z}^{\text{Taste}}(z;\theta)\left(\frac{\varepsilon_{z}(z;\theta)\bar{x}(z)}{\mathbb{E}[\varepsilon_{x|z}(z;\theta) x(z;\theta)]}\right)\left(\frac{1+t_{x}}{1-\mathcal{T}_{z}'(z)}\right),
\end{equation}
and the type-specific taste elasticity is: 
\vspace{-0.3em}\begin{equation}\vspace{-0.3em}
\eta_{x,z}^{\text{Taste}}(z;\theta)\equiv\frac{z x'_{\text{het}}(z;\theta)}{\bar{x}(z)}=\frac{z(\bar{x}'(z)-x'_{\text{inc}}(z;\theta))}{\bar{x}(z)}.
\end{equation}
\end{prop}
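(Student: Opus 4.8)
The plan is to read off a single scalar first-order condition from the vertically neutral reform and solve it for the wedge $t_x-\mathcal{D}'(\bar{x})/\lambda$. Because the corrective tax is restricted to be linear, the only admissible perturbation is the one that scales $t_x$ (effectively $\tau_x(x)=x$) together with the compensating income-tax reform $\tau_z(z)=-\bar{x}(z)$ required for vertical neutrality. Hence there is a \emph{single} one-parameter family indexed by $\kappa$, and the optimality requirement is just $\left.\partial\mathcal{L}/\partial\kappa\right|_{\kappa=0}=0$ applied to equation \ref{eq:multidimWelfareEffectLinear}; no appeal to the fundamental lemma of the calculus of variations is needed here.

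First I would dispose of the mechanical-redistribution term. Writing $\bar{g}(z)\equiv\mathbb{E}[g(z;\theta)\mid z]$ and using that $\bar{x}(z)$ is by definition the conditional mean of $x$ at income $z$, the integrand satisfies $\mathbb{E}[(x-\bar{x})g\mid z]=\mathbb{E}[xg\mid z]-\bar{x}(z)\bar{g}(z)=\mathbb{C}(g,x\mid z)$, i.e.\ the $\bar{g}$ pieces cancel and what remains is exactly the conditional covariance. Integrating against $h_z$ and invoking the maintained assumption $\mathbb{E}_z[\mathbb{C}(g,x\mid z)]=0$ kills this entire term, leaving only the integral of the three compensated behavioral effects.

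Next I would substitute the given expressions for $\Delta\bm{\mathcal{W}}^{z}$, $\Delta\bm{\mathcal{W}}^{x\mid z}$, and $\Delta\bm{\mathcal{W}}^{z\rightarrow x}$, take the conditional expectation over $\theta$ at each $z$, and integrate over $z$. Abbreviating the wedge as $\Delta\equiv t_x-\mathcal{D}'(\bar{x})/\lambda$ and collecting the terms proportional to $\Delta$, the condition takes the form $\Delta\,[A-B]=C$, where $A=\int \tfrac{1}{1+t_x}\mathbb{E}[\varepsilon_{x\mid z}x\mid z]\,h_z\,dz=\mathbb{E}[\varepsilon_{x\mid z}x]/(1+t_x)$ comes from $\Delta\bm{\mathcal{W}}^{x\mid z}$, the quantity $C=\int \mathcal{T}_z'(z)\tfrac{z}{1-\mathcal{T}_z'(z)}\mathbb{E}[\varepsilon_z x'_{\text{het}}\mid z]\,h_z\,dz$ comes from $\Delta\bm{\mathcal{W}}^{z}$, and $B=\int \tfrac{z}{1-\mathcal{T}_z'(z)}\mathbb{E}[x'_{\text{inc}}\varepsilon_z x'_{\text{het}}\mid z]\,h_z\,dz$ comes from $\Delta\bm{\mathcal{W}}^{z\rightarrow x}$.

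Finally I would solve $\Delta=C/(A-B)$ and divide numerator and denominator by $A$ (assuming the resulting denominator is nonzero, which is the interiority condition). Using $\eta^{\text{Taste}}_{x,z}(z;\theta)=z\,x'_{\text{het}}(z;\theta)/\bar{x}(z)$ and the stated definition of $\text{RR}(z;\theta)$, a direct check gives $C/A=\mathbb{E}_z[\mathbb{E}_\theta[\text{RR}\,\mathcal{T}_z'\mid z]]$ and $B/A=\mathbb{E}_z[\mathbb{E}_\theta[\text{RR}\,x'_{\text{inc}}\mid z]]$, which reproduces equation \ref{eq:opt_multidim_general}. The steps are essentially bookkeeping; the only genuinely delicate points are (i) recognizing that subtracting the conditional mean $\bar{x}(z)$ turns the mechanical term into a conditional covariance, so the assumption applies cleanly, and (ii) tracking which expectations are conditional on $z$ versus fully unconditional, since the normalizing factor $\mathbb{E}[\varepsilon_{x\mid z}x]$ buried in $\text{RR}$ is an unconditional constant that must be matched precisely to the term $A$ when rescaling.
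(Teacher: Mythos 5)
Your proposal is correct and takes essentially the same route as the paper: the paper's own proof likewise evaluates the single vertically neutral linear reform ($\tau_x(x)=x$, $\tau_z(z)=-\bar{x}(z)$), eliminates the mechanical-redistribution term via the conditional-covariance assumption, and rearranges the resulting scalar first-order condition into equation \ref{eq:opt_multidim_general}, with your $C/A$ and $B/A$ identifications matching the paper's use of $\text{RR}(z;\theta)$ exactly. The only difference is one of starting point, not of substance: the appendix proof re-derives the welfare decomposition from primitives (envelope theorem, Slutsky symmetry, change of variables to $(z,x)$), whereas you take equation \ref{eq:multidimWelfareEffectLinear} and the displayed $\Delta\bm{\mathcal{W}}$ expressions as given, which is legitimate since the proposition is stated after those results in the text.
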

Equation \eqref{eq:opt_multidim_general} closely parallels the optimal
linear tax condition from the unidimensional case (Proposition \ref{prop:lin}),
but with expectations taken over both income levels and $\theta$
types within each income level. The taste elasticity $\eta_{x,z}^{\text{Taste}}(z;\theta)$
now measures how a type $\theta$ individual's causal income effect
diverges from the overall cross-sectional relationship between income
and consumption.

\vspace{-1.3em}
\subsection{Decomposing the Multidimensional Optimal Tax Condition}
\vspace{-0.4em}

To guide our empirical implementation, we can decompose the expectations in equation \ref{eq:opt_multidim_general}
into average effects and within-income covariance terms. This follows the same logic as when we decomposed the individual behavioral responses in Section \ref{sec:hetbeh}. 

First, the numerator of the right-hand side of equation \ref{eq:opt_multidim_general} can be
written as $\mathcal{T}_{z}'(z)/[1-\mathcal{T}_{z}'(z)]\times \textbf{N}$ where:
\vspace{-0.3em}\begin{equation*}\vspace{-0.3em}
\textbf{N} =\mathbb{E}_{z}\left[\bar{\eta}_{x,z}^{\text{Taste}}(z)\frac{\bar{x}(z)\bar{\varepsilon}_{z}(z)}{\mathbb{E}[\varepsilon_{x|z}(z;\theta) x(z;\theta)]}\right] -\mathbb{E}_{z}\left[\frac{z\;\mathbb{C}(\varepsilon_{z}(z;\theta),x'_{\text{inc}}(z;\theta)\mid z)}{\mathbb{E}[\varepsilon_{x|z}(z;\theta) x(z;\theta)]}\right].
\end{equation*}
where $\bar{\eta}_{x,z}^{\text{Taste}}(z)$ and $\bar{\varepsilon}_{z}(z)$
are averages across $\theta$ types at income $z$.

The first term of this equation involves average values of each sufficient statistic at
each income level. To use this first term while ignoring the second would be equivalent to naively applying the unidimensional formula with the average values of each sufficient statistic. The second term is a correction involving the within-income covariance between
the elasticity of taxable income and the causal income effect on $x$.

Section \ref{sec:hetbeh} explains where this covariance term comes from. We now provide complementary intuition. Consider a population where, conditional on $z$, agents with higher $\varepsilon_{z}$ also have higher $x'_{\text{inc}}$ (and thus lower $\eta_{x,z}^{\text{Taste}}$). Such agents have two key characteristics. First, they respond strongly to income tax changes. Second, the tagging value of carbon is low for such agents. Thus, if the planner could condition taxes on $\theta$, they would set a smaller deviation from Pigouvian taxation for these agents. In reality, the planner can only condition on $z$. However, the optimal uniform policy must account for the fact that the most responsive agents are precisely those for whom tagging is least valuable.

Second, the denominator of the right-hand side of equation \ref{eq:opt_multidim_general} can be
written as $1-[1+t_{x}]/[1-\mathcal{T}_{z}'(z)]\times \textbf{D}$ where:
\vspace{-0.3em}\begin{equation*}\vspace{-0.3em}
\textbf{D}  = \mathbb{E}_{z}\left[\bar{\eta}_{x,z}^{\text{Taste}}(z)\frac{\bar{x}(z)\mathbb{E}[x'_{\text{inc}}(z;\theta)\varepsilon_{z}(z;\theta)|z]}{\mathbb{E}[\varepsilon_{x|z}(z;\theta) x(z;\theta)]}\right]
  -\mathbb{E}_{z}\left[\frac{z\;\mathbb{C}(x'_{\text{inc}}(z;\theta),x'_{\text{inc}}(z;\theta)\varepsilon_{z}(z;\theta)|z)}{\mathbb{E}[\varepsilon_{x|z}(z;\theta) x(z;\theta)]}\right].
\end{equation*}

As Section \ref{sec:hetbeh} explains, this covariance comes from correlation between the tagging value of the externality-generating good, and indirectly induced consumption of that good. Suppose that among agents at a given level of income, two things go together: (1) a high tagging value of $x$, as reflected by a low value of $x'_{\text{inc}}$; and (2) a small indirect response of $x$ because $x'_{\text{inc}}$ is small, $\varepsilon_{z}$ is small, or both. Then this covariance term will be positive, which increases the denominator and dampens any optimal deviation from the Pigouvian level of the optimal tax. Intuitively, the planner must take into account that agents for whom tagging is most valuable are precisely those for whom any efficiency gains from a reform lead indirectly to more counterproductive externality-generating consumption.

\vspace{-1.2em}
\subsection{A Tractable Case: Constant Elasticity of Taxable Income}
\vspace{-0.3em}

In our empirical implementation, we assume that the elasticity of taxable income does not vary with $\theta$ at each income level, so that
$\varepsilon_{z}(z;\theta)=\bar{\varepsilon}_{z}(z)$ for all $\theta$.
Proposition \ref{prop:multidim_tractable} presents the optimal tax condition for this empirically implementable case. In addition to the average value of each statistic, the condition features the variance of income effects, $x'_{\text{inc}}$. Holding all other empirical objects constant, this extra term always attenuates any deviation of the optimal tax on $x$ from the Pigouvian level. Only in the special case in which there is no within-income heterogeneity, so that $\text{Var}(x'_{\text{inc}}(z;\theta)\mid z)=0$, does equation \eqref{eq:opt_multidim_tractable} simplify to the unidimensional optimal linear tax condition in Proposition \ref{prop:lin}. 

\smallskip
\begin{corr}
\label{prop:multidim_tractable} If $\varepsilon_{z}(z;\theta)=\bar{\varepsilon}_{z}(z)$
for all $\theta$ at each income level $z$, the optimal linear tax
satisfies: 
\vspace{-0.3em}\begin{equation}\vspace{-0.5em}
t_{x}-\frac{\mathcal{D}'(\bar{x})}{\lambda}=\frac{\mathbb{E}_{z}\left[\overline{\text{RR}}(z)\mathcal{T}_{z}'(z)\right]}{1-\mathbb{E}_{z}\left[\overline{\text{RR}}(z)\bar{x}'_{\text{inc}}(z)\right]+\mathbb{E}_{z}\left[\widetilde{\text{RR}}(z)\mathbb{V}(x'_{\text{inc}}(z;\theta)\mid z)\right]},\label{eq:opt_multidim_tractable}
\end{equation}
where: 

\vspace{-2.5em}
\begin{align}
\overline{\text{RR}}(z) & \equiv\bar{\eta}_{x,z}^{\text{Taste}}(z) \left(\frac{\bar{\varepsilon}_{z}(z)\bar{x}(z)}{\mathbb{E}[\varepsilon_{x|z}(z;\theta) x(z;\theta)]}\right) \left(\frac{1+t_{x}}{1-\mathcal{T}_{z}'(z)}\right),\\
\widetilde{\text{RR}}(z) & \equiv\frac{\bar{\varepsilon}_{z}(z) z}{\mathbb{E}[\varepsilon_{x|z}(z;\theta) x(z;\theta)]}\frac{1+t_{x}}{1-\mathcal{T}_{z}'(z)},
\end{align}
and $\bar{x}'_{\text{inc}}(z)\equiv\mathbb{E}[x'_{\text{inc}}(z;\theta)\mid z]$
is the average causal income effect at $z$. 
\end{corr}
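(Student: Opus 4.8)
The plan is to derive \eqref{eq:opt_multidim_tractable} directly from the general multidimensional optimum in Proposition \ref{prop:multidim} by imposing the restriction $\varepsilon_z(z;\theta)=\bar\varepsilon_z(z)$ and then carrying out the inner $\theta$-expectations that define the numerator and denominator of \eqref{eq:opt_multidim_general}. The key structural observation is that, once the elasticity of taxable income is constant in $\theta$ at each income level, the only object inside $\text{RR}(z;\theta)$ that still varies with $\theta$ is the taste elasticity $\eta_{x,z}^{\text{Taste}}(z;\theta)$ --- and, through its definition $\eta_{x,z}^{\text{Taste}}(z;\theta)=z(\bar x'(z)-x'_{\text{inc}}(z;\theta))/\bar x(z)$, the causal income effect $x'_{\text{inc}}(z;\theta)$. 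Every other factor ($\bar\varepsilon_z(z)$, $\bar x(z)$, $\mathbb{E}[\varepsilon_{x|z}x]$, and the wedge $(1+t_x)/(1-\mathcal{T}_z'(z))$) is a deterministic function of $z$ alone and can be pulled outside the inner expectation.

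First I would handle the numerator. Since $\mathcal{T}_z'(z)$ carries no $\theta$-dependence, $\mathbb{E}_\theta[\text{RR}(z;\theta)\mathcal{T}_z'(z)\mid z]=\mathcal{T}_z'(z)\,\mathbb{E}_\theta[\text{RR}(z;\theta)\mid z]$. Pulling the $z$-only factors out of the inner expectation and using linearity of $\eta_{x,z}^{\text{Taste}}(z;\theta)$ in $x'_{\text{inc}}$ yields $\mathbb{E}_\theta[\text{RR}(z;\theta)\mid z]=\overline{\text{RR}}(z)$, where $\overline{\text{RR}}(z)$ is built from the average taste elasticity $\bar\eta_{x,z}^{\text{Taste}}(z)=z\bar x'_{\text{het}}(z)/\bar x(z)$. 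Taking the outer expectation $\mathbb{E}_z$ reproduces the numerator $\mathbb{E}_z[\overline{\text{RR}}(z)\mathcal{T}_z'(z)]$ of \eqref{eq:opt_multidim_tractable}.

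The substantive step is the denominator, where $\mathbb{E}_\theta[\text{RR}(z;\theta)x'_{\text{inc}}(z;\theta)\mid z]$ picks up the heterogeneity correction. I would substitute the definition of $\eta_{x,z}^{\text{Taste}}$ to write the $\theta$-varying part of $\text{RR}(z;\theta)x'_{\text{inc}}(z;\theta)$ as proportional to $(\bar x'(z)-x'_{\text{inc}}(z;\theta))x'_{\text{inc}}(z;\theta)=\bar x'(z)x'_{\text{inc}}(z;\theta)-\big(x'_{\text{inc}}(z;\theta)\big)^2$. Taking the conditional expectation and applying the variance identity $\mathbb{E}[(x'_{\text{inc}})^2\mid z]=(\bar x'_{\text{inc}}(z))^2+\mathbb{V}(x'_{\text{inc}}(z;\theta)\mid z)$ splits this into a ``mean-product'' piece equal to $\overline{\text{RR}}(z)\bar x'_{\text{inc}}(z)$ plus a correction $-\widetilde{\text{RR}}(z)\mathbb{V}(x'_{\text{inc}}(z;\theta)\mid z)$, where the $\bar x(z)$ in $\overline{\text{RR}}$ cancels against the $1/\bar x(z)$ in $\eta_{x,z}^{\text{Taste}}$ to leave exactly the factor of $z$ appearing in $\widetilde{\text{RR}}(z)$. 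Substituting $\mathbb{E}_\theta[\text{RR}\,x'_{\text{inc}}\mid z]=\overline{\text{RR}}(z)\bar x'_{\text{inc}}(z)-\widetilde{\text{RR}}(z)\mathbb{V}(x'_{\text{inc}}\mid z)$ into $1-\mathbb{E}_z[\cdot]$ flips the sign of the variance term and delivers the denominator of \eqref{eq:opt_multidim_tractable}.

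The calculation is essentially bookkeeping, so the only real obstacle is keeping the nested expectations and the $z$-only versus $\theta$-varying factors straight; in particular, one must recognize that the quadratic $-(x'_{\text{inc}})^2$ arising from the product $\eta_{x,z}^{\text{Taste}}x'_{\text{inc}}$ is precisely the mechanical source of the variance term --- the same $-\mathbb{V}(x'_{\text{inc}}\mid z)$ identified in \eqref{eq:multicomp2} of Section \ref{sec:hetbeh}. Finally, setting $\mathbb{V}(x'_{\text{inc}}(z;\theta)\mid z)=0$ annihilates the extra term and collapses $\overline{\text{RR}}(z)$ and $\bar x'_{\text{inc}}(z)$ to their unidimensional analogues, recovering Proposition \ref{prop:lin} as claimed.
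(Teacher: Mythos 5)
Your proposal is correct and follows essentially the same route as the paper's own proof: starting from Proposition \ref{prop:multidim}, imposing $\varepsilon_{z}(z;\theta)=\bar{\varepsilon}_{z}(z)$, expanding $x'_{\text{het}}(z;\theta)\,x'_{\text{inc}}(z;\theta)=(\bar{x}'(z)-x'_{\text{inc}}(z;\theta))\,x'_{\text{inc}}(z;\theta)$, and applying the identity $\mathbb{E}[(x'_{\text{inc}})^{2}\mid z]=(\bar{x}'_{\text{inc}}(z))^{2}+\mathbb{V}(x'_{\text{inc}}\mid z)$ to produce the $\widetilde{\text{RR}}(z)\mathbb{V}(\cdot)$ correction in the denominator. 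The paper works with the conditional expectation $\mathbb{E}[x'_{\text{het}}x'_{\text{inc}}\varepsilon_{z}\mid z]$ rather than $\mathbb{E}_{\theta}[\text{RR}\,x'_{\text{inc}}\mid z]$, but these differ only by a deterministic function of $z$, so the algebra is identical.
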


Corollary \ref{prop:multidim_tractable} identifies the sufficient statistics that are needed to characterize optimal policy with multidimensional heterogeneity, subject to our restriction that the elasticity of taxable income is constant. First, we require the average value of several statistics at each income level: the cross-sectional slope, $\bar{x}'(z)$; the causal income effect, $\bar{x}'_{\text{inc}}(z)$; the elasticity of taxable income, $\bar{\varepsilon}_{z}(z)$; and the weighted average demand elasticity, $\varepsilon_{x|z}(z;\theta)$. Second, we need to measure the within-income variance in income effects, $\text{Var}(x'_{\text{inc}}(z;\theta)\mid z)$.

\vspace{-1.5em}
\section{Empirical Application}\label{sec:empirics}

\vspace{-0.5em}
Equations \ref{eq:ParetoTaxSN} and \ref{eq:ParetoTaxSL} allow us to quantify the optimal carbon tax for the United States, taking into account distributional concerns.  For a given marginal social cost of carbon, we show how the Pareto-efficient carbon tax may deviate from the standard Pigouvian correction under both nonlinear and linear commodity tax systems, presenting estimates for a range of different assumptions regarding the compensated elasticities of taxable income and dirty good consumption. We initially consider the simpler unidimensional case, and then we incorporate multidimensional heterogeneity.

We start with our empirical quantification of taste heterogeneity, $\eta_{x,z}^{\text{Taste}}\left(z\right)$. This is the focus of Section \ref{subsec:decompositon2}. In Section \ref{sec:otherasn}, we then describe how we estimate and calibrate the other key objects which jointly determine the optimal carbon tax.

\vspace{-1.2em}
\subsection{Quantifying Taste Heterogeneity}\label{subsec:decompositon2}
\vspace{-0.3em}

To estimate the extent of taste heterogeneity, we use the decomposition of the cross-sectional relationship provided by equation \ref{eq:decomp}. This requires an estimate of $\hat{x}'(z)$, which we obtain from cross-sectional variation in the consumer expenditure survey; and an estimate of $x'_{\text{inc}}$ at each of many different points in the income distribution, which we obtain by running our own survey.

\vspace{-1em}
\subsubsection{Measuring Cross-Sectional Variation Using The CEX}\label{subsubsec:cex}

\vspace{-0.5em}

We first measure the cross-sectional relationship between taxable income and consumption of carbon-intensive goods, $\hat{x}'(z)$. We do this using the Consumer Expenditure Survey (CEX), which provides quarterly expenditure data from 2019 to 2023. Specifically, we use information from the public use microdata (PUMD) files of the Consumer Expenditure Surveys provided by the United States Bureau of Labour Statistics. We rely on the Interview survey, a rotating panel survey with approximately 6,000 observations per quarter containing detailed data on expenditures, income, and demographic characteristics of consumers in the United States. Each observation is at the household level.\footnote{See detailed definition of a \enquote{Consumer Unit} at: \href{https://www.bls.gov/cex/csxfaqs.htm\#qc1}{https://www.bls.gov/cex/csxfaqs.htm\#qc1}.} Our sample is composed of 113,108 quarterly observations across 46,918 households. Because of the rotating nature of the panel, households may be sampled from once up to 4 consecutive times (quarters) across the sample period (2019-2023). 

We compute expenditure on carbon-intensive goods as a percentage of total consumption for each quarterly household observation. We focus on two carbon-intensive expenditure categories: (i) an aggregate dirty good, and (ii) gasoline and other motor oils. Aggregate dirty good consumption is defined as the sum of electricity expenditure, gasoline and motor oil expenditure, all heating fuels expenditure, and natural gas expenditure. Appendix \ref{app:empirical_avg_exp_co2} shows the share of consumption accounted for by these and other categories.

 \begin{figure}[htbp]
  \centering
  \subfloat[Aggregate Dirty Good]{\includegraphics[width=0.88\textwidth, trim=1pt 1pt 1pt 1pt, clip]{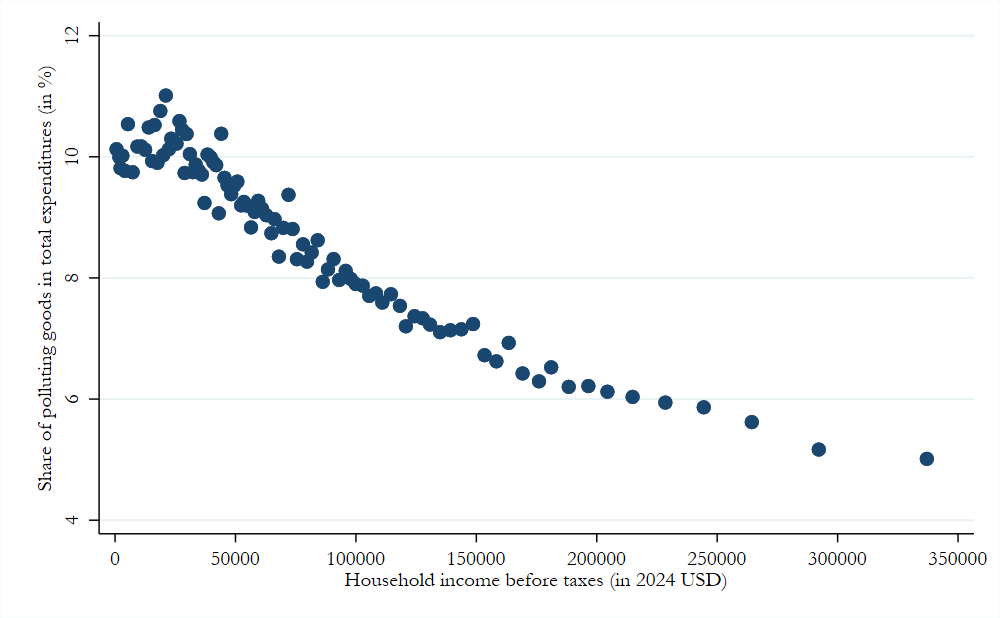}\label{fig:cross_panela_dirtygood}}
  \\
  \subfloat[Gasoline]{\includegraphics[width=0.88\textwidth, trim=1pt 1pt 1pt 1pt, clip]{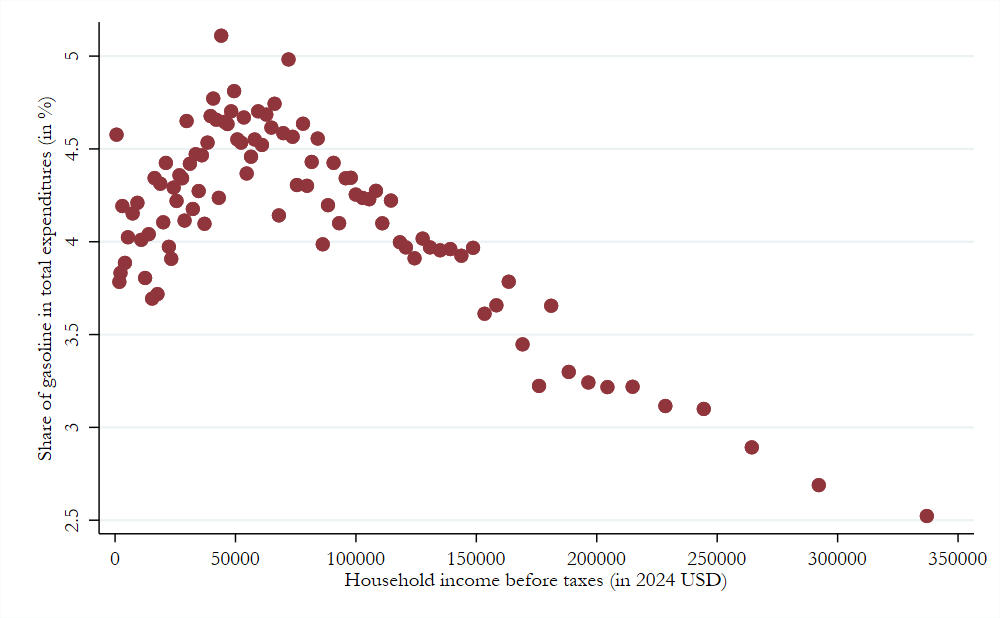}\label{fig:cross_panelb_gaso}}
  \caption{Expenditure shares of carbon-intensive goods across the income distribution}
       \begin{minipage}{\textwidth}
    \footnotesize \textbf{Note:} 
    These binscatter plots show (i) the average share (in \%) of expenditure on polluting goods as a fraction of total consumption, and (ii) the corresponding average household taxable income, for each income percentile using data from the 2018-2023 Consumer Expenditure Survey quarterly waves. The \enquote{dirty good} expenditure category includes gasoline, electricity, natural gas, and heating fuels. Household taxable incomes were converted to 2024 USD using the average annual Consumer Price Index in the US. \end{minipage}\label{fig:cross_section_expshare_income}
\end{figure}

To smooth the cross-sectional relationship between household taxable income and consumption of carbon-intensive goods, we assign each household to a year-specific income percentile (because taxable income is measured annually). We then average measures of consumption and income across quarterly observations of households within each income percentile bin, using CEX-provided survey weights to ensure the sample is representative of the US population. This binning procedure abstracts away from any within-income heterogeneity in dirty good consumption. In Section \ref{subsec:multidim_empirics}, we relax this restriction. 

Panel (a) of Figure \ref{fig:cross_section_expshare_income} shows the average \emph{dirty good} expenditure as a share (in percent) of total expenditures at each point in the income distribution. Panel (b) shows the same measure for \emph{gasoline} only.\footnote{Using total expenditure rather than annual income as the denominator is preferable when studying the distributional burden of excise taxes. As \cite{poterba1989lifetime} explains, this is because total expenditures are typically less volatile than income and thus serve as a better proxy for lifetime income. For our analyses, using total expenditures as the denominator also has the additional benefit of abstracting away from differences in the saving rate across incomes. The main takeaway from Figure \ref{fig:cross_section_expshare_income} remains the same if we compute shares of income rather than expenditure.} Poorer households spend a much larger share of their consumption budget on carbon-intensive goods than do higher-income households. For the aggregate dirty good displayed in panel (a), the share ranges from a high of 11\% at the bottom of the income distribution to a low of approximately 5\% at the very top. Focusing on gasoline only, displayed in panel (b), the expenditure share ranges from 4 to 5\% at the bottom of the income distribution to 2.5\% at the top.  

While consumption \emph{shares} of carbon-intensive goods are generally \emph{decreasing} with income (see Figure \ref{fig:cross_section_expshare_income}), it is important to note that \emph{levels} are \emph{increasing} with income. Figure \ref{fig:cross_section_cons_income} shows the cross-sectional relationship between quarterly consumption expenditure (in dollars) of carbon-intensive goods and household taxable income. The relationship is generally positive but flattens out at higher levels of household taxable income. 

\begin{figure}[htbp]
  \centering
  \subfloat[Aggregate Dirty Good]{\includegraphics[width=0.88\textwidth, trim=1pt 1pt 1pt 1pt, clip]{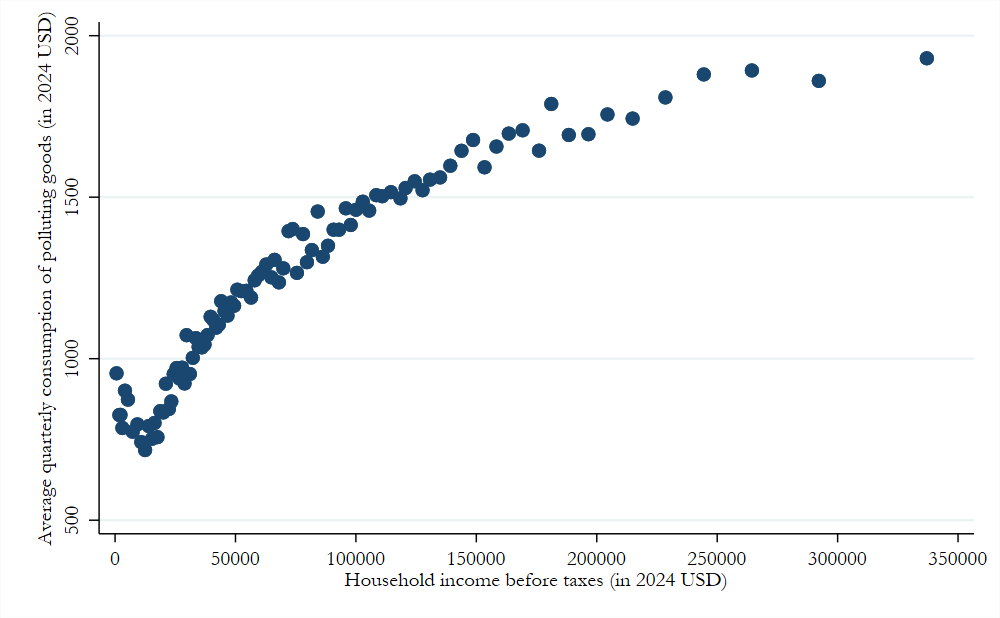}\label{fig:f1}}
  \\
  \subfloat[Gasoline]{\includegraphics[width=0.88\textwidth, trim=1pt 1pt 1pt 1pt, clip]{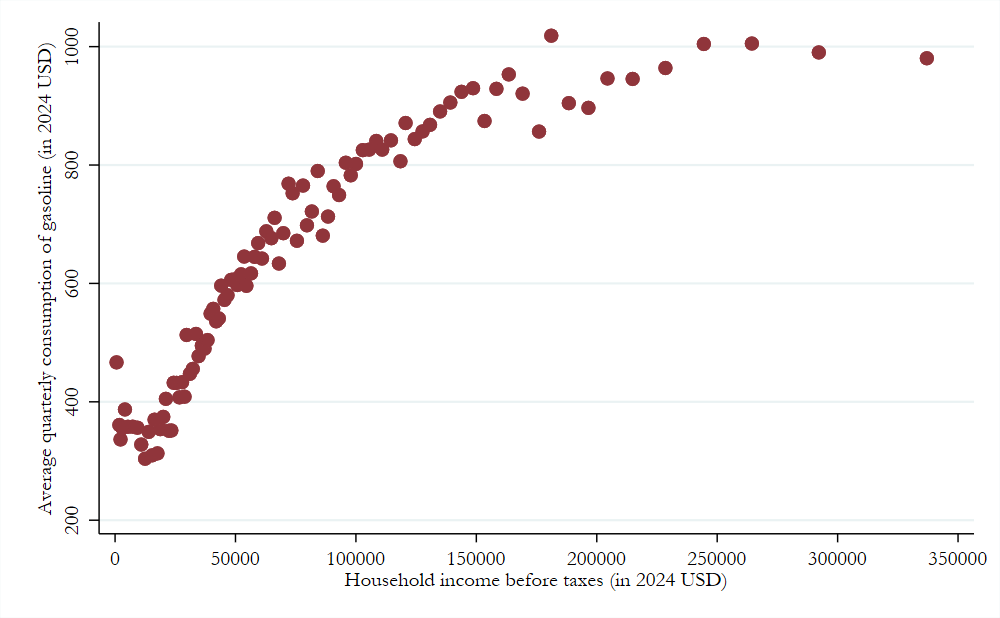}\label{fig:f2}}
  \caption{Quarterly consumption expenditure of carbon-intensive goods across the income distribution}
   \begin{minipage}{\textwidth}
    \footnotesize \textbf{Note:} 
    These binscatter plots show (i) the average quarterly expenditures on polluting goods, and (ii) the corresponding average household taxable income, for each income percentile using data from the 2018-2023 Consumer Expenditure Survey quarterly waves. The \enquote{dirty good} expenditure category includes gasoline, electricity, natural gas, and heating fuels. Household consumption expenditures and taxable incomes were converted to 2024 USD using the average annual Consumer Price Index in the US. \end{minipage}\label{fig:cross_section_cons_income}
\end{figure}

\vspace{-1em}

\subsubsection{Measuring Income Effects Using A Hypothetical Choice Survey}\label{subsubsec:survey}

\vspace{-0.6em}

Next, we measure the causal response of additional income, $x'_{\text{inc}}$. This requires a survey of our own, which measures how each household's consumption would respond to a small amount of additional income. We designed and ran the survey using \emph{Qualtrics}, and recruited participants on \emph{Prolific Academic}.\footnote{The survey and our analysis of the results were pre-registered on the Open Science Framework (OSF) platform in January, 2025: \url{https://osf.io/swqze}.} The sample is comprised of 1,448 US residents aged 25 to 54, and was designed to be representative of the US population along the targeted dimensions of age, gender, and race (see Appendix Table \ref{tab:demographics_comparison_survey_census2023}).\footnote{The survey was conducted in two waves, the first in February 2025, and the second in July 2025.}

The primary purpose of the survey was to elicit how each household would respond to an unexpected increase in household (after-tax) income. Respondents were asked to allocate a \$1,000 permanent increase in household income across an exhaustive list of expenditure categories (mirroring those from the CEX), and a residual savings bucket.\footnote{See Figure \ref{fig:q_causal_screenshot} for a screenshot of the online survey page.} This elicitation procedure provides us with each respondent's marginal propensity to consume (MPC) out of a permanent increase in household income, and its breakdown between carbon-intensive goods (electricity, natural gas, heating fuels, and gasoline) and other goods.  We also collected information about respondents' \textit{status quo} household income (both before- and after-tax), and the breakdown of these expenditures.

\underline{Response Quality.}{ } We took several steps to ensure the quality of respondents, including screening out respondents based on (i) past studies' rejection rates, and (ii) the number of hours spent on the platform. As stated in our pre-registration plan, we also filter out participants who fail attention checks, speedsters, extreme outliers, and respondents who provide inconsistent or nonsensical information on before- vs. after-tax income or marginal propensity to consume (see Appendix Table \ref{tab:sample_exclusions} for a breakdown of sample exclusions).

In addition, we have a powerful way to validate the representativeness of our survey sample. Figure \ref{fig:cross_section_survey_statusquo_expshare_income} plots the relationship between our respondents' reported consumption expenditures and their incomes. This relationship can be compared to that in the Consumer Expenditure Survey (Figure \ref{fig:cross_section_expshare_income}). The figures align remarkably closely, suggesting that our responses are of high quality and our respondents quite representative.

\underline{Marginal Consumption Patterns In The Survey.}{ } Figure \ref{fig:causal_cons_response_expshare_income} shows a binscatter plot of the relationship between the share of carbon-intensive goods in total marginal consumption, and household taxable income. We also include corresponding global polynomial fits (of order 2). Panel (a) shows the aggregate results for all of the carbon-intensive goods, while panel (b) shows the results for gasoline. In both cases, we see a declining marginal propensity to consume these goods as income rises.

\begin{figure}[htbp]
  \centering
  \subfloat[Aggregate Dirty Good]{\includegraphics[width=0.69\textwidth, trim=1pt 1pt 1pt 1pt, clip]{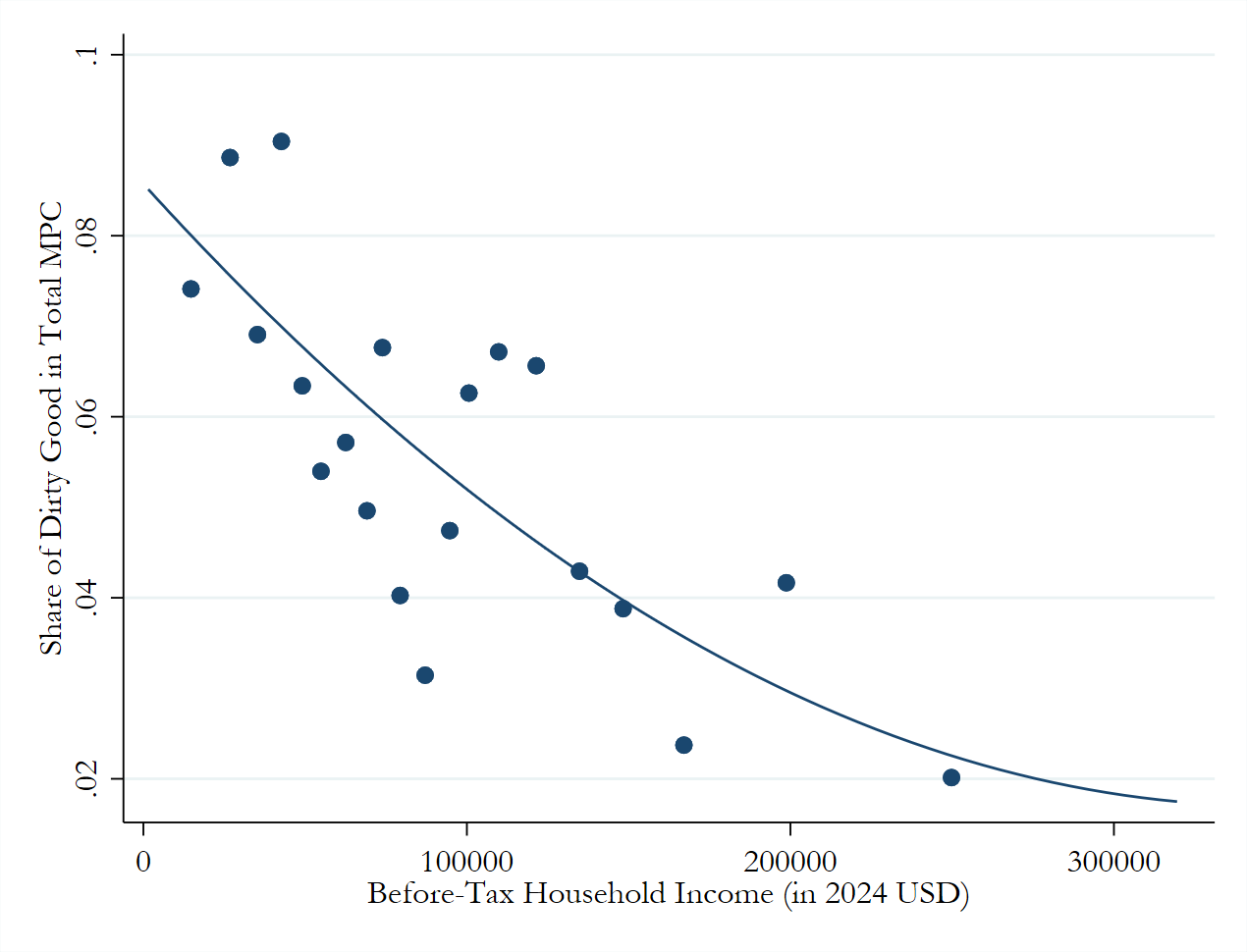}\label{fig:causal1}}
  \\[-0.5em]
  \subfloat[Gasoline]{\includegraphics[width=0.69\textwidth, trim=1pt 1pt 1pt 1pt, clip]{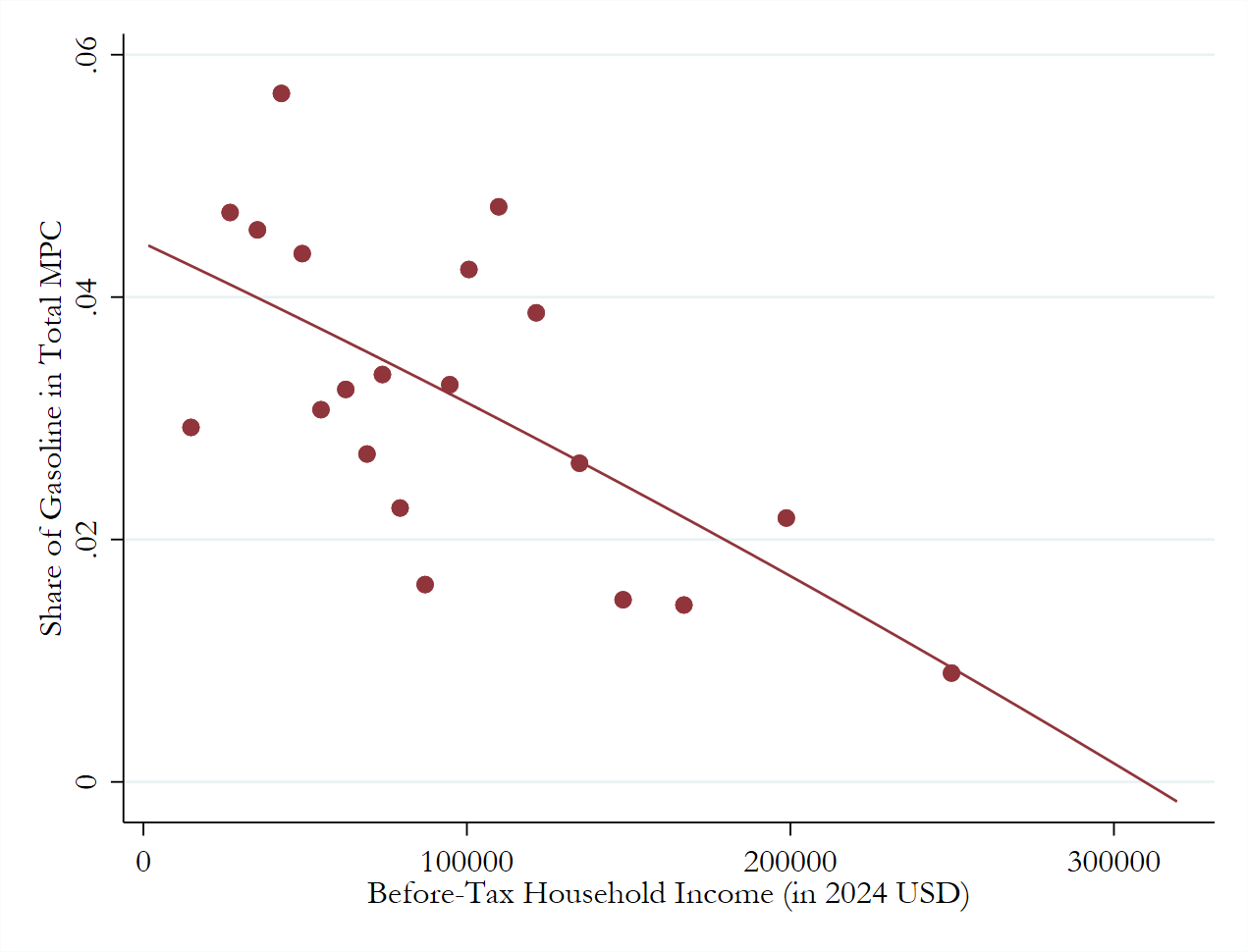}\label{fig:causal2}}
  \caption{Share of carbon-intensive goods in total MPC across the income distribution}
   \begin{minipage}{\textwidth}
    \footnotesize \textbf{Note:} 
    These binscatter plots show (i) the average share of expenditure on polluting goods as a fraction of total marginal propensity to consume (MPC) out of a small and unexpected \$1,000 increase in after-tax household income, and (ii) the corresponding average household taxable income, averaged across 20 bins of household taxable income (selected with \textit{binsreg} \citep{cattaneo2024binscatter}), and (iii) a global polynomial of degree 2 estimated to fit the relationship between these two variables, using data from a custom representative survey of the US population ($N=1,448$). The \enquote{dirty good} expenditure category includes gasoline, electricity, natural gas, and heating fuels. \end{minipage}\label{fig:causal_cons_response_expshare_income}
\end{figure}

\underline{Estimating Causal Income Effects.}{ } Our estimates for $x'_{\text{inc}}$ build on our survey respondents' marginal consumption choices in response to a hypothetical increase in their incomes (see Figure \ref{fig:causal_cons_response_expshare_income}). However, we need to make a few additional assumptions to estimate $x'_{\text{inc}}(z)$ based on the consumption response to a small permanent increase in after-tax income.

First, note that in our theoretical framework, the response of $x$ to a pure income shock (the quantity we observe in our custom survey) is given by: 
\vspace{-0.3em}\begin{equation}\vspace{-0.3em}
     \frac{d x\left(\hat{w}\left(z\right);z\right)}{d I}
= \eta_{x\mid z}(z)
+ \frac{\partial x\left(\hat{w}\left(z\right);z\right)}{\partial z}\,\eta_{z}(z),
 \end{equation}
 where $\eta_{z}$ and $\eta_{x\mid z}$ are income effects due to a reform that lowers the tax burden slightly, while leaving both marginal tax rates unchanged. Imposing weak separability between income and consumption choices allows us to make more progress here.\footnote{There is limited evidence that gasoline consumption is complementary with leisure \citep{WilliamsWest2007}, although this finding is not universal \citep{BeznoskaComp}. If this is true for overall carbon intensity, our estimates understate the optimal tax on carbon-intensive goods. See \citet{Kaplow2012} for further discussion.} 
In this case, the causal effect of \(z\) on \(x\) only occurs via the change in disposable income $I$, i.e. $\partial x /\partial z = \partial I /\partial z \times \left.\partial x/\partial I\right|_{z=z(w)} = (1- T'_z) \times \left.\partial x/\partial I\right|_{z=z(w)}$, and we can rewrite the income effect on consumption of the dirty good holding taxable income fixed as:
\vspace{-0.3em}\begin{equation}\vspace{-0.3em}
    \eta_{x \mid z}(z) 
= 
\frac{1}{1 - T'_z\bigl(z\bigr)} 
\,\frac{\partial x\left(\hat{w}\left(z\right);z\right)}{\partial z}.
\end{equation}
This means the total impact of a pure income shock on consumption of \(x\) can be written as
\vspace{-0.3em}\begin{equation}\vspace{-0.3em}
    \frac{d x\left(\hat{w}\left(z\right);z\right)}{dI}
=
\Bigl(\frac{1}{1 - T'_z\bigl(z\bigr)} + \eta_{z}(z)\Bigr)\,
\frac{\partial x\left(\hat{w}\left(z\right);z\right)}{\partial z}.
\end{equation}
If we further assume that the income effect on taxable income is negligible ($\eta_{z}\approx 0$), then 
\vspace{-0.3em}\begin{equation}\vspace{-0.3em}
        x'_{\text{inc}}(z) \equiv \frac{\partial x\left(\hat{w}\left(z\right);z\right)}{\partial z} = \left(1 - T'_z(z)\right) \frac{d x\left(\hat{w}\left(z\right);z\right)}{dI}.
\end{equation}

\vspace{-1.4em}
\subsubsection{Abstracting Away From Savings}\label{subsubsec:savings}

\vspace{-0.5em}

Our model is a static one, and thus abstracts away from the savings decision (i.e., from consumption across multiple time periods). However, in the data, households do save a fraction of their disposable income. Therefore, we make the following two adjustments to abstract away from the savings decision. First, when constructing $\hat{x}^\prime (z)$ with the CEX data, we use \emph{shares} (as in Figure~\ref{fig:cross_section_expshare_income}) rather than \emph{levels} at each income percentile, and multiply these shares by the level of after-tax income. This results in a \enquote{savings-adjusted} level of consumption of the dirty good $x$.\footnote{This approach to handling the savings decision effectively assumes that consumption shares remain constant over time, which is true with a utility function of the form:
$U_i = u^i\left(f_i(c_1, x_1) + \beta f_i(c_2, x_2), z\right)$ where $f_i(c_t, x_t)$ is homothetic within each period $t \in \{1,2\}$ and identical across periods. Note also that in order to have a direct dependence of $z$ on within-period consumption preferences, we need to allow the homothetic preference parameters to depend on $z$ (on top of their type dependence).} Second, we use \emph{shares} of total marginal expenditure (as depicted in Figure \ref{fig:causal_cons_response_expshare_income}), i.e. the additional amount spent on carbon-intensive goods as a fraction of the total additional amount spent, rather than the dollar amount spent on carbon-intensive goods divided by \$1,000.

\vspace{-1.2em}
\subsubsection{Combining Cross-Sectional and Causal Variation To Estimate Taste Heterogeneity}\label{subsubsec:combining}

\vspace{-0.5em}

Our final step is to smooth the data over the income distribution, and calculate $x'_{\text{het}}(z)$ as the difference between the fitted $\hat{x}'(z)$ and the fitted $x'_{\text{inc}}(z)$. We first convert the discrete distribution (at each percentile of taxable income) of (i) taxable income $z$, (ii) savings-adjusted consumption of the dirty good $x$, and (iii) shares of total MPCs using the global polynomials from Figure \ref{fig:causal_cons_response_expshare_income} rescaled by the net-of-tax (income) rate to obtain  $x'_{\text{inc}}(z)$ into a continuous representation by constructing a grid of equally log-spaced points. We then fit smoothing splines to the log-transformed variables, and convert them back to the original levels. Finally, we compute the derivative of the fitted cross-sectional relationship $\hat{x}(z)$ to recover $\hat{x}'(z)$ to which we can subtract the smoothed $x'_{\text{inc}}(z)$ to obtain $x'_{\text{het}}(z)$.

The results are depicted in Figure \ref{fig:decomposition_dirty_good}. For brevity, we restrict the remainder of our empirical analysis to the aggregate dirty good. However, we present results for each components Appendix \ref{app:empirical}, which are very similar in their implications.
\begin{figure}[!h]
    \centering
    \includegraphics[width=0.7\linewidth]{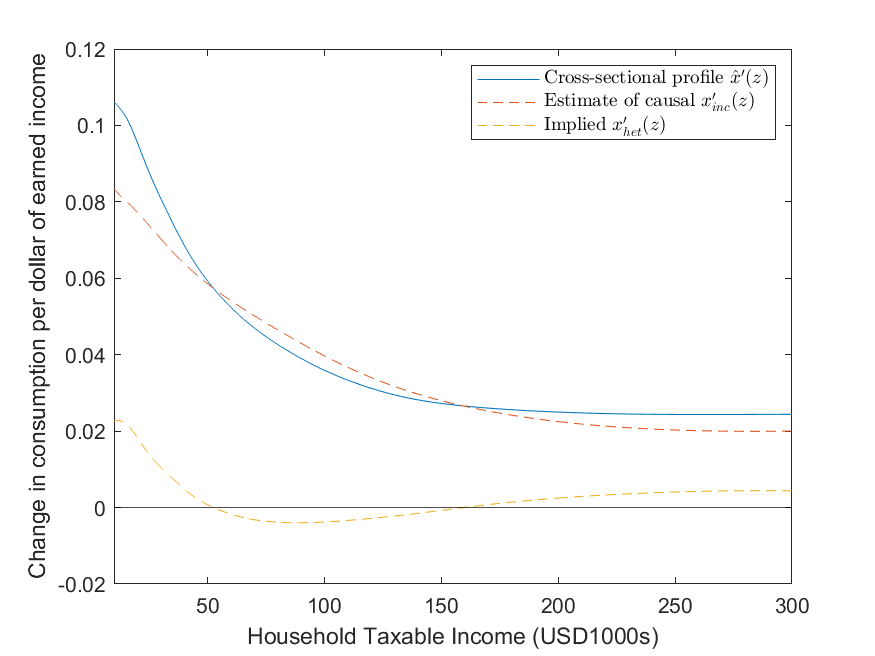}
    \caption{Decomposition of the cross-sectional profile $\hat x'(z)$ - Aggregate Dirty Good}
    \begin{minipage}{\textwidth}
    \footnotesize \textbf{Note:} 
    This figure shows how we decompose the cross-sectional relationship between income and carbon-intensive good consumption. The blue line is a smoothed estimate of the cross-sectional slope, $\hat{x}'(z)$ at each level of income. The dashed red line shows estimates of the causal effect of income on expenditure on carbon-intensive goods. The yellow dashed line subtracts the income effect from the cross-section to obtain the part of the relationship between $x$ and $z$ that comes from preference heterogeneity.\end{minipage}\vspace{-0.5em}
    \label{fig:decomposition_dirty_good}
\end{figure}

The blue curve in Figure \ref{fig:decomposition_dirty_good} depicts the cross-sectional profile $\hat{x}'(z)$. It is positive, decreasing with income, but quickly flattens out for household taxable incomes between \$100,000 and \$300,000. This illustrates the increasing and concave relationship between \emph{levels} of consumption of carbon-intensive goods and taxable income. The dashed orange curve shows our fitted estimate of $x'_{\text{inc}}(z)$, it is decreasing throughout the income distribution. The dashed yellow curve shows the implied (residual) $x'_{\text{het}}(z)$ which determines the sign of $\eta_{x,z}^{\text{Taste}}\left(z\right)$, and in turn the shape of our optimal and Pareto-efficient nonlinear tax schedules. Our estimates of $\hat{x}'(z)$ and $x'_{\text{inc}}(z)$ are remarkably close overall. This leads to estimates of $x'_{\text{het}}(z)$ that are close to zero, which implies that differences in taste contribute very little to variation in carbon intensity over the income distribution.

It is important to note that there was no reason \textit{a priori} to expect our estimate for the causal effect of income to closely match the cross-sectional profile across the income distribution. The first is computed as a slope using expenditures (i.e., levels) \emph{across} households from the large CEX surveys, while the second is an individual-level response  (i.e., a change) from our custom hypothetical survey. Their similarity is thus striking.

For direct application in our quantitative analysis, Figure \ref{fig:taste_heterogeneity_elasticity} translates this $x'_{\text{het}}(z)$ into the taste elasticity that appears in Equation \ref{eq:optTax1}. We see that $\eta_{x,z}^{\text{Taste}}\left(z\right)$ is positive for households with low incomes (below \$52,000), negative for households with taxable income between \$52,000 and \$160,000, and then being positive and increasing at the top of the income distribution. This schedule, together with the compensated elasticities of taxable income and dirty good consumption, the marginal damage, and the marginal income tax schedule, determines the shape of the Pareto efficient nonlinear tax schedule on consumption of the carbon-intensive good (see equation \ref{eq:ParetoTaxSN}).

\begin{figure}[htbp]
    \centering
    \includegraphics[width=0.7\linewidth]{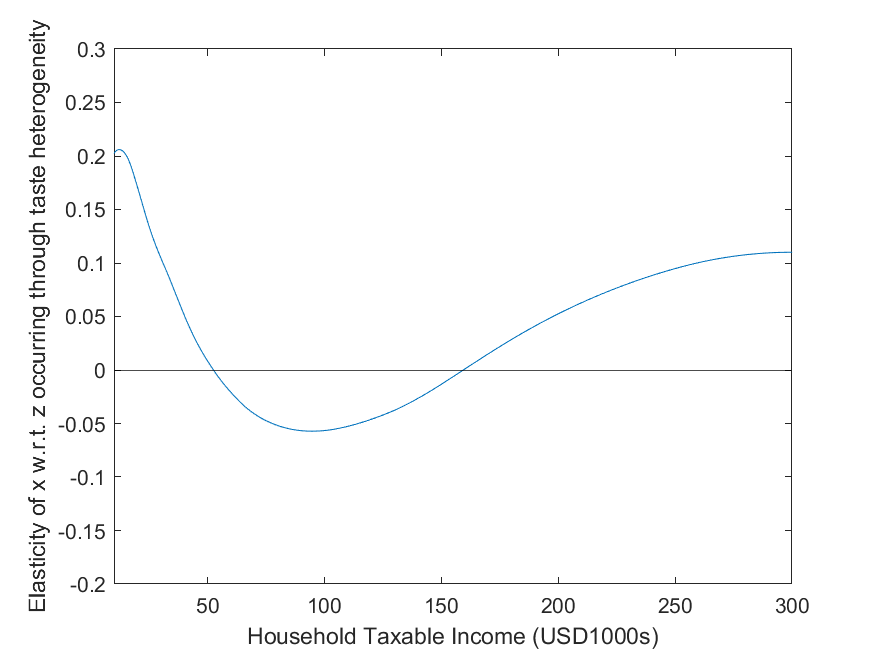}
    \caption{Elasticity of consumption of $x$ with respect to income via  taste heterogeneity only}
    \begin{minipage}{\textwidth}
    \footnotesize \textbf{Note:} 
    This figure plots the ``taste elasticity'', $\eta^{\text{Taste}}_{x,z}$. It is based on our estimates of $x^\prime_{\text{het}}(z)$, which are shown by the dashed yellow line in Figure \ref{fig:decomposition_dirty_good}, but converted into an elasticity.  \end{minipage}\label{fig:taste_heterogeneity_elasticity}
\end{figure}

\vspace{-1.4em}
\subsection{Estimating and Calibrating Other Quantities}\label{sec:otherasn}
\vspace{-0.5em}

\textbf{Elasticity of consumption of carbon-intensive goods.} We follow an empirical approach  similar to that of \cite{li2014gasoline}.
We construct a panel using the CEX, with variation over time and between states in gasoline excise taxes and prices. Then we study the consumption response of gasoline to tax changes from 1996 to 2021. We estimate the uncompensated elasticity to be $0.58$ (SE: $0.29$) in our preferred specification with month$\times$year fixed effects and household-level controls (see Appendix \ref{app:empirical_estimation_elas_gaso_consumption} for details). We thus use three values for the compensated elasticity in our tax simulations for sensitivity analysis: $0.5$ for our preferred \emph{benchmark} scenario, $0.25$ for the \emph{low} elasticity of dirty good consumption scenario, and $0.75$ for the \emph{high} elasticity of dirty good consumption scenario. 
\textbf{Elasticity of taxable income.} We calibrate the elasticity of taxable income $\varepsilon_{z}\left(z\right)$ for our \emph{benchmark} scenario to be constant at 0.33, which is the preferred estimate (on the intensive margin) from the meta-analysis of \cite{chetty2012bounds}. We also consider a value of 0.7 for our \emph{high} taxable income elasticity scenario.

\textbf{Marginal Damage From Carbon.} We follow \cite{hahn2024welfare} and assume a Social Cost of Carbon (SCC) of \$200 per ton. This is within the range of estimates provided by a recent report of the US
Environmental Protection Agency \citep{epa2023}. We assume that consumption of carbon-intensive goods generates an average of 2kg of $\text{CO}_2$ per dollar of final expenditure.\footnote{The aggregate dirty good has average CO$_2$ emissions of 2.24kg per dollar of final expenditure, calculated as the expenditure-weighted average of its three components for which emissions data are available: gasoline, electricity, and natural gas (see Appendix Table \ref{tab:co2_emissions_per_dollar_per_exp_cat} for a breakdown by component).} This results in a marginal damage of 40 cents per dollar of final expenditure, which implies a baseline Pigouvian tax of 40\% in our model\footnote{Coincidentally, this approximately corresponds to the gasoline tax in France (TICPE) in 2024 \citep{dila2024}; while in our data (see Appendix Section \ref{app:empirical_estimation_elas_gaso_consumption} for details on data sources) the average gasoline tax across US states was 18.5\% in 2021.}.  We note that our Pareto-efficient tax schedules are calibrated relative to this marginal damage estimate, which allows for flexible adjustments (of our results) based on alternative SCC values.

\textbf{Marginal income tax rate schedule.} Using the same approach as for the consumption shares described above, we compute the average before- and after-tax household income for each income percentile using data from the 2018 to 2023 CEX quarterly waves. We convert these values to 2024 USD using the average annual Consumer Price Index (CPI). We then smooth the relationship between after-tax and before-tax income, fit a smoothing spline and compute its derivative to recover the marginal tax rates on taxable income. The blue curve in Figure \ref{fig:status_quo_mtr} shows the resulting marginal income tax rates across the income distribution used in our simulations.

\vspace{-1.2em}
\subsection{Empirically Optimal Tax Rates}\label{sec:opttax}
\vspace{-0.3em}

Figure \ref{fig:pareto_tax_schedules_combined} plots the marginal tax rates on the carbon-intensive good that are implied by our Pareto efficiency conditions. Panels a and c present the nonlinear case (equation \ref{eq:ParetoTaxSN}). Panels b and d present the case where the tax is restricted to be linear (equation \ref{eq:ParetoTaxSL}). Panels a and b assume an elasticity of taxable income $\varepsilon_{z} = 0.33$, which is most consistent with \citeauthor{chetty2012bounds}'s \citeyear{chetty2012bounds} meta-analysis. Panels c and d assume $\varepsilon_{z} = 0.7$. In each panel, we present estimates for a range of values for the compensated net-of-tax rate elasticity of spending on the
 carbon-intensive good. 

\vspace{-1.2em}
\subsubsection{Optimal Non-linear Taxation of Carbon}
\vspace{-0.3em}

We focus first on the nonlinear case, assuming that non-linear taxation of these particular commodities is feasible. Panel (a) of Figure \ref{fig:pareto_tax_schedules_combined} presents our results using our baseline value of $\varepsilon_{z}$. There is a mild U-shaped pattern of the Pareto-efficient carbon tax over the income distribution, which is centered approximately on the level of marginal damage from these goods. Optimal marginal tax rates start near the level of marginal damage for very low-income households, fall below it for low- and middle-income households, and rise above the Pigouvian level for the highest-income households. This pattern is somewhat amplified when consumption of carbon-intensive goods is assumed to be very inelastic (i.e., $\varepsilon_{x|z}$ is very low).

When the elasticity of taxable income is higher ($\varepsilon_{z} = 0.7$), panel (c) shows that there are larger deviations from Pigouvian taxation, although they are still centered around the level of marginal damage. This is especially the case when $\varepsilon_{x|z}$ is low. This combination of parameters implies that income taxation is highly distortionary relative to adjusting the carbon tax to redistribute, providing that some part of the covariance of income and carbon intensity is explained by differences in preferences ($\eta^{\text{Taste}}_{x,z}(z)\neq 0$). The most important implication of these more extreme parameter values is that the optimal carbon tax is meangfully higher than the Pigouvian level for the highest-income individuals. For middle-income households, the optimal carbon tax is lower than the Pigouvian level, but it is never comes close to being Pareto efficient to subsidize carbon consumption.

\begin{sidewaysfigure}
\centering
\begin{subfigure}{.4\textwidth}
  \centering
  \caption{Nonlinear: $\varepsilon_{z}\left(z\right)=0.33$}
  \includegraphics[width=\linewidth, trim=0pt 0pt 0pt 20pt, clip]{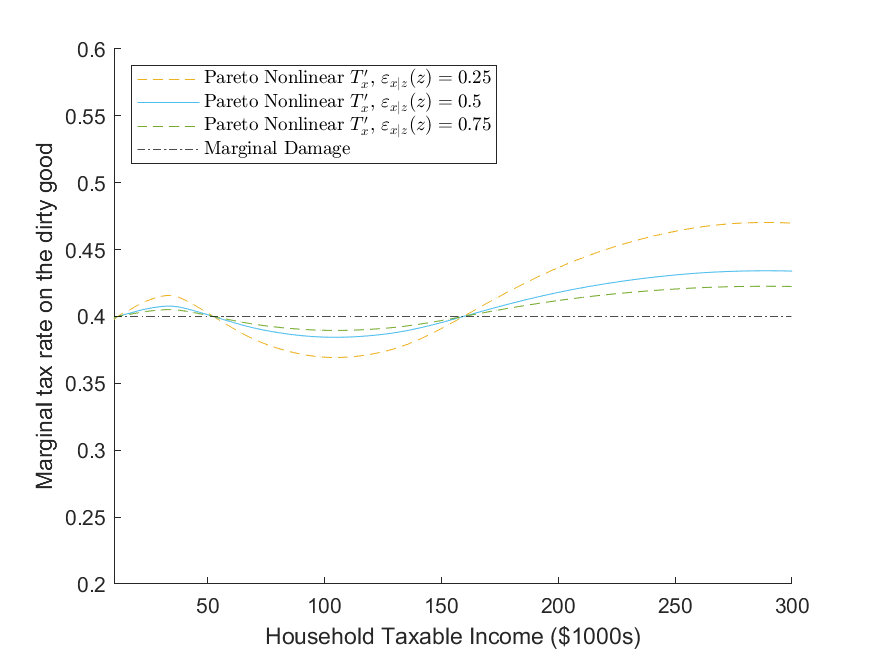}
  \label{fig:eti0point33nonlinear}
\end{subfigure}%
\begin{subfigure}{.4\textwidth}
  \centering
  \caption{Linear: $\varepsilon_{z}\left(z\right)=0.33$}
  \includegraphics[width=\linewidth, trim=0pt 0pt 0pt 20pt, clip]{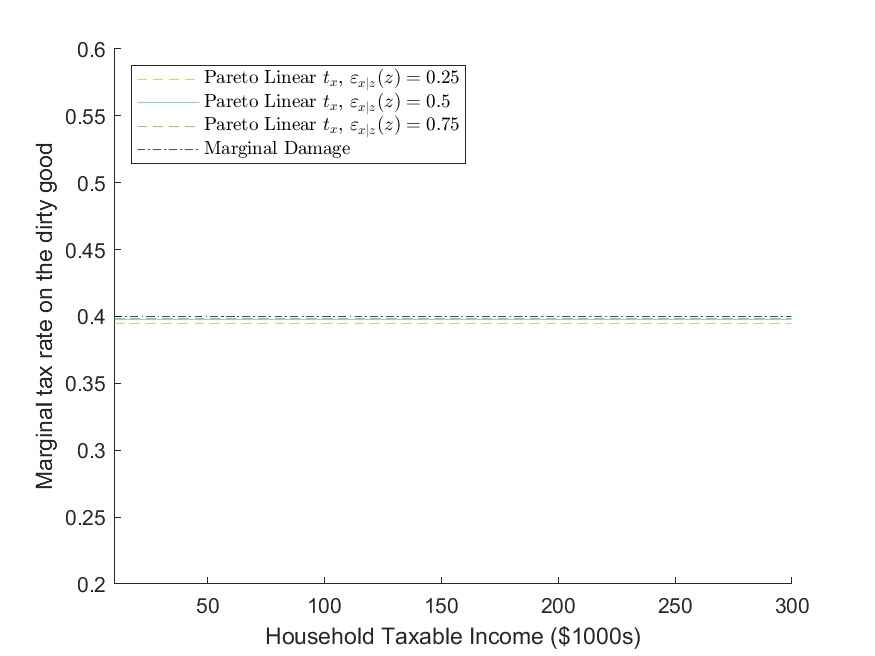}
  \label{fig:eti0point33lin}
\end{subfigure}
\begin{subfigure}{.4\textwidth}
  \centering
  \caption{Nonlinear: $\varepsilon_{z}\left(z\right)=0.7$}
  \includegraphics[width=\linewidth, trim=0pt 0pt 0pt 20pt, clip]{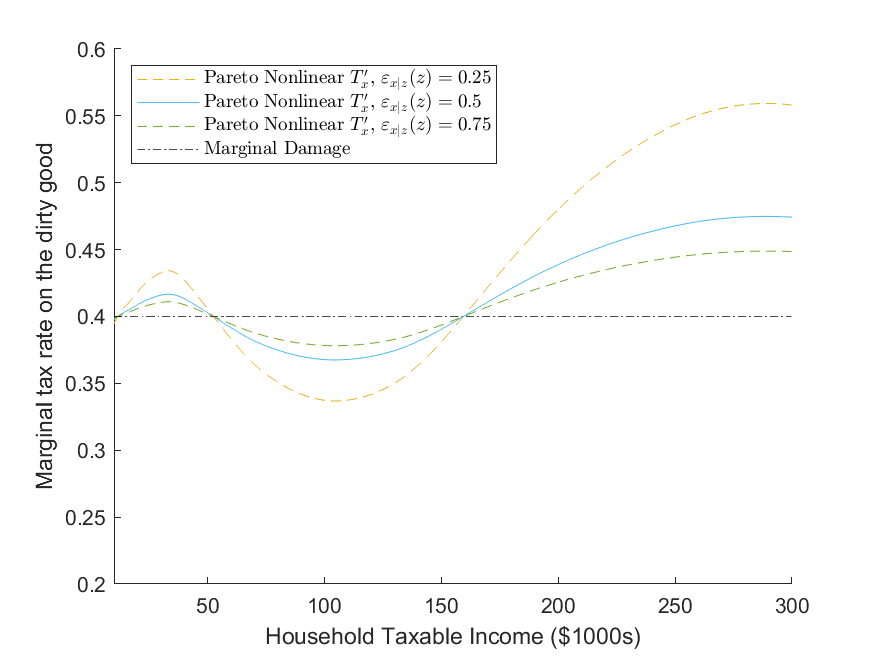}
  \label{fig:eti0point7nonlinear}
\end{subfigure}%
\begin{subfigure}{.4\textwidth}
  \centering
  \caption{Linear: $\varepsilon_{z}\left(z\right)=0.7$}
  
  \includegraphics[width=\linewidth, trim=0pt 0pt 0pt 20pt, clip]{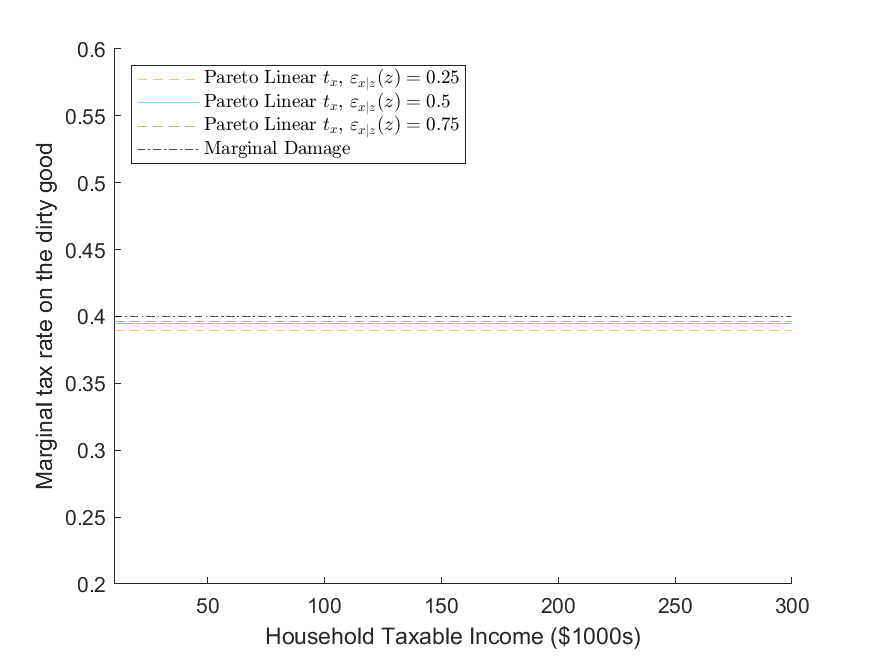}
  \label{fig:eti0point7lin}
\end{subfigure}

\vspace{-0.5em}
\caption{Dirty Good Tax Schedule: Marginal Rates implied by Pareto Efficient formulas}
\label{fig:pareto_tax_schedules_combined}
    \begin{minipage}{\textwidth}
    \footnotesize \textbf{Note:} 
    This figure plots the Pareto efficient carbon tax for different values of the elasticities of taxable income and consumption of the carbon-intensive good. Three cases are shown in each panel, along with the benchmark of the assumed value of marginal damage. Panels (a) and (c) show nonlinear marginal tax schedules, while panels (b) and (d) restrict the carbon tax to be linear.
    \end{minipage}
\end{sidewaysfigure}

\vspace{-1.2em}
\subsubsection{Optimal Linear Taxation of Carbon}
\vspace{-0.5em}

Nonlinear taxation of these commodities may prove logistically or politically challenging. In this case, the carbon tax must be restricted to be linear. Panels (b) and (d) show that the Pareto-efficient linear tax is extremely close to the Pigouvian level for all parameter values. In our benchmark scenario with $\varepsilon_{z} = 0.33$, and $\varepsilon_{x|z} = 0.5$, the Pareto-efficient linear tax is equal to 99.4\% of the value of marginal damage. In this sense, distributional concerns have very little bearing at all on the optimal carbon tax if it is required to be linear.

This stark result comes from the fact that the Pareto-efficient linear tax is a weighted average of the quantities determining the marginal tax rates at each income level in the nonlinear case (see equation \ref{eq:ParetoTaxSL}). Intuitively, departures from the Pigouvian level of the carbon tax will depend on a weighted average tagging benefit of carbon. As Figure \ref{fig:taste_heterogeneity_elasticity} shows, $\eta_{x,z}^{\text{Taste}}\left(z\right)$ is negative for the portion of the income distribution with the largest share of households, but positive at high levels of income. Integrating over the distribution, the Pareto-efficient linear tax is therefore below, but very close to, the Pigouvian level.

\vspace{-1.2em}
\subsubsection{Optimal Linear Taxation of Carbon with Multidimensional Heterogeneity}\label{subsec:multidim_empirics}
\vspace{-0.5em}

To empirically quantify the optimal carbon tax with multidimensional heterogeneity, we exploit our individual-level survey measure of the causal effect of income on consumption of the carbon-intensive good. We first group survey respondents whose household taxable income ranges from \$600 to \$325,000, into equally-sized deciles of taxable income; and then compute the conditional variance of $\partial x/\partial z$ in each income decile. Figure \ref{fig:cond_var_causal_effect_of_income_deciles} shows these conditional variances together with bootstrapped confidence intervals. 

\begin{figure}[htbp]
    \centering
    \includegraphics[width=0.7\linewidth, trim=1pt 1pt 1pt 1pt, clip]{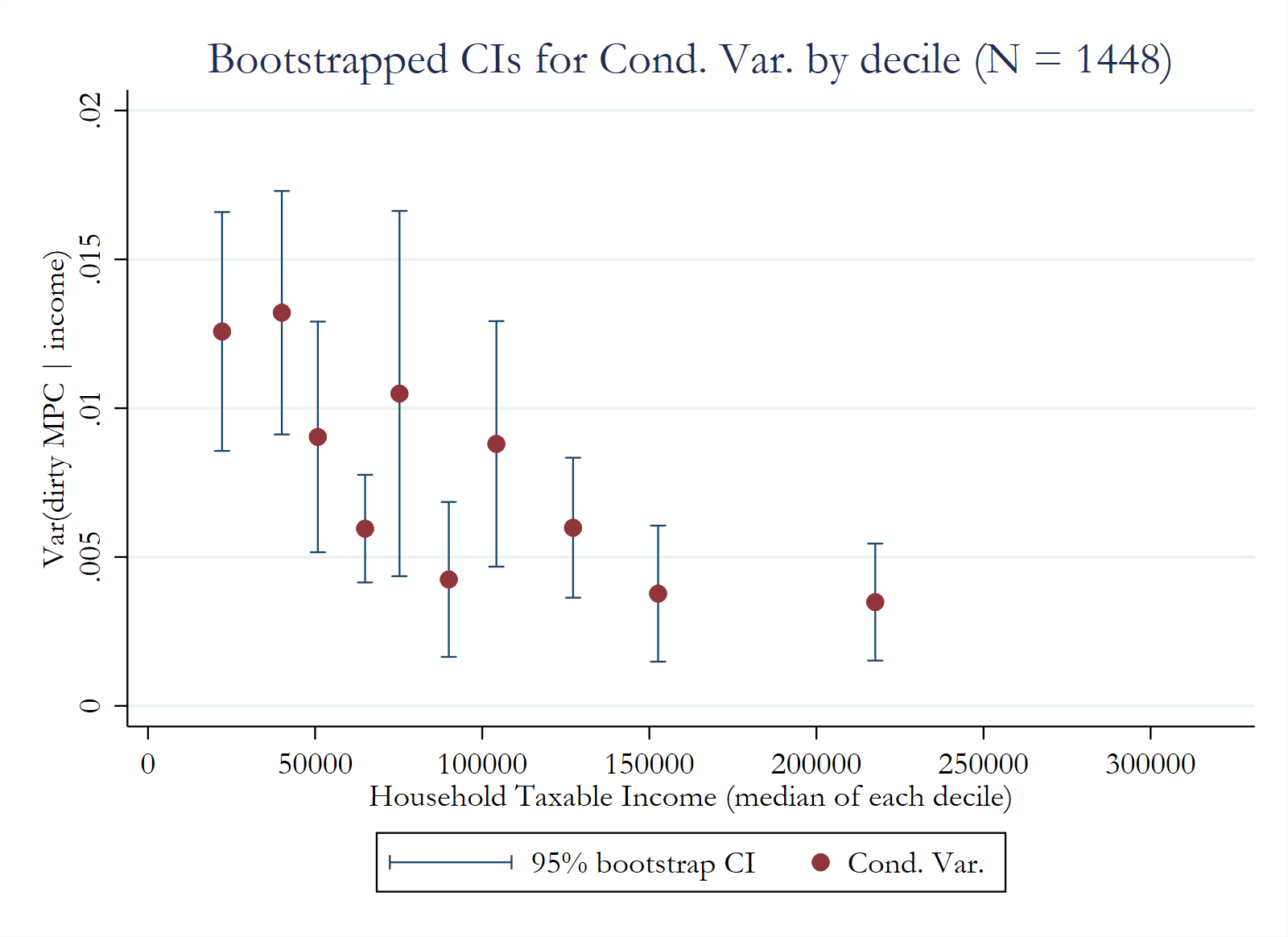}
    \vspace{-0.5em}\caption{Conditional Variance of $\partial x /\partial z$ at Each Income Decile}
    \begin{minipage}{\textwidth}
    \vspace{-0.1em}\footnotesize \textbf{Note:}
    This figure shows the conditional variance of the causal effect of income on consumption of the dirty good, $\text{Var}( \frac{\partial x (w, \theta)}{\partial z}\mid z)$, measured in our hypothetical survey  ($N = 1,448$), and estimated by grouping survey respondents across equal-binned deciles of household taxable income. It is based on each respondent's share of MPC allocated to the aggregate dirty good.\end{minipage}\label{fig:cond_var_causal_effect_of_income_deciles}
\end{figure}

The conditional variance of the causal consumption response to a small increase in taxable income is largest for households in the 2nd decile, earning between \$36,000 and \$50,000. It is otherwise slightly decreasing across the income distribution, from 0.013 to 0.004.  

We assign these conditional variances to all income levels within each decile and smooth out their relationship with taxable income across the income distribution (as described above). This allows us to simulate the optimal linear externality tax using the formula displayed in equation \ref{eq:opt_multidim_tractable}. We re-use most of the empirical quantities from the unidimensional case, now highlighting the conditional average interpretation, e.g. $\eta_{x,z}^{\text{Taste}}\left(z\right) = \mathbb{E}[ \eta_{x,z}^{\text{Taste}}\left( z, \theta \right)\mid z]$ and $\bar x(z) = \mathbb{E}[ x\left(z, \theta\right)\mid z]$. We continue to assume constant values for the elasticities of taxable income and consumption of the carbon-intensive good.

\begin{figure}[h!]
  \centering
  \subfloat[Baseline calibration]{\includegraphics[width=0.72\textwidth, trim=1pt 1pt 1pt 1pt, clip]{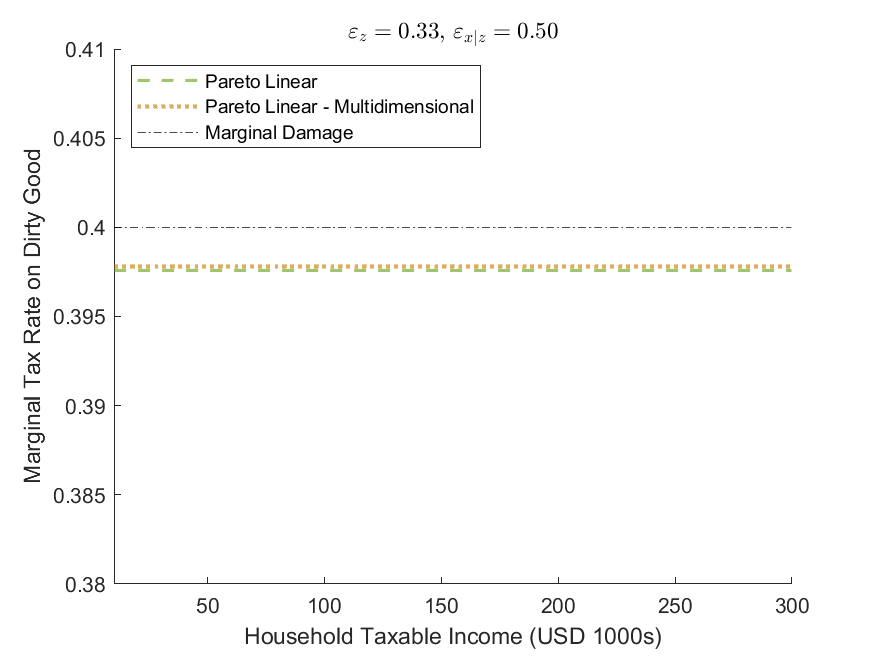}\label{fig:eti0point33_multidim}}
  \\
  \vspace{1.5em}
  \subfloat[Higher $\varepsilon_z$, lower $\varepsilon_{x\mid z}$]{\includegraphics[width=0.72\textwidth, trim=1pt 1pt 1pt 1pt, clip]{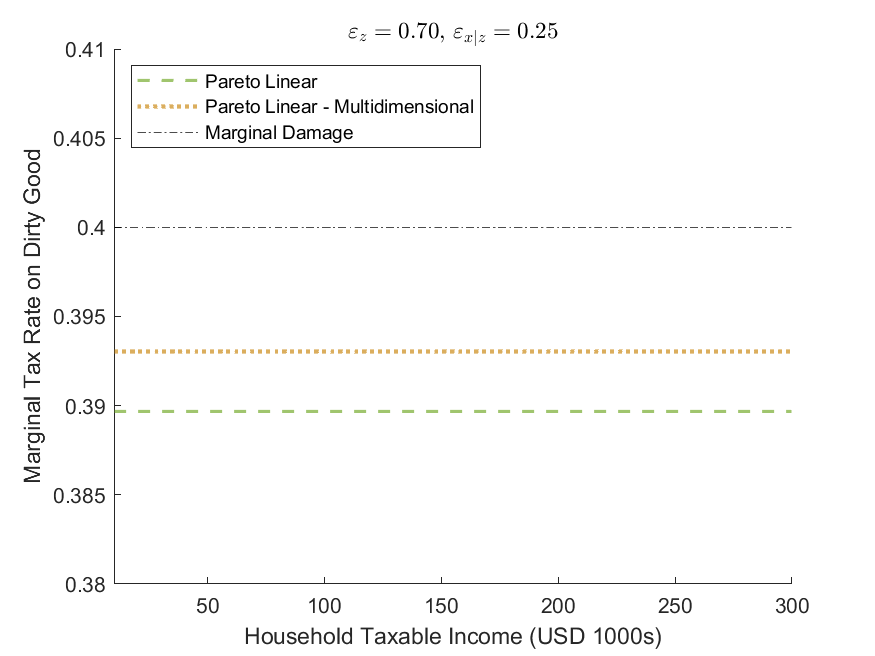}\label{fig:eti0point7_multidim}}
  \caption{Dirty Good Marginal Linear Tax Rate:  Unidimensional vs. Multidimensional Heterogeneity}
    \begin{minipage}{\textwidth}
    \footnotesize \textbf{Note:}
    This figure plots the Pareto efficient (linear) carbon tax in the unidimensional case vs. the quasi Pareto efficient linear tax rate when accounting for multidimensional heterogeneity, for different values of the elasticities of taxable income and consumption of the carbon-intensive good.  \end{minipage}
    \label{fig:multidim_quasi_pareto_tax_schedule_vs_unidim}
\end{figure}

Figure \ref{fig:multidim_quasi_pareto_tax_schedule_vs_unidim} plots the optimal tax rate for the multidimensional case (orange dotted line). For comparison, we also include the the linear tax rate implied by the Pareto efficient linear tax formula in the unidimensional case (green dashed line). Panel (a) assumes an elasticity of taxable income $\varepsilon_{z} = 0.33$, and an elasticity of consumption of the carbon-intensive good $\varepsilon_{x\mid z}=0.5$, while panel (b) assumes $\varepsilon_{z} = 0.7$, and $\varepsilon_{x\mid z}=0.25$. As expected based on our theoretical results, within-income variation of the causal effect of income leads to attenuation and pushes the optimal tax closer to the Pigouvian benchmark.

Comparing across the two panels, we see that the degree of attenuation is larger in the lower panel, where the deviation from Pigouvian taxation was relatively larger as well. This is because the amount of attenuation is proportional to the ratio of elasticities $\varepsilon_z / \varepsilon_{x\mid z}$ (see equation \ref{eq:opt_multidim_tractable}). Recall that this same ratio scales the intensity of the tagging benefit in the first place, which implies that the degree of attenuation from within-income heterogeneity in $\partial x /\partial z$ is greatest when the tagging benefit is largest. As a result, there is only minimal attenuation in Figure \ref{fig:multidim_quasi_pareto_tax_schedule_vs_unidim}(a) where the optimal linear rate in unidimensional case was almost exactly at the Pigouvian level already. There is more attenuation in panel (b), where the higher ratio of elasticities led to a larger deviation.

\vspace{-1.4em}
\section{Conclusion}\label{sec:conclusion}
\vspace{-0.5em}

The use of taxes or subsidies to internalize externalities is widely accepted by economists as an important role of government. When it comes specifically to climate change, economists have forcefully advocated for a carbon tax on these grounds \citep{EconomistsStatement2019}. But this idea has gained limited traction politically.

We formally study one specific critique of carbon taxes, and of corrective taxes more generally: They can be regressive. Our theoretical results show that the fact that poorer individuals are more carbon-intensive in their consumption does not necessarily change the optimal carbon tax. Instead, the key distinction is whether this reflects income effects or preference heterogeneity. We show how to quantify preference heterogeneity empirically, and take our sufficient statistics tax formulas to the data.

The implications of our empirical results for the optimal carbon tax depend on whether nonlinear carbon taxation is feasible. If only linear carbon taxation is possible, then the optimal carbon tax is remarkably close to the classic Pigouvian prescription of setting the tax equal to the marginal damage from emitted carbon. This conclusion for the optimal linear tax is robust to a very wide range of different parameter values.

The conclusion is more nuanced if nonlinear carbon taxation is possible. Our benchmark scenario implies that marginal tax rates are modestly below the Pigouvian level for middle-income households, but above that level for higher-income households. This pattern is amplified if the elasticity of taxable income is higher, or the behavioral response to carbon taxation is smaller: In such cases, adjusting the carbon tax becomes a more attractive tool to redistribute income, relative to income taxation.

We expect that our conceptual and empirical frameworks could be applied to a wide range of settings that involve both distributional concerns and externalities. This could include other environmental externalities, public health problems, or even criminal behavior. Our results for this particular application suggest that preference-driven differences in carbon intensity have a limited impact overall. Our estimate of the optimal carbon tax is therefore close to the Pigouvian benchmark, especially if the tax is constrained to be linear. However, this result may differ for other externalities. In the future, it will also be important to integrate our detailed work on the consumption side of the economy with richer analysis of the production side along the lines of \citet{Bierbrauer2024} and others. 

\pagebreak

\singlespacing
\bibliographystyle{apalike}
\bibliography{references.bib}
\onehalfspacing

\pagebreak

\appendix
\counterwithin{figure}{section}  
\counterwithin{table}{section}   

\renewcommand{\thefigure}{\thesection.\arabic{figure}}
\renewcommand{\thetable}{\thesection.\arabic{table}}

\section{Empirical Appendix}\label{app:empirical}

\subsection{Average Expenditure Shares and CO\textsubscript{2} Emissions across Categories}\label{app:empirical_avg_exp_co2}

\begin{table}[htbp]\centering
\def\sym#1{\ifmmode^{#1}\else\(^{#1}\)\fi}
\caption{Average Expenditure Shares across Categories}
\begin{tabular}{l*{1}{c}}
\toprule
                    &\multicolumn{1}{c}{}\\
                    &        Avg.\\
\midrule
Food and Beverages  &       0.183\\
Housing   (excluding utilities)          &       0.272\\
Electricity, natural gas, and home heating fuels&       0.044\\
Water, sewage, and other public services&       0.013\\
Healthcare          &       0.095\\
Gasoline and other motor fuels&       0.039\\
Flights / Airfare expenditures&       0.005\\
Other Expenditures  &       0.347\\
\midrule
Observations        &      113,108\\
\bottomrule
\end{tabular}

\begin{minipage}{\textwidth}
    \vspace{0.5em}
    \footnotesize \textbf{Note:} This table shows shares allocated to each consumption category among consumers in the Consumer Expenditure Survey (2018--2023). 
    \end{minipage}
\end{table}

\begin{table}[htbp]
    \centering
    \caption{Average CO\textsubscript{2} Emissions (in kg) per Dollar of Final Expenditure in the United States}
    \begin{tabular}{lcccc}
        \toprule
        & \textbf{Unit} & \textbf{Price Per} & \textbf{CO\textsubscript{2} Emissions } & \textbf{CO\textsubscript{2} Emissions} \\
        &  & \textbf{Unit (in \$)} & \textbf{per Unit (in Kg)} & \textbf{ (Kg per \$)} \\
        \midrule
        Gasoline & Gallon (3.78 Litres) & 3.635 & 8.78 & 2.415 \\
        Electricity & KWh & 0.2275 & 0.387 & 1.699 \\
        Natural Gas & 1000 Cubic Feet & 15.39 & 53.035 & 3.446 \\
        \bottomrule
    \end{tabular}  \label{tab:co2_emissions_per_dollar_per_exp_cat}
     \vspace{0.5em}
     \begin{minipage}{\textwidth}
    \footnotesize \textbf{Note:} We use 2023 data from U.S. Energy Information Administration (EIA) for information on prices per unit, as well as 2023 data from the U.S. Environmental Protection Agency (EPA), and the Code of Federal Regulations for information on CO2 emissions per unit. 
    \end{minipage}
\end{table}

\subsection{Share of Carbon-Intensive Goods in Status Quo Expenditures}\label{app:empirical_share_dirty_good_status_quo_custom_survey}

\begin{figure}[H]
  \centering
  \subfloat[Aggregate Dirty Good]{\includegraphics[width=0.64\textwidth, trim=1pt 1pt 1pt 1pt, clip]{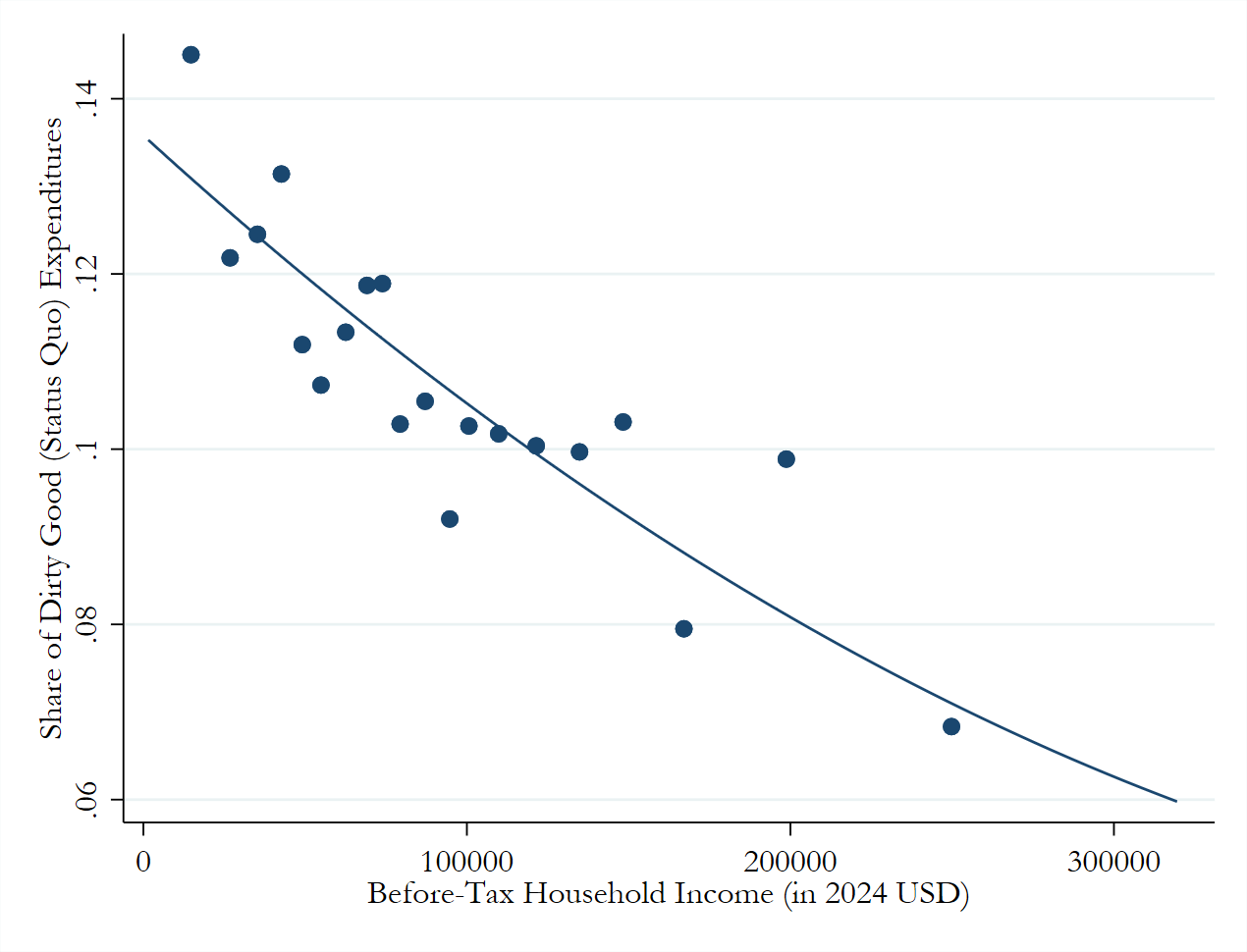}\label{fig:share_exp1surv}}
  \\
  \subfloat[Gasoline]{\includegraphics[width=0.64\textwidth, trim=1pt 1pt 1pt 1pt, clip]{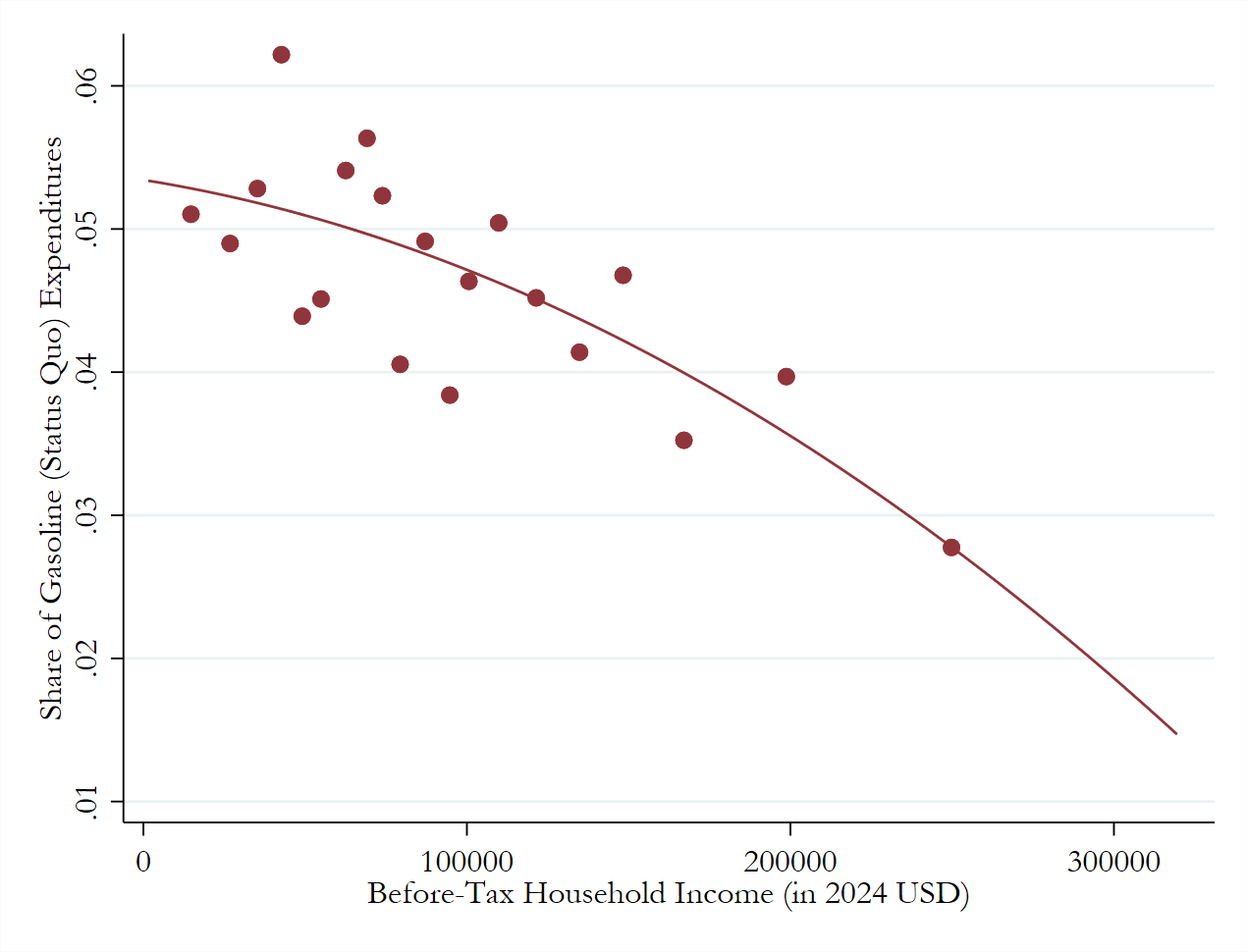}\label{fig:share_exp2surv}}
  \caption{Share of carbon-intensive goods in total (monthly) expenditures in custom survey}
   \begin{minipage}{\textwidth}
    \footnotesize \textbf{Note:} 
    These binscatter plots show (i) the average share of expenditure on polluting goods as a fraction of total expenditures, and (ii) the corresponding average household taxable income, averaged across 20 bins of household taxable income (selected with the command binsreg \citep{cattaneo2024binscatter}), and (iii) a global polynomial of degree 2 estimated to fit the relationship between these two variables, using data from a custom representative survey of the US population ($N=1,448)$. The \enquote{dirty good} expenditure category includes gasoline, electricity, natural gas, and heating fuels. \end{minipage}\label{fig:cross_section_survey_statusquo_expshare_income}
\end{figure}

\newpage 
\subsection{Estimating the Uncompensated Elasticity of Gasoline Consumption}\label{app:empirical_estimation_elas_gaso_consumption}
Following \cite{li2014gasoline}, we estimate a log-log two-way fixed effects regression to recover the elasticity of gasoline consumption with respect to the net-of-tax rate. We use granular data at the household level rather than aggregated data at the state level and estimate the following regression equation:
\begin{equation}\label{eq:log_log_reg_elas_of_cons}
        \log(q_{ist}) = \alpha \log(p_{st}) + \beta \log(1+\frac{\tau_{st}}{p_{st}}) + \Theta X_{ist} + \delta_s + \phi_t + e_{ist}
\end{equation}
where $q_{ist}$ is the household-level consumption of gasoline (in quantities), $p_{st}$ is the tax-exclusive price of gasoline, $\tau_{st}$ is the gasoline excise tax (in dollars); $X_{ist}$ is a vector of household-level controls; and $\delta_s$ and $\phi_t$ are state, and time (yearly or monthly) fixed effects.

The coefficient $-\beta$ in the first row can be interpreted as the uncompensated elasticity of gasoline consumption with respect to the net-of-tax rate.

\begin{table}[H]
    \centering
    \caption{Elasticity of gasoline consumption with respect to tax rate: 1996-2021}   
    \footnotesize
{
\def\sym#1{\ifmmode^{#1}\else\(^{#1}\)\fi}
\begin{tabular}{l*{4}{c}}
\toprule
          &\multicolumn{1}{c}{(1)}         &\multicolumn{1}{c}{(2)}         &\multicolumn{1}{c}{(3)}         &\multicolumn{1}{c}{(4)}         \\
\midrule
$\log(1 + \text{tax ratio})$&   -0.325         &   -0.386         &   -0.539\sym{*}  &   -0.584\sym{**} \\
          &  (0.312)         &  (0.313)         &  (0.288)         &  (0.288)         \\
$\log(\text{gas price})$&   -0.453\sym{***}&   -0.468\sym{***}&   -0.419\sym{***}&   -0.429\sym{***}\\
          &  (0.143)         &  (0.141)         &  (0.136)         &  (0.133)         \\
\midrule
State FE  &        Y         &        Y         &        Y         &        Y         \\
Year FE   &        Y         &                  &        Y         &                  \\
$ \text{Month} \times \text{Year} $ FE&                  &        Y         &                  &        Y         \\
Household Level Controls&        N         &        N         &        Y         &        Y         \\
R$^2$     &    0.089         &    0.109         &    0.198         &    0.218         \\
Observations&  596,822         &  596,822         &  542,574         &  542,574         \\
Clusters (State $\times$  Year)&    1,035         &    1,035         &      991         &      991         \\
\bottomrule
\end{tabular}
}

    \label{tab:elas_of_cons_wrt_tax_rate}\vspace{1em}
    \begin{minipage}{\textwidth}
  \footnotesize {\textbf{Note:}}
  This table shows the coefficient estimates from the regression of equation (\ref{eq:log_log_reg_elas_of_cons}). All regressions control for family size and use annual weights to make the sample representative of the United States each calendar year (see section 6.3.1 of \url{https://www.bls.gov/cex/pumd-getting-started-guide.htm#section3}). Additional household-level control variables include age (of the reference person), urban vs. rural status, taxable income, education level, and family type. Quarterly data from 1996 to 2021 on household gasoline consumption are from the interview module of the Consumer Expenditure Surveys (CEX) Public use microdata files. We can use Month $\times$ Year fixed effects because of the rotating nature of the survey. Gasoline tax data are obtained from the U.S. Department of Transportation, Federal Highway Administration, and gasoline prices are from the U.S. Energy Information Administration (State Energy Data System, SEDS) for the same period. Standard errors are clustered at the State $\times$ Year level. *** p$<$0.01, ** p$<$0.05, * p$<$0.1.
\end{minipage}
\end{table}

\subsection{Screenshot of Main Survey Elicitation Instrument}
\begin{figure}[H]
    \centering
    \includegraphics[width=0.92\linewidth]{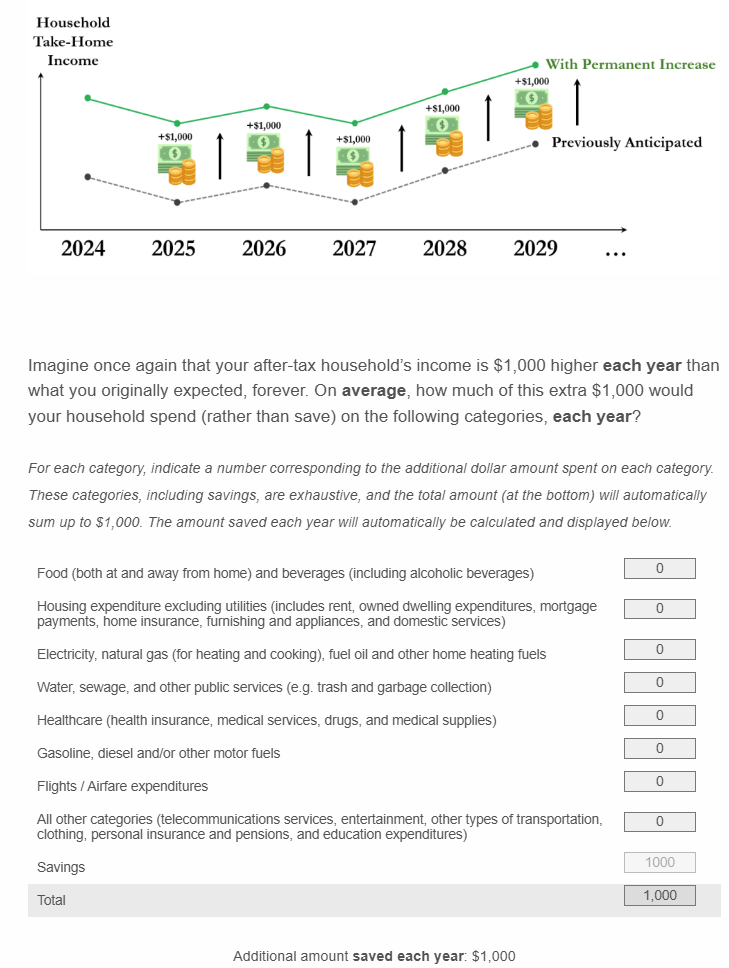}
    \caption{Survey Screen for the Causal Consumption Response elicitation block}
    \label{fig:q_causal_screenshot}
\end{figure}

\subsection{Status Quo Income Marginal Tax Rates}

\begin{figure}[H]
    \centering
    \includegraphics[width=0.7\linewidth]{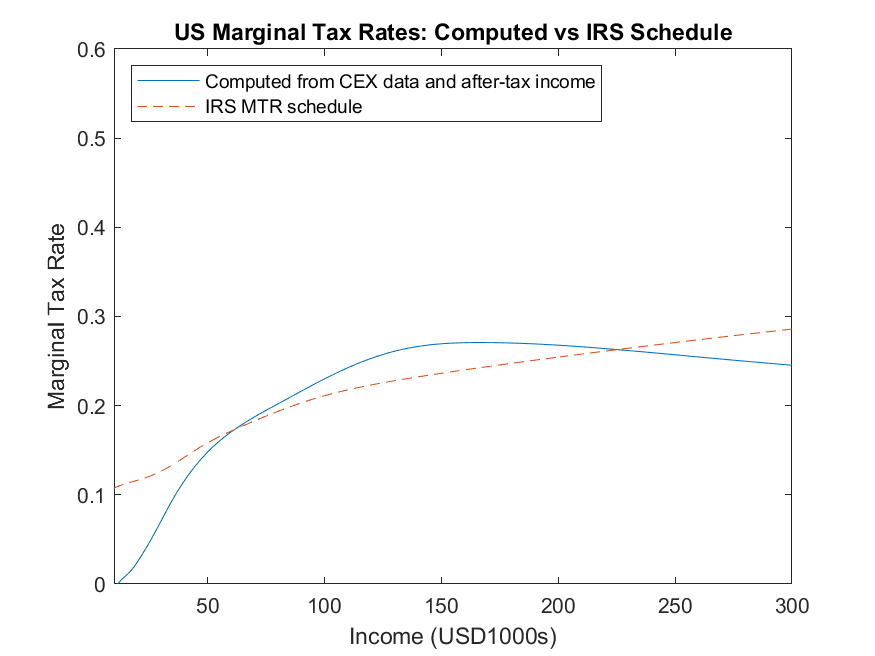}
    \caption{Status quo income marginal tax rates across the income distribution}
    \begin{minipage}{\textwidth}
    \scriptsize \textbf{Note:} 
    This figure shows the status quo income marginal tax rates across the household income distribution. The blue curve shows estimated marginal tax rates used in our analysis, and computed from CEX data (2018–2023, converted to 2024 USD) by fitting a smoothing spline to the relationship between after-tax and before-tax income binned by income percentiles, and computing 1 minus its derivative. The dashed orange line shows the 2024 IRS statutory tax rate schedule (adjusted to account for the share of married couples) for illustrative comparison. \end{minipage}\vspace{-0.5em}
    \label{fig:status_quo_mtr}
\end{figure}

\subsection{Pareto Efficient Marginal Tax Schedules - Gasoline Only}\label{app:empirical_pareto_efficient_tax_gasoline_only}

\begin{figure}[H]
    \centering
    \includegraphics[width=0.5\linewidth]{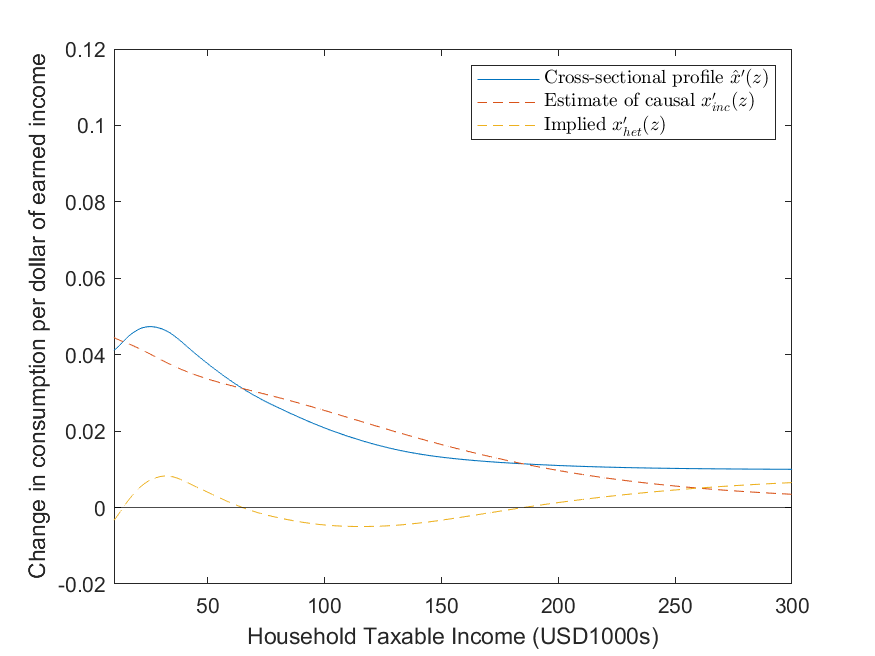}
    \caption{Gasoline: Decomposition of the cross-sectional profile $\hat x'(z)$}
    \begin{minipage}{\textwidth}
    \scriptsize \textbf{Note:} 
    This figure shows how we decompose the cross-sectional relationship between income and gasoline consumption. The blue line is a smoothed estimate of the cross-sectional slope (from the 2018-2023 CEX surveys), $\hat x'(z)$ at each level of income. The dashed red line shows estimates of the causal effect of income on expenditure on carbon-intensive goods (from our custom hypothetical survey). The yellow dashed line subtracts the income effect from the cross-section to obtain the part of the relationship between $x$ and $z$ that comes from preference heterogeneity.\end{minipage}\vspace{-0.5em}
    \label{fig:decomposition_gaso}
\end{figure}

\begin{figure}[H]
  \centering
  \includegraphics[width=0.6\linewidth]{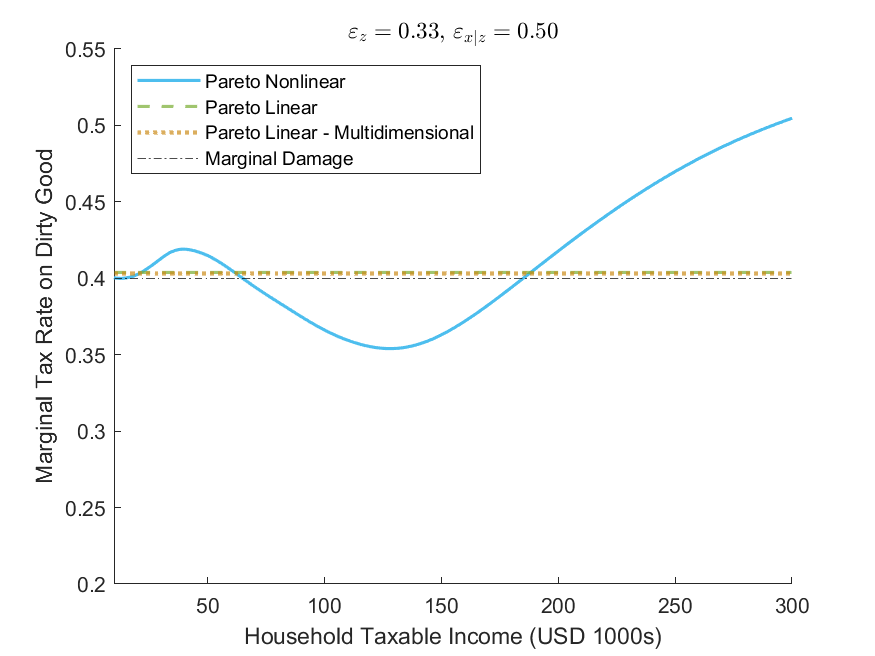}
  \caption{Gasoline Marginal Tax Rates:  Unidimensional vs. Multidimensional Heterogeneity}
    \begin{minipage}{\textwidth}
    \scriptsize \textbf{Note:} 
    This figure plots the Pareto efficient carbon tax for \textbf{gasoline only} in the unidimensional case (Linear and Nonlinear) vs. the quasi Pareto efficient linear tax rate when accounting for multidimensional heterogeneity, when the elasticity of taxable income is $\varepsilon_{z}=0.33$, and the elasticity of consumption of the carbon-intensive good $\varepsilon_{x\mid z}=0.5$.  \end{minipage}
    \label{fig:multidim_quasi_pareto_tax_schedule_vs_unidim_gaso}
\end{figure}

\subsection{Pareto Efficient Marginal Tax Schedules - Electricity, Natural Gas, and Heating Fuels Only}\label{app:empirical_pareto_efficient_tax_electricity_only}

\begin{figure}[H]
    \centering
    \includegraphics[width=0.5\linewidth]{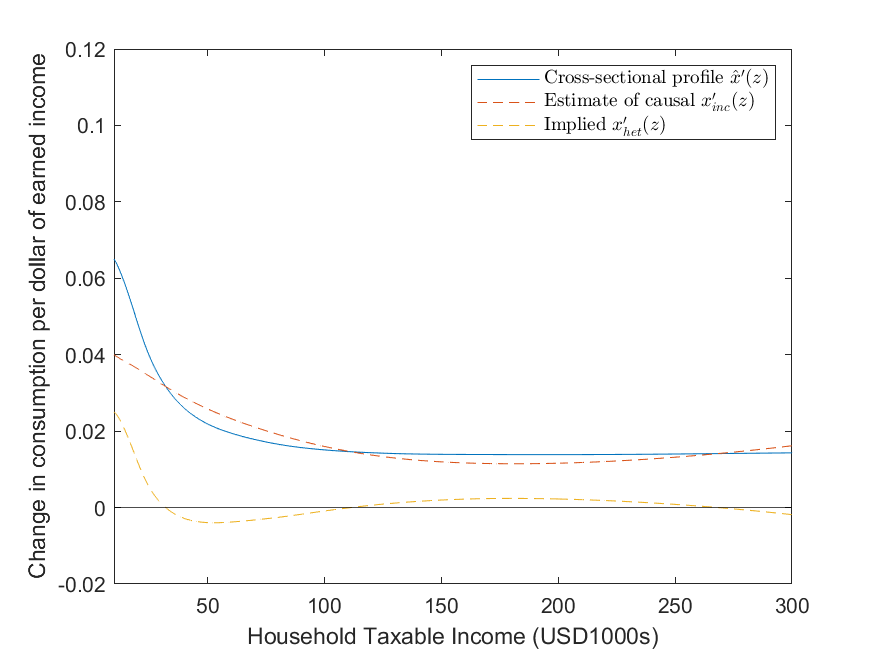}
    \caption{Electricity, Natural Gas, and Heating Fuels: Decomposition of the cross-sectional profile $\hat x'(z)$}
    \begin{minipage}{\textwidth}
    \scriptsize \textbf{Note:} 
    This figure shows how we decompose the cross-sectional relationship between income and electricity, natural gas, and heating fuels consumption. The blue line is a smoothed estimate of the cross-sectional slope (from the 2018-2023 CEX surveys), $\hat x'(z)$ at each level of income. The dashed red line shows estimates of the causal effect of income on expenditure on carbon-intensive goods (from our custom hypothetical survey). The yellow dashed line subtracts the income effect from the cross-section to obtain the part of the relationship between $x$ and $z$ that comes from preference heterogeneity.\end{minipage}\vspace{-0.5em}
    \label{fig:decomposition_elec}
\end{figure}

\begin{figure}[H]
  \centering
  \includegraphics[width=0.6\linewidth]{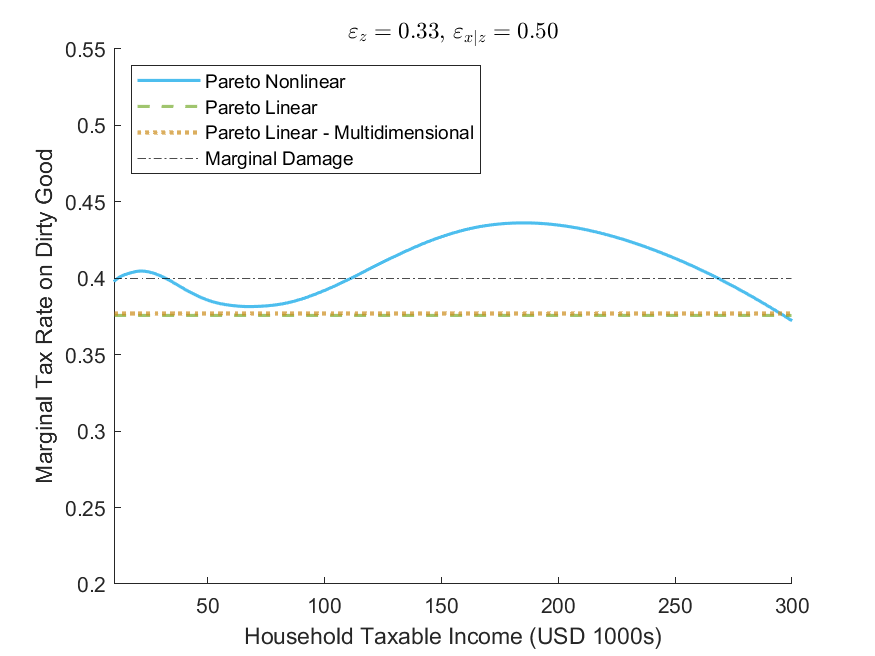}
  \caption{Electricity, Natural Gas, and Heating Fuels Marginal Tax Rates:  Unidimensional vs. Multidimensional Heterogeneity}
    \begin{minipage}{\textwidth}
    \scriptsize \textbf{Note:} 
    This figure plots the Pareto efficient carbon tax for \textbf{electricity, natural gas, and heating fuels consumption only} in the unidimensional case (Linear and Nonlinear) vs. the quasi Pareto efficient linear tax rate when accounting for multidimensional heterogeneity, when the elasticity of taxable income is $\varepsilon_{z}=0.33$, and the elasticity of consumption of the carbon-intensive good $\varepsilon_{x\mid z}=0.5$.  \end{minipage}
    \label{fig:multidim_quasi_pareto_tax_schedule_vs_unidim_elec}
\end{figure}

\subsection{Online Survey Demographics Breakdown \& Sample Definition}

\begin{table}[htbp]
\centering
\footnotesize
\begin{minipage}[t]{0.30\textwidth}
\vspace{0pt}
\centering

\begin{tabular}{lrr|r}
\toprule
Ethnicity & Freq. & \% & (C)\\
\midrule
White & 1097 & 75.8\% & 73.9\%\\
Black & 199 & 13.7\% & 14.3\%\\
Asian & 87 & 6.0\% & 7.5\%\\
Mixed & 43 & 3.0\% & 2.5\%\\
Other & 22 & 1.5\% & 1.7\%\\
\bottomrule
\end{tabular}

\end{minipage}
\hspace{0.05\textwidth}
\begin{minipage}[t]{0.30\textwidth}
\vspace{0pt}
\centering

\begin{tabular}{lrr|r}
\toprule
Age & Freq. & \% & (C)\\
\midrule
25-29 & 263 & 18.2\% & 16.9\%\\
30-34 & 276 & 19.1\% & 18.0\%\\
35-39 & 294 & 20.3\% & 17.3\%\\
40-44 & 196 & 13.5\% & 16.8\%\\
45-49 & 223 & 15.4\% & 15.2\%\\
50-54 & 196 & 13.5\% & 15.9\%\\
\bottomrule
\end{tabular}

\end{minipage}
\hspace{0.01\textwidth}
\begin{minipage}[t]{0.30\textwidth}
\vspace{0pt}
\centering

\begin{tabular}{lrr|r}
\toprule
Sex & Freq. & \% & (C)\\
\midrule
Female & 748 & 51.7\% & 49.7\%\\
Male & 700 & 48.3\% & 50.3\%\\
\bottomrule
\end{tabular}

\end{minipage}
\caption{Demographics Comparison: Online Survey vs 2023 Census Population Estimates}
\label{tab:demographics_comparison_survey_census2023}
\begin{minipage}{\textwidth}
        \scriptsize
        \textbf{Note}: This table shows the breakdown for our custom online survey sample by ethnicity, age, and sex. The rightmost (C) column in each subtable corresponds to the actual demographic distribution of the US population aged 25-54 drawn from 2023 Population Estimates (as of July 1) of the United States Census Bureau constructed with data from the 2020 Census and measures of population change. (See \texttt{https://www.census.gov/data/tables/time-series/demo/popest/2020s-national-detail.html}.) 
    \end{minipage}
\end{table}

\begin{table}[htbp]
\centering
\caption{Online Survey Response Quality: Sample Exclusions}
\label{tab:sample_exclusions}
\begin{tabular}{lcc}
\toprule
 & N & \% \\
\midrule
\textbf{Total \# of Respondents} &        2,219&      100.00\\
\hspace{0.5em}Suspected Bots&          18&        0.81\\
\hspace{0.5em}Speedsters  &         421&       18.97\\
\hspace{0.5em}Failed Attention Check&          44&        1.98\\
\hspace{0.5em}Inconsistent or Outlier Income Response&          55&        2.48\\
\hspace{0.5em}Zero Expenditure, Zero or Inconsistent MPC&         233&       10.50\\
\midrule
\textbf{Final Sample}&        1,448&       65.25\\
\bottomrule
\end{tabular}

\begin{minipage}{\textwidth}
        \scriptsize
        \textbf{Note}: This table reports the sequential exclusion of Prolific respondents from the combined Waves 1 and 2 of our online survey. Exclusions are applied in the order shown, as per our pre-registration, with each category excluding observations from the remaining sample after previous exclusions. Percentages are calculated relative to total observations.
    \end{minipage}
\end{table}

\renewcommand{\thefigure}{\thesection.\arabic{figure}}
\renewcommand{\thetable}{\thesection.\arabic{table}}

\allowdisplaybreaks

\clearpage
\section{Proofs of Theoretical Results}\label{app:proofs}

\begin{proof}[Proof of Proposition \ref{prop:uni-1}]

Consider the effects of an infinitesimal reform to the baseline
tax schedules $\mathcal{T}_{z}\left(\cdot\right)$ and $\mathcal{T}_{x}\left(\cdot\right)$, as described in Section \ref{subsec:optnonlin}.

\smallskip
\underline{Behavioral Responses}

Section \ref{sec:behav} introduces notation for the various behavioral responses, but we define them more formally here. First, the compensated net-of-tax rate elasticity of taxable income is:
\begin{equation}
    \varepsilon_{z}\left(w\right)\equiv-\frac{1-\mathcal{T}_{z}^{\prime}\left(z\left(w\right)\right)}{\tau_{z}^{\prime}\left(z\left(w\right)\right)z\left(w\right)}\left.\frac{\partial z\left(w\right)}{\partial\kappa}\right|_{\kappa=\tau_{z}\left(z\left(w\right)\right)+\tau_{x}\left(x\left(w\right)\right)=\tau_{x}^{\prime}\left(x\left(w\right)\right)=0}
\end{equation}
In words, this is the response of taxable income to a change to the marginal income tax rate without any change in the overall tax burden or the marginal tax rate on the externality-generating good.

Second, holding $z$ constant, the compensated net-of-tax rate elasticity of spending on the externality-generating good is:
\begin{equation}
    \varepsilon_{x|z}\left(w\right)\equiv-\frac{1+\mathcal{T}_{x}^{\prime}\left(x\left(w\right)\right)}{\tau_{x}^{\prime}\left(x\left(w\right)\right)x\left(w\right)}\left.\frac{\partial x\left(w;z\left(w\right)\right)}{\partial\kappa}\right|_{z\left(w\right),\kappa=\tau_{z}\left(z\left(w\right)\right)+\tau_{x}\left(x\left(w\right)\right)=\tau_{z}^{\prime}\left(z\left(w\right)\right)=0}.
\end{equation}
Our assumptions ensure that all of these responses are smooth.

\smallskip
\underline{A Reform To Test Pareto Efficiency}

To characterize the set of Pareto efficient tax systems, consider a distribution neutral tax reform as presented in Section \ref{sec:distneut}. The income tax is changed in a way that precisely offsets the impact of the change to the tax on the dirty good.
\vspace{-0.5em}\begin{equation}\vspace{-0.5em}
 \tau_{z}\left(z\right)=-\tau_{x}\left(\hat{x}\left(z\right)\right).   
\end{equation}
This implies marginal income tax rate changes of
\vspace{-0.5em}\begin{equation}\vspace{-0.5em}
\tau_{z}'\left(z\right)=-\hat{x}'\left(z\right)\tau_{x}'\left(\hat{x}\left(z\right)\right),
\end{equation}

\smallskip
\underline{Welfare Effects of a Tax Reform}

This distribution neutral reform has no mechanical effect on the utility level of any agent at any level of income. We argued in Section \ref{sec:welfeff} that this leaves us with three effects that are local to the level of income at which agents face a change in their marginal tax rate.
\begin{align*}
    \Delta\bm{\mathcal{W}}^{z}(z) &= \mathcal{T}_{z}'(z)\frac{z\varepsilon_{z}(z)}{1-\mathcal{T}_{z}'(z)}x_{\text{het}}^{\prime}(z)\\
    \Delta\bm{\mathcal{W}}^{x\mid z}(z) &= -\left(\mathcal{T}_{x}'(\hat{x}(z))-\frac{\mathcal{D}'(\bar{x})(z)}{\lambda}\right)\frac{\varepsilon_{x\mid z}(z)\,\hat{x}(z)}{1+\mathcal{T}_{x}'(\hat{x}(z))}\\
    \Delta\bm{\mathcal{W}}^{z\rightarrow x}(z) &= \left(\mathcal{T}_{x}'(\hat{x}(z))-\frac{\mathcal{D}'(\bar{x})}{\lambda}\right)x_{\text{inc}}^{\prime}(z)\frac{z\varepsilon_{z}(z)}{1-\mathcal{T}_{z}'(z)}x_{\text{het}}^{\prime}(z)
\end{align*}
where $\lambda$ is the multiplier on the planner's budget constraint, and
$\mathcal{D}^{\prime}\left(\bar{x}\right)/\lambda$ is the
marginal externality cost associated with consumption of $x$.

Integrating over all types in the economy and imposing the value of $\tau_{x}'(\hat{x}(z'))$ for our specific reform, the social welfare impact is:
\begin{equation}
\left.\frac{\partial\mathcal{L}}{\partial\kappa}\right|_{\kappa=0}=\int(\Delta\bm{\mathcal{W}}^{z}(z) + \Delta\bm{\mathcal{W}}^{x\mid z}(z) + \Delta\bm{\mathcal{W}}^{z\rightarrow x}(z))\tau_{x}\left(\hat{x}\left(z\right)\right)h_{z}\left(z\right)\text{d}z,\label{eq:DN_welfEffectApp}    
\end{equation}
where $h_{z}\left(z\right)$ is the distribution of taxable income.

\smallskip
\underline{A Necessary Condition for Optimality}

If the tax system in set optimally, it must be the case that any such infinitesimal tax reform has no first-order effect on welfare. That is to say, for any $\tau_{x}\left(\cdot\right)$
we must have $\left.\frac{\partial\mathcal{L}}{\partial\kappa}\right|_{\kappa=0}=0$.

By the fundamental lemma of the calculus of variations,\footnote{If for all continuous functions $\rho:\mathbb{R}\to\mathbb{R}$ we have $\int_{0}^{\infty}f(z)\rho(z)dz=0$, and if $f$ is continuous, then $f\left(z\right)=0$ for all $z$. Integrating over all types in the economy, the social welfare impact of a distribution-neutral reform is:
\vspace{-0.3em}\begin{equation}\vspace{-0.3em}
\left.\frac{\partial\mathcal{L}}{\partial\kappa}\right|_{\kappa=0}=\int(\Delta\bm{\mathcal{W}}^{z}(z) + \Delta\bm{\mathcal{W}}^{x\mid z}(z) + \Delta\bm{\mathcal{W}}^{z\rightarrow x}(z))\tau_{x}'(\hat{x}(z'))h_{z}\left(z\right)\text{d}z,\label{eq:DN_welfEffect}    
\end{equation}
where $h_{z}\left(z\right)$ is the density of the distribution of income at $z$. Letting $f\left(z\right)\equiv\int(\Delta\bm{\mathcal{W}}^{z}(z) + \Delta\bm{\mathcal{W}}^{x\mid z}(z) + \Delta\bm{\mathcal{W}}^{z\rightarrow x}(z))h_{z}\left(z\right)$
and $\rho\left(z\right)\equiv\tau_{x}'(\hat{x}(z))$. Equation \ref{eq:ParetoTaxSN} follows immediately.} we obtain equation \ref{eq:ParetoTaxSN}, which we restate here for convenience
\begin{equation*}
\frac{\mathcal{T}_{x}^{\prime}\left(\hat{x}\left(z\right)\right)-\frac{\mathcal{D}^\prime(\bar{x})}{\lambda}}{1+\mathcal{T}_{x}^{\prime}\left(\hat{x}\left(z\right)\right)}=\eta_{x,z}^{\text{Taste}}\left(z\right)\frac{\varepsilon_{z}\left(z\right)}{\varepsilon_{x|z}\left(z\right)}\left(\frac{\mathcal{T}_{z}^{\prime}\left(z\right)+\left[\mathcal{T}_{x}^{\prime}\left(\hat{x}\left(z\right)\right)-\frac{\mathcal{D}^\prime(\bar{x})}{\lambda}\right]x^\prime_{\text{inc}}(z)}{1-\mathcal{T}_{z}^{\prime}\left(z\right)}\right)
\end{equation*}
It is straightforward to then re-arrange this to obtain \ref{eq:ParetoTaxSN_r}, also restated:
\begin{equation*}
\mathcal{T}_{x}^{\prime}\left(\hat{x}\left(z\right)\right)-\frac{\mathcal{D}^{\prime}(\bar{x})}{\lambda}=\frac{\text{RR}(z)\mathcal{T}_{z}^{\prime}\left(z\right)}{1-\text{RR}(z)x_{\text{inc}}^{\prime}(z)},
\end{equation*}
where:
\begin{equation*}
\text{RR}(z)\equiv\eta_{x,z}^{\text{Taste}}\left(z\right)\frac{\varepsilon_{z}\left(z\right)}{\varepsilon_{x|z}\left(z\right)}\frac{1+\mathcal{T}_{x}^{\prime}\left(\hat{x}\left(z\right)\right)}{1-\mathcal{T}_{z}^{\prime}\left(z\right)}.
\end{equation*}

\smallskip
\underline{Pareto Inefficiency}

Because this reform holds each agent's total tax liability constant, it has no first-order impact on agents' private utility. Rather, if equation \ref{eq:changewelf} is not equal to zero, the planner can raise revenue by moving in one direction or the other ($\kappa>0$ or $\kappa<0$) with no impact on individual utility. The resulting surplus in government revenue can then be used to provide a lump sum transfer, leading to a Pareto improvement.
\end{proof}

\bigskip
\begin{proof}[Proof of Proposition \ref{prop:optlevels}]

Consider the effects of a more general infinitesimal reform to the baseline
tax schedules $\mathcal{T}_{z}\left(\cdot\right)$ and $\mathcal{T}_{x}\left(\cdot\right)$, as described in Section \ref{subsec:optnonlin}. We start introducing some additional notation, as well as characterizing the expanded set of behavioral responses to the reform, which our assumptions again ensure are smooth.

\smallskip
\underline{Behavioral Responses}

In addition to the behavioral responses defined in our proof of Proposition \ref{prop:uni-1}, we will need notation for several additional responses. First, the income effects on taxable income and consumption of $x$ are as follows:
\begin{align*}
\eta_{z}(w) & \equiv-\frac{\left.\frac{\partial z\left(w\right)}{\partial\kappa}\right|_{\kappa=\tau_{z}^{\prime}\left(z\left(w\right)\right)=\tau_{x}^{\prime}\left(x\left(w\right)\right)=0}}{\tau_{z}\left(z\left(w\right)\right)+\tau_{x}\left(x\left(w\right)\right)} & \eta_{x|z}(w) & \equiv-\frac{\left.\frac{\partial x\left(w;z\left(w\right)\right)}{\partial\kappa}\right|_{z\left(w\right),\kappa=\tau_{z}^{\prime}\left(z\left(w\right)\right)=\tau_{x}^{\prime}\left(x\left(w\right)\right)=0}}{\tau_{z}\left(z\left(w\right)\right)+\tau_{x}\left(x\left(w\right)\right)}.
\end{align*}

Using this notation, type $w$'s overall responses to a reform are
\begin{align}\label{eq:totalIncomeResp}
\left.\frac{\text{d}z\left(w\right)}{\text{d}\kappa}\right|_{\kappa=0}=&-\frac{\varepsilon_{z}\left(w\right)z\left(w\right)\tau_{z}^{\prime}\left(z\left(w\right)\right)}{1-\mathcal{T}_{z}^{\prime}\left(z\left(w\right)\right)}+\chi\left(z\left(w\right)\right)\tau_{x}^{\prime}\left(x\left(w;z\right)\right)\\\notag
&-\eta_{z}\left(w\right)\left[\tau_{z}\left(z\left(w\right)\right)+\tau_{x}\left(x\left(w;z\right)\right)\right]
\end{align}

\vspace{-3em}
\begin{align}\label{eq:totalCommodityResp}
\left.\frac{\text{d}x\left(w;z\left(w\right)\right)}{\text{d}\kappa}\right|_{\kappa=0}=&-\frac{\varepsilon_{x|z}\left(w\right)x\left(w\right)}{1+\mathcal{T}_{x}^{\prime}\left(x\left(w\right)\right)}\tau_{x}^{\prime}\left(x\left(w\right)\right)-\eta_{x|z}\left(w\right)\left[\tau_{z}\left(z\left(w\right)\right)+\tau_{x}\left(x\left(w\right)\right)\right]\\\notag
&+x^\prime_{\text{inc}}(z(w))\left.\frac{\text{d} z\left(w\right)}{\text{d}\kappa}\right|_{\kappa=0}
\end{align}
where $\varepsilon_{z}\left(w\right)$ and $\varepsilon_{x|z}$ are as defined above, and:
\[
\chi\left(w\right)\equiv\frac{1}{\tau_{x}^{\prime}\left(x\left(w\right)\right)}\left.\frac{\partial z\left(w\right)}{\partial\kappa}\right|_{\kappa=\tau_{z}\left(z\left(w\right)\right)+\tau_{x}\left(x\left(w\right)\right)=\tau_{z}^{\prime}\left(z\left(w\right)\right)=0}
\]
is the cross-base response of taxable income to the commodity tax
rate. 

Inserting equation \ref{eq:totalIncomeResp} into equation \ref{eq:totalCommodityResp},
we get:
\begin{align*}
\left.\frac{\text{d}x\left(w;z\left(w\right)\right)}{\text{d}\kappa}\right|_{\kappa=0}=&\left[-\frac{\varepsilon_{x|z}\left(w\right)x\left(w\right)}{1+\mathcal{T}_{x}^{\prime}\left(x\left(w\right)\right)}+x^\prime_{\text{inc}}(z(w))\chi\left(z\left(w\right)\right)\right]\tau_{x}^{\prime}\left(x\left(w\right)\right)\\
&-x^\prime_{\text{inc}}(z(w))\frac{\varepsilon_{z}\left(w\right)z\left(w\right)}{1-\mathcal{T}_{z}^{\prime}\left(z\left(w\right)\right)}\tau_{z}^{\prime}\left(z\left(w\right)\right)\\
&-\left[\eta_{x|z}\left(w\right)+x^\prime_{\text{inc}}(z(w))\eta_{z}\left(w\right)\right]\left[\tau_{z}\left(z\left(w\right)\right)+\tau_{x}\left(x\left(w\right)\right)\right].
\end{align*}
By Slutsky symmetry, we have the following expression
for the partial effect of taxable income on the choice of the externality-generating good:
\begin{equation*}
x^\prime_{\text{inc}}(z(w)) =\frac{x\left(w\right)}{z\left(w\right)}\left[\frac{-\frac{1-\mathcal{T}_{z}^{\prime}\left(z\left(w\right)\right)}{x\left(w\right)}\chi\left(w\right)}{\varepsilon_{z}\left(w\right)}\right]=-\frac{1-\mathcal{T}_{z}^{\prime}\left(z\left(w\right)\right)}{z\left(w\right)}\frac{\chi\left(w\right)}{\varepsilon_{z}\left(w\right)}.
\end{equation*}
Put another way, we can write the cross-base response in terms of
this partial effect:
\[
\chi\left(w\right)=-\frac{z\left(w\right)\varepsilon_{z}\left(w\right)}{1-\mathcal{T}_{z}^{\prime}\left(z\left(w\right)\right)}x^\prime_{\text{inc}}(z(w)).
\]
Thus, the income response (equation \ref{eq:totalIncomeResp}) can be rewritten as
\begin{align}\label{eq:totalIncomeResp-1}
\left.\frac{\text{d}z\left(w\right)}{\text{d}\kappa}\right|_{\kappa=0}=&-\frac{z\left(w\right)\varepsilon_{z}\left(w\right)}{1-\mathcal{T}_{z}^{\prime}\left(z\left(w\right)\right)}\tau_{z}^{\prime}\left(z\left(w\right)\right)-\frac{z\left(w\right)\varepsilon_{z}\left(w\right)}{1-\mathcal{T}_{z}^{\prime}\left(z\left(w\right)\right)}x^\prime_{\text{inc}}(z(w))\tau_{x}^{\prime}\left(x\left(w\right)\right)\\\notag
&-\eta_{z}\left(w\right)\left[\tau_{z}\left(z\left(w\right)\right)+\tau_{x}\left(x\left(w\right)\right)\right],
\end{align}
and the response of $x$ (equation \ref{eq:totalCommodityResp}) can written as:
\begin{align}\label{eq:totalCommodityResp2}
\left.\frac{\text{d}x\left(w;z\left(w\right)\right)}{\text{d}\kappa}\right|_{\kappa=0}=&-\left[\frac{\varepsilon_{x|z}\left(w\right)x\left(w;z\left(w\right)\right)}{1+\mathcal{T}_{x}^{\prime}\left(x\left(w\right)\right)}+\left(x^\prime_{\text{inc}}(z(w))\right)^{2}\frac{z(w)\varepsilon_{z}(w)}{1-\mathcal{T}_{z}^{\prime}\left(z\left(w\right)\right)}\right]\tau_{x}^{\prime}(x(w))\\\notag
&-x^\prime_{\text{inc}}(z(w))\frac{z\left(w\right)\varepsilon_{z}\left(w\right)}{1-\mathcal{T}_{z}^{\prime}\left(z\left(w\right)\right)}\tau_{z}^{\prime}\left(z(w)\right)\\\notag
&-\left[\eta_{x|z}\left(w\right)+x^\prime_{\text{inc}}(z(w))\eta_{z}\left(w\right)\right]\left[\tau_{z}\left(z(w)\right)+\tau_{x}(x(w))\right].
\end{align}

\smallskip
\underline{Welfare Effects of a Tax Reform}

For any given pair of continuously differentiable reform functions $\tau_{z}$ and $\tau_{x}$, an infinitesimal reform away from $\mathcal{T}_{z}$
and $\mathcal{T}_{x}$ has the following effect on social welfare:
{\small{}\allowdisplaybreaks[1] 
\begin{align}\label{eq:welfEffect}
\frac{1}{\lambda}&\left.\frac{\partial\mathcal{W}}{\partial\kappa}\right|_{\kappa=0}+\left.\frac{\partial\mathcal{R}}{\partial\kappa}\right|_{\kappa=0}  \\
&=\int\left(1-g\left(w\right)\right)\left[\tau_{z}\left(z\left(w\right)\right)+\tau_{x}\left(x\left(w\right)\right)\right]f\left(w\right)\text{d}w \tag{A} \\
 & -\int\left(\frac{\mathcal{T}_{z}^{\prime}\left(z\left(w\right)\right)+\left[\mathcal{T}_{x}^{\prime}\left(x\left(w\right)\right)-\frac{\mathcal{D}^{\prime}\left(\bar{x}\right)}{\lambda}\right]x^\prime_{\text{inc}}(z(w))}{1-\mathcal{T}_{z}^{\prime}\left(z\left(w\right)\right)}\right)\varepsilon_{z}\left(w\right)z\left(w\right)\tau_{z}^{\prime}\left(z\left(w\right)\right)f\left(w\right)\text{d}w\tag{B} \\
 & -\int\left\{ \left(\frac{\mathcal{T}_{z}^{\prime}\left(z\left(w\right)\right)+\left[\mathcal{T}_{x}^{\prime}\left(x\left(w\right)\right)-\frac{\mathcal{D}^{\prime}\left(\bar{x}\right)}{\lambda}\right]x^\prime_{\text{inc}}(z(w))}{1-\mathcal{T}_{z}^{\prime}\left(z\left(w\right)\right)}\right)x^\prime_{\text{inc}}(z(w))\varepsilon_{z}\left(w\right)z\left(w\right)\right.\dots\notag \\
 & \qquad\qquad\qquad\qquad\quad\ \dots+\left.\frac{\mathcal{T}_{x}^{\prime}\left(x\left(w\right)\right)-\frac{\mathcal{D}^{\prime}\left(\bar{x}\right)}{\lambda}}{1+\mathcal{T}_{x}^{\prime}\left(x\left(w\right)\right)}\varepsilon_{x|z}\left(w\right)x\left(w\right)\right\} \tau_{x}^{\prime}\left(x\left(w\right)\right)f\left(w\right)\text{d}w\tag{C}
\end{align}
}where $\lambda$ is the multiplier on the planner's budget constraint, and
$\mathcal{D}^{\prime}\left(\bar{x}\right)/\lambda$ is the
marginal externality cost associated with consumption of $x$.

The first line (A) captures the social welfare benefit of any redistribution from the reform. This includes differences in income effects and externality impacts, by the definition of $g(w)$. The second line (B) captures the fiscal externality via the income tax. The third line (C) captures the fiscal externality via the tax on $x$.

\smallskip
\underline{Optimality Condition for $T_z$}

If the tax system in set optimally, it must be the case that any such infinitesimal tax reform has no first-order effect on welfare. That is to say, for any $\left(\tau_{z}\left(\cdot\right),\tau_{x}\left(\cdot\right)\right)$
we must have $\frac{1}{\lambda}\left.\partial\mathcal{W}/\partial\kappa\right|_{\kappa=0}+\left.\partial\mathcal{R}/\partial\kappa\right|_{\kappa=0}=0$. To characterize the optimal income tax, we can therefore consider a reform only to the income tax by supposing that $\tau_{x}\left(x\right)=0$ for all
$x$, and then set equation \ref{eq:welfEffect} equal to zero.

With a change in variables, we can write the resulting optimality condition in terms of the income distribution. For all $\tau_z(\cdot)$:
\begin{align*}
&\int\left(1-g\left(\hat{w}\left(z\right)\right)\right)\tau_{z}\left(z\right)h_{z}\left(z\right)\text{d}z\\
&\quad-\int\left[\frac{\mathcal{T}_{z}^{\prime}\left(z\right)+\left[\mathcal{T}_{x}^{\prime}\left(x\left(\hat{w}\left(z\right)\right)\right)-\frac{\mathcal{D}^{\prime}\left(\bar{x}\right)}{\lambda}\right]x^\prime_{\text{inc}}(z(w))}{1-\mathcal{T}_{z}^{\prime}\left(z\right)}\right]\varepsilon_{z}\left(\hat{w}\left(z\right)\right)z\tau_{z}^{\prime}\left(z\right)h_{z}\left(z\right)\text{d}z=0.
\end{align*}

By the fundamental lemma of the calculus of variations as we did for Proposition \ref{prop:uni-1}, we can then also write the following condition
characterizing optimal marginal rates on income in terms of the distribution
of income along with various other sufficient statistics:\footnote{Here we use a slightly different variant of fundamental lemma of the calculus of variations, which states that, for all continuously differentiable functions $h : \mathbb{R}_+ \to \mathbb{R}$ we have
\begin{equation}
\int_0^{\infty} f(x) h(x) dx + \int_0^{\infty} g(x) h'(x) dx = 0,
\end{equation}
and if $f$ and $g$ are continuous, then $g'(x) = f(x) \quad \forall x \geq 0$, $g$ is continuously differentiable, and $\lim_{x \to \infty} g(x) = \lim_{x \to 0} g(x) = 0$. Our application of it here follows \citet{jacquet2021optimal}.}
\begin{equation}
\frac{\mathcal{T}_{z}^{\prime}\left(z\right)+\left[\mathcal{T}_{x}^{\prime}\left(x\left(\hat{w}\left(z\right)\right)\right)-\frac{\mathcal{D}^{\prime}\left(\bar{x}\right)}{\lambda}\right]x^\prime_{\text{inc}}(z(w))}{1-\mathcal{T}_{z}^{\prime}\left(z\right)}=\frac{\int_{z}^{\infty}\left(1-g\left(\hat{w}\left(s\right)\right)\right)h_{z}\left(s\right)\text{d}s}{\varepsilon_{z}\left(\hat{w}\left(z\right)\right)zh_{z}\left(z\right)}.\label{eq:eq:optCondForT_z}
\end{equation}
Defining the average welfare weight for agents with income greater than $z$ as:
\begin{equation}
    1-\bar{g}_{+}\left(z\right) = \frac{\int_{z}^{\infty}\left(1-g\left(\hat{w}\left(s\right)\right)\right)h_{z}\left(s\right)\text{d}s}{1-H_z(z)},
\end{equation}
we obtain the optimality condition stated in Proposition \ref{prop:optlevels}.

\smallskip
\underline{Optimality Condition for $T_x$}

Next, to characterize the optimal tax on the externality-generating good, we consider a reform only to the tax on $x$ by supposing that $\tau_{z}\left(z\right)=0$ for all
$z$, and set equation \ref{eq:welfEffect} to zero. Letting $H_x(x)$ be the distribution of $x$, with density $h_x(x)$,  we can write the resulting optimality condition in terms of the distribution of consumption of $x$. For all $\tau_x(\cdot)$:
\begin{align*}
&\int\left(1-g(\tilde{w}(x))\right)\tau_{x}\left(x\right)h_{x}\left(x\right)\text{d}x-\int\left[\frac{\mathcal{T}_{x}^{\prime}\left(x\right)-\frac{\mathcal{D}^{\prime}\left(\bar{x}\right)}{\lambda}}{1+\mathcal{T}_{x}^{\prime}\left(s\right)}\varepsilon_{x|z}\left(\tilde{w}\left(x\right)\right)x\right.\dots\\
&+\hspace{-0.3em}\left.\left(\frac{\mathcal{T}_{z}^{\prime}(\tilde{z}(x))+\left[\mathcal{T}_{x}^{\prime}(x)-\frac{\mathcal{D}^{\prime}(\bar{x})}{\lambda}\right]x^\prime_{\text{inc}}(z(w))}{1-\mathcal{T}_{z}^{\prime}\left(z\right)}\right)x^\prime_{\text{inc}}(z(w))\varepsilon_{z}(\tilde{w}(x))\tilde{z}(x)\right]\tau_{x}^{\prime}(x)h_{x}(x)\text{d}x=0,
\end{align*}
where $\tilde{w}\left(x\right)$ and $\tilde{z}\left(x\right)$ are the values of $w$ and $z$ which correspond to each value of $x$.

Similar to the case of the income tax, the fundamental lemma of the calculus of variations then lets us write:
\begin{align}\label{eq:optCondForT_x}
\frac{\mathcal{T}_{x}^{\prime}\left(x\right)-\frac{\mathcal{D}^{\prime}\left(\bar{x}\right)}{\lambda}}{1+\mathcal{T}_{x}^{\prime}\left(x\right)}&\varepsilon_{x|z}\left(\tilde{w}\left(x\right)\right)xh_{x}\left(x\right)\\\notag
&+x^\prime_{\text{inc}}(z(w))\left(\frac{\mathcal{T}_{z}^{\prime}(\tilde{z}(x))+\left[\mathcal{T}_{x}^{\prime}(x)-\frac{\mathcal{D}^{\prime}(\bar{x})}{\lambda}\right]x^\prime_{\text{inc}}(z(w))}{1-\mathcal{T}_{z}^{\prime}\left(z\right)}\right)\varepsilon_{z}({w}(x))\tilde{z}(x)h_{x}(x)\\
&=\int_{x}^{\infty}\left(1-g\left(\tilde{w}\left(y\right)\right)\right)h_{x}\left(y\right)d\text{y}.\notag
\end{align}

We can then write equation \ref{eq:optCondForT_x} in terms of the distribution
of taxable income. To do so, we use the fact that:
\begin{align*}
h_{z}\left(z\right) & =\frac{\text{d}x\left(\hat{w}\left(z\right);z\right)}{\text{d}z}h_{x}\left(x\left(\hat{w}\left(z\right);z\right)\right)\\
 & =\left[x^\prime_{\text{inc}}(z(w))+x^\prime_{\text{het}}(z(w))\right]h_{x}\left(x\left(\hat{w}\left(z\right);z\right)\right).
\end{align*}
Using this result, multiplying both sides of equation \ref{eq:optCondForT_x} by $x^\prime_{\text{inc}}(z(w))+x^\prime_{\text{het}}(z(w))$,
and inserting the optimality condition for the income tax (equation \ref{eq:eq:optCondForT_z}) we can rewrite equation \ref{eq:optCondForT_x} as:
\begin{multline*}
\frac{\mathcal{T}_{x}^{\prime}\left(\hat{x}\left(z\right)\right)-\frac{\mathcal{D}^{\prime}\left(\bar{x}\right)}{\lambda}}{1+\mathcal{T}_{x}^{\prime}\left(\hat{x}\left(z\right)\right)}\varepsilon_{x|z}\left(\hat{w}\left(z\right)\right)\hat{x}\left(z\right)h_{z}\left(z\right)
+x^\prime_{\text{inc}}(z(w))\int_{z}^{\infty}\left(1-g\left(\hat{w}\left(s\right)\right)\right)h_{z}\left(s\right)\text{d}s\\
=\left(x^\prime_{\text{inc}}(z(w))+x^\prime_{\text{het}}(z(w))\right)\int_{z}^{\infty}\left(1-g\left(\hat{w}\left(s\right)\right)\right)h_{z}\left(s\right)\text{d}s,
\end{multline*}
which further simplifies to
\begin{equation}
\frac{\mathcal{T}_{x}^{\prime}\left(\hat{x}\left(z\right)\right)-\frac{\mathcal{D}^{\prime}\left(\bar{x}\right)}{\lambda}}{1+\mathcal{T}_{x}^{\prime}\left(\hat{x}\left(z\right)\right)}=x^\prime_{\text{het}}(z(w))\frac{\int_{z}^{\infty}\left(1-g\left(\hat{w}\left(s\right)\right)\right)h_{z}\left(s\right)\text{d}s}{\varepsilon_{x|z}\left(\hat{w}\left(z\right)\right)\hat{x}\left(z\right)h_{z}\left(z\right)}.\label{eq:FLT_optTaxCond}
\end{equation}
Using the definitions of $\eta_{x,z}^{\text{Taste}}\left(z\right)$ and $\bar{g}_{+}\left(z\right)$, we obtain the optimality condition stated in the lemma.\end{proof}

\medskip
\begin{proof}[Proof of Proposition \ref{prop:lin}]
The proof here proceeds in a very similar manner to fully non-linear case. Each agent now maximizes her utility subject to her slightly modified budget constraint, producing indirect utility $v(w)$. 
\begin{align}
    v(w) &= \max_{z,x,c} u(c,x,z \mid w)\\
         &\text{s.t. } \quad c+x+t_x x = z - {T}_z(z) \notag
\end{align}
Here, $t_x$ is a linear tax on $x$.

A welfarist social planner chooses $t_x$ and a twice-differentiable $\mathcal{T}_z$ to maximize social welfare while raising enough revenue to cover an exogenous revenue requirement, $\mathcal{R}$.
\begin{align}
    \max \bm{\mathcal{W}} &= \int \gamma(w) v(w)dF - \mathcal{D}(\overline{x})\\
    &\text{s.t. } \quad \mathcal{R} = \int \left({T}_z(z(w))+t_x x\right)dF(w)\\
    &\text{and } \quad \overline{x} = \int x(w) dF \notag
\end{align}
Mathematically, everything remains identical to the proof of Proposition \ref{prop:uni-1} up until the expression for the general impact on welfare of a tax reform (equation \ref{eq:welfEffect}).

Imposing that the initial tax on $x$ is linear yields equation \ref{eq:welfEffectLin}.
{\allowdisplaybreaks[1] 
\begin{align}\label{eq:welfEffectLin}
\frac{1}{\lambda}&\left.\frac{\partial\mathcal{W}}{\partial\kappa}\right|_{\kappa=0}+\left.\frac{\partial\mathcal{R}}{\partial\kappa}\right|_{\kappa=0}  \\
&=\int\left(1-g\left(w\right)\right)\left[\tau_{z}\left(z\left(w\right)\right)+\tau_{x}\left(x\left(w\right)\right)\right]f\left(w\right)\text{d}w \tag{A} \\
 & -\int\left(\frac{\mathcal{T}_{z}^{\prime}\left(z\left(w\right)\right)+\left[t_x-\frac{\mathcal{D}^{\prime}\left(\bar{x}\right)}{\lambda}\right]x^\prime_{\text{inc}}(z(w))}{1-\mathcal{T}_{z}^{\prime}\left(z\left(w\right)\right)}\right)\varepsilon_{z}\left(w\right)z\left(w\right)\tau_{z}^{\prime}\left(z\left(w\right)\right)f\left(w\right)\text{d}w\tag{B} \\
 & -\int\left\{ \left(\frac{\mathcal{T}_{z}^{\prime}\left(z\left(w\right)\right)+\left[t_x-\frac{\mathcal{D}^{\prime}\left(\bar{x}\right)}{\lambda}\right]x^\prime_{\text{inc}}(z(w))}{1-\mathcal{T}_{z}^{\prime}\left(z\left(w\right)\right)}\right)x^\prime_{\text{inc}}(z(w))\varepsilon_{z}\left(w\right)z\left(w\right)\right.\dots\notag \\
 & \qquad\qquad\qquad\qquad\quad\ \dots+\left.\frac{t_x-\frac{\mathcal{D}^{\prime}\left(\bar{x}\right)}{\lambda}}{1+t_x}\varepsilon_{x|z}\left(w\right)x\left(w\right)\right\} \tau_{x}^{\prime}\left(x\left(w\right)\right)f\left(w\right)\text{d}w\tag{C}
\end{align}
From here, we proceed similarly to the proof of Proposition \ref{prop:uni-1}.}

However, the reform we consider must now be linear as well. Specifically, consider a joint reform where: $\tau_{x}\left(x\left(\hat{w}\left(z\right);z\right)\right)=\rho \times x(z)$ for some constant $\rho$; and $\tau_{z}\left(z\right)=-\rho \times x(z)$. For such a reform, we have $\tau_{x}^{\prime}\left(x\left(\hat{w}\left(z\right);z\right)\right)=\rho$
and:
\begin{equation}
\tau_{z}^{\prime}\left(z\right)=-\rho \left[x^\prime_{\text{inc}}(z(w))+x^\prime_{\text{het}}(z(w))\right].\label{eq:distNeutReformLin}
\end{equation}

Combining that with equation \ref{eq:welfEffectLin}, the impact on welfare is proportional to:
{\allowdisplaybreaks[1] 
\begin{align}
 &\int\left(\frac{\mathcal{T}_{z}^{\prime}\left(z\left(w\right)\right)+\left[t_x-\frac{\mathcal{D}^{\prime}\left(\bar{x}\right)}{\lambda}\right]x^\prime_{\text{inc}}(z(w))}{1-\mathcal{T}_{z}^{\prime}\left(z\left(w\right)\right)}\right)\varepsilon_{z}\left(w\right)z\left(w\right) [x^\prime_{\text{inc}}(z(w))+x^\prime_{\text{het}}(z(w))]f\left(w\right)\text{d}w\notag \\
 & -\int\left\{ \left(\frac{\mathcal{T}_{z}^{\prime}\left(z\left(w\right)\right)+\left[t_x-\frac{\mathcal{D}^{\prime}\left(\bar{x}\right)}{\lambda}\right]x^\prime_{\text{inc}}(z(w))}{1-\mathcal{T}_{z}^{\prime}\left(z\left(w\right)\right)}\right)x^\prime_{\text{inc}}(z(w))\varepsilon_{z}\left(w\right)z\left(w\right)\right.\dots\notag \\
 & \qquad\qquad\qquad\qquad\quad\ \dots+\left.\frac{t_x-\frac{\mathcal{D}^{\prime}\left(\bar{x}\right)}{\lambda}}{1+t_x}\varepsilon_{x|z}\left(w\right)x\left(w\right)\right\}  f\left(w\right)\text{d}w.
\end{align}
In this case, this simplifies to:
\begin{align}
 &\int\left(\frac{\mathcal{T}_{z}^{\prime}\left(z\left(w\right)\right)+\left[t_x-\frac{\mathcal{D}^{\prime}\left(\bar{x}\right)}{\lambda}\right]x^\prime_{\text{inc}}(z(w))}{1-\mathcal{T}_{z}^{\prime}\left(z\left(w\right)\right)}\right)\varepsilon_{z}\left(w\right)z\left(w\right) x^\prime_{\text{het}}(z(w))\cdots\notag \\
 & \quad\quad\quad\quad\quad\quad\quad\quad\quad \dots -\left\{  \frac{t_x-\frac{\mathcal{D}^{\prime}\left(\bar{x}\right)}{\lambda}}{1+t_x}\varepsilon_{x|z}\left(w\right)x\left(w\right)\right\}  f\left(w\right)\text{d}w.
\end{align}
To obtain the equation in the proposition, simply set this to zero to yield an optimality condition. Then replace the integrals with expectations over the income distribution, and impose the definition of $\text{RR}(z)$.}

By the same logic as before, there is no mechanical redistribution from this reform. It only impacts social welfare by changing government revenue or the externality. Thus, if this reform has a positive impact on social welfare, it is a Pareto improvement.
\end{proof}

\begin{proof}[Proof of Proposition \ref{prop:multidim}]

\smallskip
\underline{Setup and Behavioral Responses}

Consider agents characterized by type $(w,\theta)$ where $w \in W \subseteq \mathbb{R}_{++}$ captures productivity and $\theta \in \Theta$ captures other preference heterogeneity. Each agent solves:
\begin{align*}
\max_{z,x,c} \quad & u(c,x,z;w,\theta) \\
\text{s.t.} \quad & c + x + \mathcal{T}_x(x) = z - \mathcal{T}_z(z)
\end{align*}
Define indirect utility $v(w,\theta)$ and choices $z(w,\theta)$, $x(w,\theta)$. The first-order conditions yield:
\begin{align*}
1 + \mathcal{T}_x'(x) &= \text{MRS}_x(c,x,z;w,\theta), \\
1 - \mathcal{T}_z'(z) &= \text{MRS}_z(c,x,z;w,\theta).
\end{align*}
For a reform parameterized by $\kappa$ with $T_z(z;\kappa) = \mathcal{T}_z(z) + \kappa\tau_z(z)$ and $T_x(x;\kappa) = \mathcal{T}_x(x) + \kappa\tau_x(x)$, implicit differentiation of the first-order conditions yields the total behavioral responses. Decomposing into substitution and income effects:
\begin{align}
\left.\frac{dz(w,\theta)}{d\kappa}\right|_{\kappa=0} &= -\frac{\varepsilon_z(w,\theta)z(w,\theta)\tau_z'(z(w,\theta))}{1-\mathcal{T}_z'(z(w,\theta))} + \chi(w,\theta)\tau_x'(x(w,\theta)) \notag \\
& \quad - \eta_z(w,\theta)[\tau_z(z(w,\theta)) + \tau_x(x(w,\theta))], \label{eq:proof_z_response}
\end{align}
where $\varepsilon_z(w,\theta)$ is the compensated elasticity of taxable income, $\chi(w,\theta)$ is the cross-base response, and $\eta_z(w,\theta)$ is the income effect.

\smallskip
\underline{Slutsky Symmetry}

By the symmetry of the Slutsky matrix, the cross-price effect of the commodity tax on labor supply equals the cross-price effect of the wage on commodity demand. This yields:
\begin{equation}
\chi(w,\theta) = -\frac{z(w,\theta)\varepsilon_z(w,\theta)}{1-\mathcal{T}_z'(z(w,\theta))} \frac{\partial x(w,\theta;z(w,\theta))}{\partial z}. \label{eq:proof_slutsky}
\end{equation}
Substituting \eqref{eq:proof_slutsky} into \eqref{eq:proof_z_response} and similarly deriving $dx/d\kappa$:
\begin{align*}
\left.\frac{dx(w,\theta)}{d\kappa}\right|_{\kappa=0} &= -\frac{\varepsilon_{x|z}(w,\theta)x(w,\theta)}{1+\mathcal{T}_x'(x(w,\theta))}\tau_x'(x(w,\theta)) \\
& \quad - \left[\eta_{x|z}(w,\theta) + \frac{\partial x}{\partial z}\eta_z(w,\theta)\right][\tau_z(z) + \tau_x(x)] + \frac{\partial x}{\partial z}\left.\frac{dz}{d\kappa}\right|_{\kappa=0}.
\end{align*}

\smallskip
\underline{Welfare Effects of a Reform}

Social welfare is $\mathcal{W} = \int_\Theta \int_W \gamma(w,\theta)v(w,\theta)f_w(w|\theta)dw\,\mu(d\theta)$. By the envelope theorem:
\[
\left.\frac{dv(w,\theta)}{d\kappa}\right|_{\kappa=0} = -\frac{\partial v}{\partial c}[\tau_z(z(w,\theta)) + \tau_x(x(w,\theta))].
\]
Total revenue is $\mathcal{R} = \int_\Theta \int_W [T_z(z;\kappa) + T_x(x;\kappa)]f_w(w|\theta)dw\,\mu(d\theta)$. The Lagrangian $\mathcal{L} = \mathcal{W}/\lambda + \mathcal{R}$ responds to the reform as:
\begin{align*}
\left.\frac{d\mathcal{L}}{d\kappa}\right|_{\kappa=0} &= \underbrace{\int_\Theta \int_W (1-g(w,\theta))[\tau_z(z) + \tau_x(x)]f_w dw\,\mu(d\theta)}_{\text{Mechanical effect net of welfare loss}} \\
& \quad + \underbrace{\int_\Theta \int_W \mathcal{T}_z'(z)\left.\frac{dz}{d\kappa}\right|_{\kappa=0} f_w dw\,\mu(d\theta)}_{\text{Behavioral effect on income tax revenue}} \\
& \quad + \underbrace{\int_\Theta \int_W \mathcal{T}_x'(x)\left.\frac{dx}{d\kappa}\right|_{\kappa=0} f_w dw\,\mu(d\theta)}_{\text{Behavioral effect on commodity tax revenue}},
\end{align*}
where the social marginal utility of income is:
\[
g(w,\theta) \equiv \frac{\gamma(w,\theta)}{\lambda}\frac{\partial v}{\partial c} + \mathcal{T}_z'(z)\eta_z(w,\theta) + \mathcal{T}_x'(x)\left[\eta_{x|z}(w,\theta) + \frac{\partial x}{\partial z}\eta_z(w,\theta)\right].
\]

\smallskip
\underline{Change of Variables to $(z,x)$}

Assume that, conditional on $\theta$, $z(w,\theta)$ is continuous and monotonic in $w$. This ensures a continuous marginal distribution of $z$ with density $h_z(z)$. However, the conditional distribution of $x$ given $z$ may have mass points, so we work with the CDF $H_x(x|z) = P(x(w,\theta) \leq x \mid z(w,\theta) = z)$ and corresponding measure $\nu_x(A|z) = \int_{x \in A} dH_x(x|z)$.

Define $\bar{\ell}(z,x) \equiv \mathbb{E}[\ell(w,\theta) \mid z(w,\theta)=z, x(w,\theta)=x]$ for $\ell \in \{g, \varepsilon_z, \varepsilon_{x|z}\}$. The welfare effect becomes:
\begin{align*}
\left.\frac{d\mathcal{L}}{d\kappa}\right|_{\kappa=0} &= \int_{\mathbb{R}_{++}} \int_{\mathbb{R}_{++}} (1-\bar{g}(z,x))[\tau_z(z) + \tau_x(x)]\,dH_x(x|z)\,h_z(z)\,dz \\
& \quad - \int_{\mathbb{R}_{++}} \int_{\mathbb{R}_{++}} \frac{\mathcal{T}_z'(z)\bar{\varepsilon}_z(z,x) + \mathcal{T}_x'(x)\mathbb{E}[x'_{\text{inc}}\varepsilon_z|z,x]}{1-\mathcal{T}_z'(z)} z\tau_z'(z)\,dH_x(x|z)\,h_z(z)\,dz \\
& \quad - \int_{\mathbb{R}_{++}} \int_{\mathbb{R}_{++}} \left\{ \frac{\mathcal{T}_z'(z)\mathbb{E}[x'_{\text{inc}}\varepsilon_z|z,x] + \mathcal{T}_x'(x)\mathbb{E}[(x'_{\text{inc}})^2\varepsilon_z|z,x]}{1-\mathcal{T}_z'(z)} z \right. \\
& \qquad\qquad\qquad + \left. \frac{\mathcal{T}_x'(x)\bar{\varepsilon}_{x|z}(z,x)x}{1+\mathcal{T}_x'(x)} \right\} \tau_x'(x)\,dH_x(x|z)\,h_z(z)\,dz.
\end{align*}

\smallskip
\underline{Vertically Neutral Reform}

Consider a linear tax reform $\tau_x(x) = x$ with compensating income tax changes $\tau_z(z) = -\mathbb{E}[\tau_x(X)|Z=z] = -\bar{x}(z)$, which holds average tax payments constant at each income level. This yields $\tau_z'(z) = -\bar{x}'(z)$.

\smallskip
\underline{Preference Heterogeneity Sufficient Statistic}

Define the preference heterogeneity sufficient statistic:
\[
x_{het}^{\prime}(z;\theta) \equiv \bar{x}'(z) - x'_{\text{inc}}(z;\theta).
\]
This is the residual of the cross-sectional relationship between $x$ and $z$ after removing the causal income effect for type $\theta$ agents.

\smallskip
\underline{Optimal Tax Condition}

For a linear tax $\mathcal{T}_x(x) = t_x x$, substituting the vertically neutral reform into the welfare effect and setting $\left.\frac{d\mathcal{L}}{d\kappa}\right|_{\kappa=0} = 0$ yields:
\begin{align*}0 & =-\int_{\mathbb{R}_{++}}\mathbb{C}[g(z;\theta),x(z;\theta)|z]h_{z}(z)\,dz\\  & \quad+\int_{\mathbb{R}_{++}}\frac{\mathcal{T}_{z}^{\prime}(z)}{1-\mathcal{T}_{z}^{\prime}(z)}\mathbb{E}[x_{het}^{\prime}(z;\theta)\varepsilon_{z}(z;\theta)|z]zh_{z}(z)\,dz\\  & \quad+t_{x}\int_{\mathbb{R}_{++}}\frac{1}{1-\mathcal{T}_{z}^{\prime}(z)}\mathbb{E}[x_{het}^{\prime}(z;\theta)x_{inc}^{\prime}(z;\theta)\varepsilon_{z}(z;\theta)|z]zh_{z}(z)\,dz\\  & \quad-\frac{t_{x}}{1+t_{x}}\mathbb{E}[x(z;\theta)\varepsilon_{x|z}(z;\theta)].\end{align*}

Under the assumption that $\mathbb{E}[\mathbb{C}(g(w,\theta), x(w,\theta)|z)] = 0$, the optimal tax satisfies equation \eqref{eq:opt_multidim_general}:
\[
t_x - \frac{\mathcal{D}'(\bar{x})}{\lambda} = \frac{\mathbb{E}_{z,\theta}[\text{RR}(z;\theta)\mathcal{T}_z'(z)]}{1-\mathbb{E}_{z,\theta}[\text{RR}(z;\theta)x'_{\text{inc}}(z;\theta)]},
\]
where $\text{RR}(z;\theta) = \eta_{x,z}^{\text{Taste}}(z;\theta) \left(\frac{\varepsilon_z(z;\theta)\bar{x}(z)}{\mathbb{E}[\varepsilon_{x|z}(z;\theta)x(z;\theta)]}\right)  \left(\frac{1+t_x}{1-\mathcal{T}_z'(z)}\right)$.

\end{proof}

\begin{proof}[Proof of Corollary \ref{prop:multidim_tractable}]
Suppose $\varepsilon_z(z;\theta) = \bar{\varepsilon}_z(z)$ for all $\theta$ at each $z$. Then the covariance terms in the numerator and denominator of the general optimal tax condition simplify.

For the numerator, $\mathbb{C}[\varepsilon_z(z;\theta), x'_{inc}(z;\theta)|z] = 0$ by assumption, leaving only the average terms.

For the denominator, using $x'_{het} = \bar{x}' - x'_{inc}$:
\begin{align*}
\mathbb{E}[x'_{\text{het}} x'_{\text{inc}}  \varepsilon_z | z] &= \bar{\varepsilon}_z(z) \mathbb{E}[(\bar{x}'(z) - x'_{\text{inc}}(z;\theta)) x'_{\text{inc}}(z;\theta) | z] \\
&= \bar{\varepsilon}_z(z) \left[\bar{x}'(z) \bar{x}'_{\text{inc}}(z) - \mathbb{E}[(x'_{\text{inc}})^2|z]\right] \\
&= \bar{\varepsilon}_z(z) \left[\bar{x}'(z) \bar{x}'_{\text{inc}}(z) - (\bar{x}'_{\text{inc}}(z))^2 - \mathbb{V}[x'_{\text{inc}}|z]\right] \\
&= \bar{\varepsilon}_z(z) \bar{x}'_{\text{het}}(z) \bar{x}'_{\text{inc}}(z) - \bar{\varepsilon}_z(z) \mathbb{V}[x'_{\text{inc}}|z].
\end{align*}

Substituting into the general formula and collecting terms yields equation \eqref{eq:opt_multidim_tractable}:
\[
t_x - \frac{\mathcal{D}'(\bar{x})}{\lambda} = \frac{\mathbb{E}_z[\overline{\text{RR}}(z)\mathcal{T}_z'(z)]}{1 - \mathbb{E}_z[\overline{\text{RR}}(z)\bar{x}'_{\text{inc}}(z)] + \mathbb{E}_z[\widetilde{\text{RR}}(z)\mathbb{V}(x'_{\text{inc}}(z;\theta)|z)]},
\]
where:
\[
\overline{\text{RR}}(z) = \bar{\eta}_{x,z}^{\text{Taste}}(z) \frac{\bar{\varepsilon}_z(z)\bar{x}(z)}{\mathbb{E}[\varepsilon_{x|z}(z;\theta)x(z;\theta)]} \frac{1+t_x}{1-\mathcal{T}_z'(z)},
\]
\[
\widetilde{\text{RR}}(z) = \frac{\bar{\varepsilon}_z(z) z}{\mathbb{E}[\varepsilon_{x|z}(z;\theta)x(z;\theta)]} \frac{1+t_x}{1-\mathcal{T}_z'(z)}.
\]

The variance term $\mathbb{V}[x'_{\text{inc}}|z]$ appears in the denominator with a positive sign, which attenuates any deviation from Pigouvian taxation. Only when $\mathbb{V}[x'_{\text{inc}}|z] = 0$ for all $z$ does the formula reduce to the unidimensional case.
\end{proof}

\end{document}